\newif\iflong
\newcommand{\ie}{\emph{i.e.}\xspace}
\newcommand{\eg}{\emph{e.g.}\xspace}
\newcommand{\emphbf}[1]{\emph{\textbf{#1}\xspace}}
\newcommand{\mypara}[1]{\smallskip\noindent\emphbf{#1.}\xspace}
\newcommand{\Egraphs}{E-graphs\xspace}
\newcommand{\egraph}{e-graph\xspace}
\newcommand{\egraphs}{e-graphs\xspace}
\newcommand{\eclass}{e-class\xspace}
\newcommand{\eclasses}{e-classes\xspace}
\newif\ifdraft
\newcommand\authorrnote[3]{\textcolor{#1}{({#2}: {#3})}\xspace}
\newcommand{\TODO}[1]{{\color{orange!80!black}[\textsl{#1}]}}
\newcommand\authorrnote[2]{}
\newcommand{\TODO}[1]{}
\newcommand{\tname}[1]{\textsc{#1}\xspace}
\newcommand{\babble}{\tname{babble}}
\newcommand{\tool}{\tname{babble}}
\newcommand{\dc}{\tname{DreamCoder}}
\newcommand{\stitch}{\tname{Stitch}}
\footnotesize\textcolor{gray},
\newcommand{\T}[1]{\mbox{\lstinline^#1^}}
\definecolor[named]{darkblue}{cmyk}{1,0.58,0,0.21}
\definecolor[named]{darkred}{cmyk}{0,1,0.87,0.21}
\newcommand{\semeq}{\equiv}
\newcommand{\many}[1]{\overline{#1}}
\newcommand{\au}{\mathsf{AU}}
\newcommand{\gau}{\mathsf{AU}}
\newcommand{\fv}{\mathsf{vars}}
\newcommand{\sz}{\mathsf{size}}
\newcommand{\sk}{\mathsf{skeleton}}
\newcommand{\arity}{\mathsf{arity}}
\newcommand{\termsof}[1]{\mathcal{T}(#1)}
\newcommand{\patsof}[2]{\mathcal{T}(#1,#2)}
\newcommand{\absof}[2]{\hat{\mathcal{T}}(#1,#2)}
\newcommand{\rng}[1]{\mathsf{range}(#1)}
\newcommand{\subst}[2]{{#1} \mapsto {#2}}
\newcommand{\sapp}[2]{{#1}({#2})}
\newcommand{\sub}{\sqsubseteq}
\newcommand{\join}{\sqcup}
\newcommand{\bnd}{\ \mathbf{\shortrightarrow}\ }
\newcommand{\match}[2]{\mathsf{match}(#1,#2)}
\newcommand{\matches}[2]{\mathsf{matches}(#1,#2)}
\newcommand{\rwsteps}[1]{\rightarrow^{#1}_1}
\newcommand{\rewrites}[1]{\rightarrow^{#1}}
\newcommand{\betared}{\rwsteps{\beta}}
\newcommand{\evals}{\rewrites{\beta}}
\newcommand{\cs}{\mathsf{costset}} % costset
\newcommand{\subterms}{\mathsf{subterms}}
\newcommand{\occ}{\mathsf{occurs}}
\newcommand{\pl}[1]{\mathcal{P}(#1)}
\newcommand{\plau}[1]{\mathcal{P}^2(#1)}
\newcommand{\denot}[1]{\llbracket #1 \rrbracket}
\newcommand{\eqg}{\semeq^{\mathcal{G}}}
\newcommand{\dominant}{\mathsf{dominant}}
\newcommand{\extract}{\mathsf{extract}}
\newcommand{\gramdef}{\ensuremath{\mathrel{\mathord{:}\mathord{:=}}}}
\newcommand{\contains}[2]{{#1} \prec {#2}}
\newcommand{\cost}{\mathsf{cost}}
\DeclareMathOperator*{\argmin}{arg\,min}
\newcommand{\ecau}[3]{\gau(#1 \vdash #2,#3)}
\newcommand{\intro}{\sz}
\newcommand{\emptyseq}{\epsilon}
\newcommand{\ctx}{\mathcal{C}}
\newcommand{\ruleset}{\mathcal{R}}
\newcommand{\patset}{\mathcal{P}}
\newcommand{\libset}{\mathcal{L}}
\newcommand{\sfunify}{\ensuremath{\textcolor{darkblue}{\mathsf{reduce}}}}
\newcommand{\sfprune}{\ensuremath{\textcolor{darkred}{\mathsf{prune}}}}
\newcommand{\sftopk}{\ensuremath{\mathsf{top\_k}}}
\newcommand{\cogsci}{\tname{2d cad}}
\newcommand{\nuts}{Nuts-and-bolts\xspace}
\begin{document}

%% Title information
\title{\tool: Learning Better Abstractions with E-Graphs and Anti-Unification}
\renewcommand{\shorttitle}{Learning Better Abstractions with E-Graphs and Anti-Unification}

%% Author with single affiliation.
\author{David Cao}
\authornote{Equal contribution}
\affiliation{
  \institution{UC San Diego}
 \country{USA}
}
\email{dmcao@ucsd.edu}

\author{Rose Kunkel}
\authornotemark[1]
\affiliation{
  \institution{UC San Diego}
 \country{USA}
}
\email{rkunkel@eng.ucsd.edu}

\author{Chandrakana Nandi}
\affiliation{
 \institution{Certora, Inc.}
 \country{USA}
}
\email{chandra@certora.com}

\author{Max Willsey}
\affiliation{
  \institution{University of Washington}
 \country{USA}
}
\email{mwillsey@cs.washington.edu}

\author{Zachary Tatlock}
\affiliation{
  \institution{University of Washington}
 \country{USA}
}
\email{ztatlock@cs.washington.edu}

\author{Nadia Polikarpova}
\affiliation{
  \institution{UC San Diego}            %% \institution is required
 \country{USA}
}
\email{npolikarpova@eng.ucsd.edu}          %% \email is recommended

\renewcommand{\shortauthors}{Cao, D., Kunkel, R., Nandi, C., Willsey, M., Tatlock Z., Polikarpova, N.}

\begin{abstract}
\emph{Library learning}
  compresses a given corpus of programs
  by extracting common structure from
  the corpus into reusable library functions.
Prior work on library learning suffers from two limitations
 that prevent it from scaling to larger, more complex inputs.
First, it explores too many candidate library functions
  that are not useful for compression. 
Second, it is not robust to syntactic variation in
 the input.

We propose \emph{library learning modulo theory} (LLMT),
 a new library learning algorithm
 that additionally takes as input
 an equational theory for a given problem domain.
LLMT uses e-graphs and equality saturation
 to compactly represent the space of programs equivalent modulo the theory,
 and uses a novel \emph{e-graph anti-unification} technique
 to find common patterns in the corpus more directly and efficiently.

We implemented LLMT in a tool named \tool.
Our evaluation shows that \tool
 achieves better compression orders of magnitude faster
 than the state of the art.
We also provide a qualitative evaluation
 showing that \tool learns reusable functions
 on inputs previously out of reach for library learning.

\end{abstract}

\begin{CCSXML}
  <ccs2012>
     <concept>
         <concept_id>10011007.10011006.10011008.10011009.10011012</concept_id>
         <concept_desc>Software and its engineering~Functional languages</concept_desc>
         <concept_significance>500</concept_significance>
     </concept>
     <concept>
         <concept_id>10011007.10011074.10011092.10011782</concept_id>
         <concept_desc>Software and its engineering~Automatic programming</concept_desc>
         <concept_significance>500</concept_significance>
      </concept>
   </ccs2012>
\end{CCSXML}
  
\ccsdesc[500]{Software and its engineering~Functional languages}
\ccsdesc[500]{Software and its engineering~Automatic programming}
  
%% Keywords
%% comma separated list
\keywords{library learning, e-graphs, anti-unification}

\maketitle

\section{Introduction}
\label{sec:intro}

Abstraction is the key to managing software complexity.
Experienced programmers routinely extract common functionality
into libraries of reusable abstractions
to express their intent more clearly and concisely.
What if this process of extracting useful abstractions from code could be automated?
\emph{Library learning} seeks to answer this question
  with techniques to compress
  a given corpus of programs
  by extracting common structure into reusable library functions.
Library learning
  has many potential applications
  from refactoring and decompilation~\cite{szalinski,shapemod},
  to modeling human cognition~\cite{cogsci-smiley,cogsci-dataset},
  and speeding up program synthesis by
  specializing the target language to
  a chosen problem domain~\cite{dreamcoder}.

Consider the simple library learning task in \autoref{fig:polygons}.
On the left,
 \autoref{fig:polygons}a shows a corpus of three programs in a
 \cogsci DSL from \citet{cogsci-dataset}.
Each program corresponds to a picture
 composed of regular polygons,
 each of which is made of multiple rotated line segments.
On the right,
 \autoref{fig:polygons}b shows a learned library
 with a single function (named \T{f0})
 that abstracts away the construction of scaled regular polygons.
The three input programs can then be refactored
 into a more concise form using the learned \T{f0}.
% In this case,
%  abstracting regular polygons is intuitively ``correct'',
%  but in general, this is a qualitative choice.
Whether \T{f0} is the ``best'' abstraction for this corpus is generally hard to quantify.
For this paper,
 we follow \dc~\cite{dreamcoder} and use \emph{compression} as a
 metric for library learning, \ie,
 the goal is to reduce the total size of the corpus in AST nodes
 (from 208 to 72 \autoref{fig:polygons}).
Importantly, the total size of the corpus includes the size of the library:
this prevents library learning from generating too many overly-specific functions,
and instead biases it towards more general and reusable abstractions. % that can be reused multiple times.

\begin{figure}
    \centering
    \includegraphics[width=\textwidth]{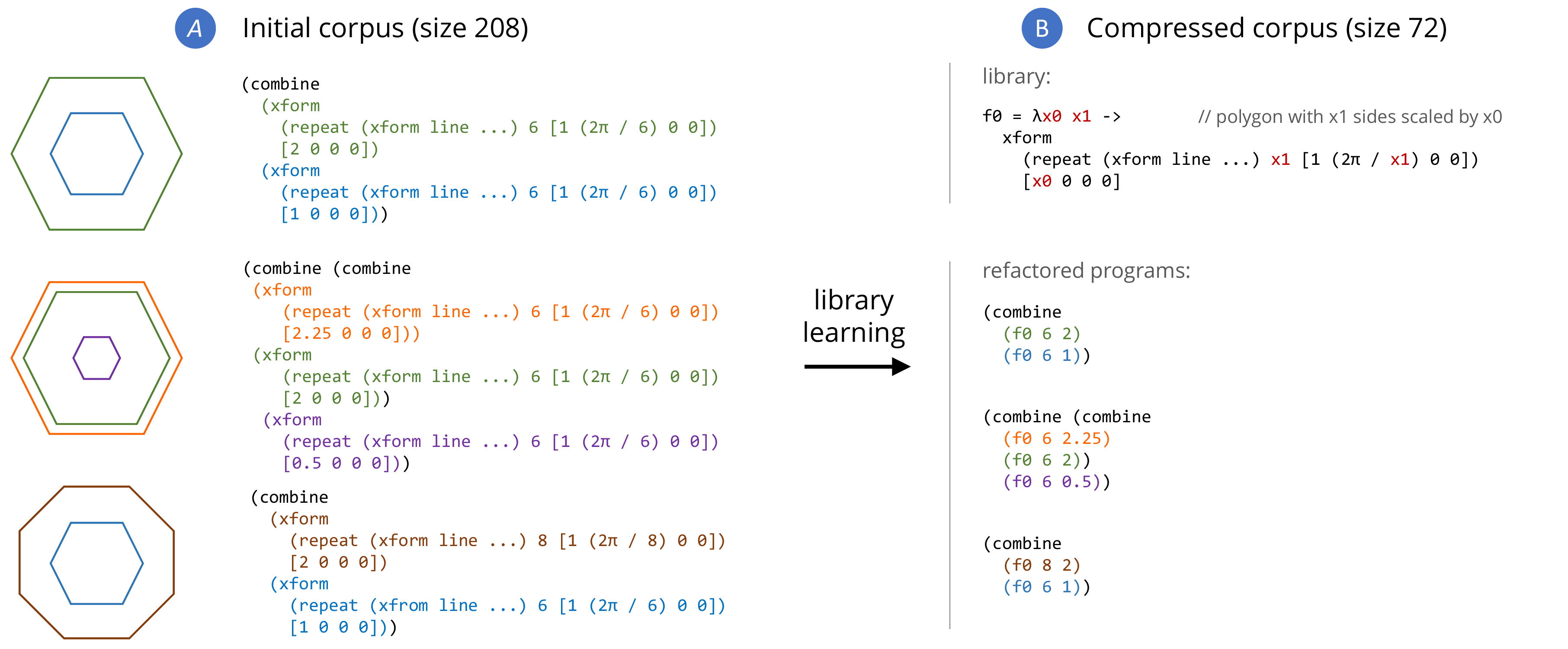}
    \caption{Example of library learning.
             Initial corpus A contains three graphical programs from the ``nuts \& bolts'' dataset of \citet{cogsci-dataset}.
             Corpus B is the output of library learning with a single learned function for a scaled polygon,
             and the original programs refactored using this function.}\label{fig:polygons}
\end{figure}

Library learning can be phrased as a program synthesis problem
 structured in two phases:
%  \textit{Generate} -- generating candidate abstractions, and then
%  \textit{Select} -- selecting those abstractions that produce the best (smallest)
\emph{generating} candidate abstractions, and then
\emph{selecting} those abstractions that produce the best (smallest)
     refactored corpus.
The state-of-the-art technique, implemented in \dc~\cite{dreamcoder},
  suffers from two primary limitations % in the \textit{Generate} phase
  that hinder scaling library learning to larger and more realistic inputs.
\begin{itemize}
    \item
    Candidate generation is not \emphbf{precise}:
    \dc generates many candidate abstractions that cannot be useful,
    slowing down the selection phase and the algorithm as a whole.
    % The result is that many candidates are sent to the selection phase
    %  that should never be selected, slowing down the algorithm as a whole.
    \item
    % Candidate generation is not \emphbf{robust} to syntactic variation.
    % Candidate generation is not \emphbf{robust}:
    The technique is purely syntactic and hence not \emphbf{robust} to superficial variation.
    % \dc proposes library function candidates purely syntactically
    % and hence is brittle to superficial variation.
    For example,
     a human programmer would immediately know that the terms $2 + 1$ and $1 + 3$
     can be refactored using the abstraction $\lambda x \bnd x + 1$,
     because addition commutes;
     a purely syntactic library learning approach cannot generate this abstraction.
    % For example,
    %  a purely syntactic algorithm cannot
    %  generate an abstraction for the programs $2 + 1$ and $1 + 3$.
    % However,
    %  since we know that addition commutes,
    %  $\lambda x \bnd x + 1$ applies in both programs.
    % \item
    % \mw{something about beam extraction}
\end{itemize}
In this paper we propose \emph{library learning modulo (equational) theories} (LLMT)---%
a new library learning algorithm
that addresses both of the above limitations.
% from the \textit{Generate} phase of library learning.

\mypara{Precise Candidate Generation via Anti-Unification}
To make candidate generation more precise,
%To make the proposal stage of library learning more precise,
 LLMT leverages two key observations:
\begin{itemize}
    \item
    Useful abstractions must be used in the corpus at least twice.
    For example, in a corpus of two programs $2 + 1$ and $3 + 1$,
    there is no need to consider $\lambda x \bnd 3 + x$
    because it can only be used in one place,
    and hence would only increase the size of the corpus.
    %% Nadia: there's nothing wrong with using this example,
    %% but the asymmetry with the second example annoys me.
    % For example, there is no need to consider $\lambda x \bnd \T{trans}\ x\ [0.5\ 0\ 0\ 0]$
    %  in our running example,
    %  because it can only be used in one place;
    %  it would only increase the size of the program.
    \item
    Abstractions should be ``as concrete as possible'' for a given corpus.
    For example,
     in the same corpus with $2 + 1$ and $3 + 1$,
     the abstraction $\lambda x \bnd x + 1$
     is superior to the more general
     $\lambda x\ y \bnd x + y$,
     since both apply to the same two terms,
     but applying the latter is more expensive (it requires two arguments).
    %  since the argument for $y$ in the latter would always be $1$.
    % In our example, the chosen \texttt{f0} of
    %   $\lambda x \bnd \T{repeat}\ (\T{trans line}\ \ldots)\ x\ [1\ (2\pi / x)\ 0\ 0]$
    %   is superior
    %   the more general
    %   $\lambda x\ y \bnd \T{repeat}\ (\T{trans line}\ \ldots)\ x\ y$,
    %   because the argument for $y$ in the latter abstraction will always be exactly
    %   $[1\ (2\pi / x)\ 0\ 0]$.
\end{itemize}
In other words, a useful abstraction corresponds to the least general \emph{pattern}
that matches some pair of subterms from the original corpus;
such a pattern can be computed via \emph{anti-unification} (AU)~\cite{plotkin1970lattice}.
For example, anti-unifying $2 + 1$ and $3 + 1$
  yields the pattern $X + 1$,
  and the desired candidate library function $\lambda x \bnd x + 1$
  can be derived by abstracting over the pattern variable $X$.
  % \footnote{Hereafter, we use upper-case letters for pattern variables and lower-case letters for program variables.}
%
Similarly, in \autoref{fig:polygons}, the abstraction \T{f0} can be derived
 by anti-unifying, for example, the blue and the brown subterms of the corpus.
%
% The remaining challenge is to efficiently compute the set of anti-unifiers
% between \emph{all} pairs of subterms in the corpus;
% LLMT achieves this via a new bottom-up dynamic programming algorithm.

\mypara{Robustness via E-Graphs}
To make candidate generation more robust,
%To address the robustness issue,
LLMT additionally takes as input a \emph{domain-specific equational theory} % (expressed as a set of rewrite rules)
and uses it to find programs that are semantically equivalent to the original corpus,
but share the most syntactic structure.
For example, in the domain of arithmetic, we expect the theory to contain the equation
$X + Y  \semeq  Y + X$,
which states that addition is commutative.
Given the corpus with terms $2 + 1$ and $1 + 3$,
this rule can be used to rewrite the second term in to $3 + 1$,
enabling the discovery of the common abstraction $\lambda x \bnd x + 1$.

% For example, for the graphical DSL from our running example,
% the equational theory might contain a rule that adds a unit transformation to any shape $s$:
% $$
% s  \Rightarrow  \T{trans}\ s\ [1\ 0\ 0\ 0]
% $$
% %
% This rule can then be used to transform the two unit hexagons in corpus C from \autoref{fig:llmt}
% into the form they have in the original corpus in \autoref{fig:polygons},
% which enabled the discovery of the scaled polygon abstraction $f_0$
% and using it to compress the corpus to size 72.

The main challenge with this approach is to search over the large space of programs
that are semantically equivalent to the original corpus.
To this end, LLMT relies on the \emph{e-graph} data structure
and the \emph{equality saturation} technique~\cite{tate2009equality,willsey2021egg}
to compute and represent the space of semantically equivalent programs.
To enable efficient library learning over this space,
% we extend our AU-based candidate generation algorithm to work over e-graphs.
we propose a new candidate generation algorithm
that efficiently computes the set of all anti-unifiers
between pairs of sub-terms represented by an e-graph,
using dynamic programming.

Finally, \emph{selecting} the optimal library
%   (the \textit{Select} phase of library learning)
  in this setting reduces to the problem of
  \emph{extracting} the smallest term out of an e-graph
  in the presence of common sub-expressions.
This problem is extremely computationally intensive in its general form,
and existing approaches have limited scalability~\cite{tensat}.
Instead we propose \emph{targeted subexpression elimination}:
a novel e-graph extraction algorithm
that uses domain-specific knowledge to reduce the search space
and readily supports approximation via beam search to trade off accuracy and efficiency.
% Finally, to select the optimal library,
% we propose \emph{beam extraction}:
% a new algorithm for extracting the best term from an e-graph,
% which is based on beam search and can trade off accuracy and efficiency.

\begin{figure}
    \centering
    \includegraphics[width=.8\textwidth]{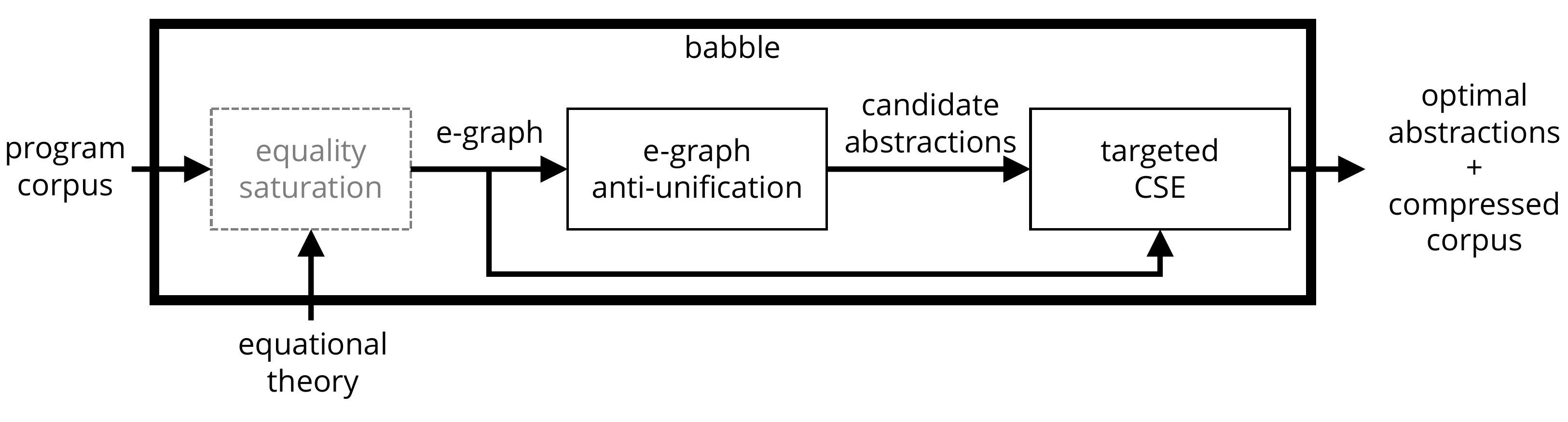}
    \caption{\tool architecture overview}\label{fig:arch}
  \end{figure}

\mypara{\tool}
We have implemented LLMT in \tool,
a tool built on top of the \tname{egg} e-graph library~\cite{willsey2021egg}.
The overview of the \tool's workflow is shown in \autoref{fig:arch},
with gray boxes representing existing techniques
and black boxes representing our contributions.
% Given the input corpus and rewrite rules from \autoref{fig:llmt},
% \tool is able to discover the abstraction $f_0$ and the compressed corpus B
% in a fraction of a second.

We evaluated \tool on benchmarks from two sources:
compression tasks extracted from \dc runs~\cite{compression-bench}
and \cogsci programs used to evaluate concept learning in humans~\cite{cogsci-dataset}.
Our evaluation shows that \tool outperforms \dc on its own benchmarks,
 achieving better compression in orders of magnitude less time.
Adding domain-specific rewrites improves compression even further.
We also show that \tool scales to the larger \cogsci corpora,
which is beyond reach of \dc.
% %
% Although adding domain-specific rewrites does impact scalability,
% it also makes the results more robust to syntactic variations in the corpus.
%
We also present and discuss a selection of abstractions learned by \tool,
 demonstrating that the LLMT approach can learn functions that match human intuition.

\mypara{Contributions}
In summary, this paper makes the following contributions:
\begin{itemize}
    \item \emph{library learning modulo equational theory}:
        a library learning algorithm that can incorporate an arbitrary domain-specific equational theory
        to make learning robust to syntactic variations;
    \item \emph{e-graph anti-unification}:
        an algorithm that efficiently generates the set of candidate abstractions
        using the mechanism of anti-unification extended from terms to e-graphs;
%   \item an anti-unification algorithm that operates on e-graphs rather than
%         concrete terms,
  \item \emph{targeted common subexpression elimination}:
        a new approximate algorithm for extracting the best term from an e-graph
        in the presence of common sub-expressions.
\end{itemize}

\section{Overview}
\label{sec:overview}

\newcommand{\xargs}[4]{\ensuremath{[#1, #2, #3, #4]}}
\newcommand{\xform}[2]{\ensuremath{\T{xform}(#1,\ #2)}}
\newcommand{\rep}[3]{\ensuremath{\T{repeat}(#1,\ #2,\ #3)}}
\newcommand{\comb}[2]{\ensuremath{\T{combine}(#1,\ #2)}}

\newcommand{\scale}[2]{\ensuremath{\T{scale}(#1,\ #2)}}
\newcommand{\repRot}[3]{\ensuremath{\T{repRot}(#1,\ #2,\ #3)}}
\newcommand{\rotate}[2]{\ensuremath{\T{rotate}(#1,\ #2)}}
\newcommand{\translate}[3]{\ensuremath{\T{translate}(#1,\ #2,\ #3)}}

\newcommand{\pSide}[1]{\ensuremath{\T{side}(#1)}}
\newcommand{\ngon}[1]{\ensuremath{\T{ngon}(#1)}}
\newcommand{\scaledHex}[1]{\ensuremath{\T{scaledHexagon}(#1)}}

%\newcommand{\scale}[2]{\xform{#1}{\xargs{#2}{0}{0}{0}}}
%\newcommand{\rotate}[2]{\xform{#1}{\xargs{1}{#2}{0}{0}}}
%\newcommand{\translateX}[2]{\xform{#1}{\xargs{1}{0}{#2}{0}}}
%\newcommand{\translateY}[2]{\xform{#1}{\xargs{1}{0}{0}{#2}}}
%\newcommand{\translate}[3]{\xform{#1}{\xargs{1}{0}{#2}{#3}}}

%\newcommand{\scale}[2]{\ensuremath{\T{scale}\ #1\ #2}}
%\newcommand{\rot}[2]{\ensuremath{\T{rotate}\ #1\ #2}}
%\newcommand{\translX}[2]{\ensuremath{\T{translateX}\ #1\ #2}}
%\newcommand{\translY}[2]{\ensuremath{\T{translateY}\ #1\ #2}}
%\newcommand{\transl}[3]{\ensuremath{\T{translate}\ #1\ (#2, #3)}}

%In this section we illustrate our approach using a running example
%that builds upon \autoref{fig:polygons} from the introduction.

We illustrate LLMT via a running example building on \autoref{fig:polygons}.
The input corpus in \autoref{fig:polygons} is written in a \cogsci DSL by \citet{cogsci-dataset},
which features the following primitives:

\vspace{0.25em}
\begin{tabular}{rl}
  \T{line} &
    a line segment from
    the origin (0, 0) to the point (1, 0)
    \\
  \comb{F_1}{F_2} &
    the union of figures $F_1$ and $F_2$
    \\
  \xform{F}{\tau} &
    applying transformation $\tau$ to figure $F$
    \\
  \rep{F}{0}{\tau} &
    the empty figure
    \\
  \rep{F}{n+1}{\tau} &
    \comb{F}{\xform{\rep{F}{n}{\tau}}{\tau}}, similar to a ``fold''
\end{tabular}
\vspace{0.25em}

%%  \begin{itemize}
%%    \item
%%      \T{line} denotes
%%        a line segment from
%%        the origin (0, 0) to
%%        the point (1, 0)
%%    \item
%%      \comb{F_1}{F_2} denotes
%%        the union of figures
%%        $F_1$ and $F_2$
%%    \item
%%      \xform{F}{\tau} denotes
%%        applying transformation $\tau$
%%        to figure $F$
%%    \item
%%      \rep{F}{0}{\tau} denotes
%%        the empty figure
%%    \item
%%      \rep{F}{n+1}{\tau} denotes
%%        \xform{\comb{F}{\rep{F}{n}{\tau}}}{\tau}
%%  \end{itemize}

A transformation $\tau$ is a 4-tuple
  $\xargs{s}{\theta}{t_x}{t_y}$ denoting
  uniformly scaling by a factor of $s$,
  rotating by $\theta$ radians, and
  translating by $(t_x,\ t_y)$ in the
  $x$ and $y$ directions respectively.
% Using these,
%   \autoref{fig:polygons} features several hexagons,
%   scaled by various values of $s$,
%   implemented as:
For example, the green hexagon in \autoref{fig:polygons} is implemented as:
%
% \vspace{-1em}
\[
  \xform{
    \rep{
      \xform{
        \T{line}
      }{
        \xargs{1}{0}{-0.5}{0.5 / \tan(\pi / 6)}
      }
    }{6}{
      \xargs{1}{2\pi/6}{0}{0}
    }
  }{
    \xargs{2}{0}{0}{0}
  }
\]
That is, a hexagon side $\xform{\T{line}}{\xargs{1}{0}{-0.5}{0.5 / \tan(\pi / 6)}}$
is repeated six times, each time rotated by another $2\pi/6$ radians,
and the resulting unit hexagon is scaled by $2$.

Taking a closer look at
  the two blue hexagons in \autoref{fig:polygons},
  however, we notice something peculiar.
The two occurrences of 
\xform{\T{repeat} \ldots}{\xargs{1}{0}{0}{0}}
  % \scale{\T{repeat}(\ldots)}{1}
  in \autoref{fig:polygons} are no-ops:
  % \footnote{
  %   We use the shortened forms like ``\T{scale}''
  %   in the text merely for concision;
  %   they are fully expanded in both the
  %   corpus from~\cite{cogsci-dataset} and
  %   in \autoref{fig:polygons}.
  % }:
  they merely scale a figure $F$ by a factor of 1
  and neither rotate nor translate it.
These redundant transformations would likely not be there
had the code been written by hand 
or decompiled from a lower-level representation (by a tool like \tname{Szalinski}~\cite{szalinski});%
  \footnote{
    We suspect that these transformations
    ended up in the dataset of~\citet{cogsci-dataset}
    because it was generated programmatically from
    human-designed abstractions, such as ``scaled polygon''.}
and yet, they are crucial
  if we hope to learn the optimal abstraction $f_0$
  with a purely syntactic technique.

\autoref{fig:llmt} shows a simplified and ``more natural''
  version of the corpus from \autoref{fig:polygons},
  which eliminates these no-op transformations.
As illustrated in the figure,
  this causes a purely syntactic technique
  to learn a different function, $f_1$,
  which abstracts over an \emph{unscaled} unit polygon.
Because
  the scaling transformation
  is now outside the abstraction,
  it must be repeated five times.
As a result,
  although the simplified input corpus C from \autoref{fig:llmt}
  is \emph{smaller} than the original corpus A (196 AST nodes instead of 208),
  its compressed version D is \emph{larger} (81 AST nodes instead of 72)!
In other words,
  syntactic library learning is not robust
  with respect to semantics-preserving code variations.
  % This seems counterintuitive;
%   it may be difficult to explain to a
%   user why sprinkling no-ops throughout
%   code would improve learned library quality.

\begin{figure}
  \centering
  \includegraphics[width=\textwidth]{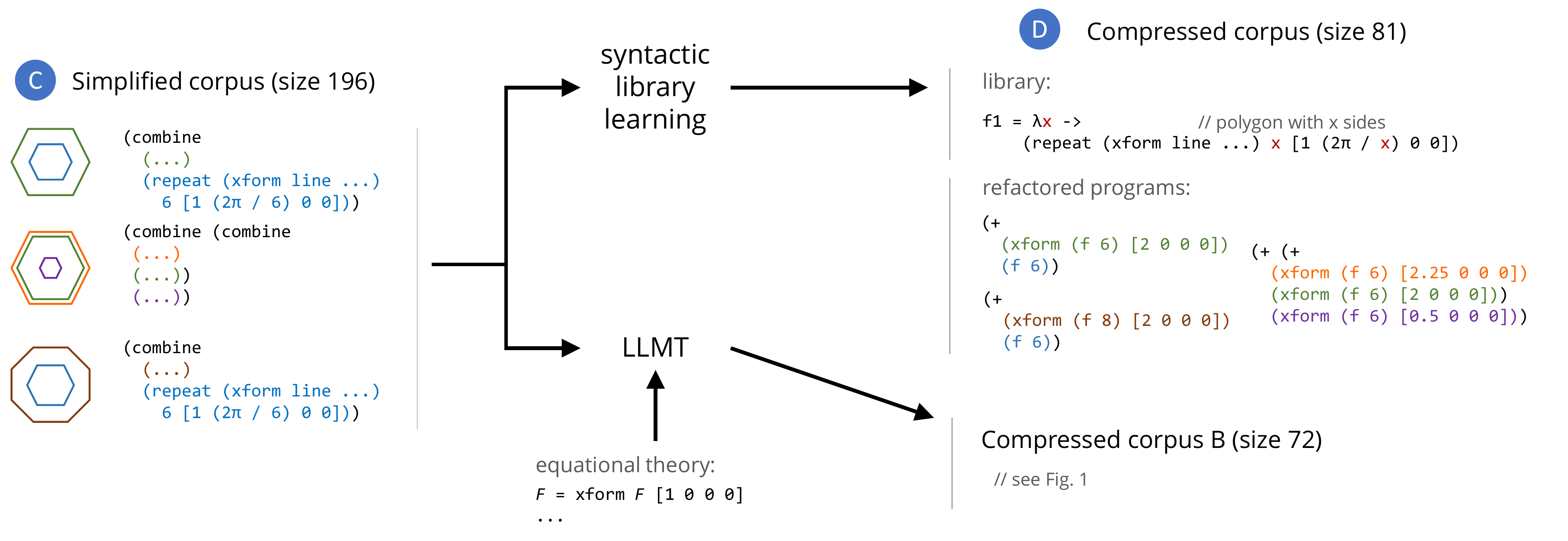}
  \caption{Difference between syntactic library learning and LLMT.
           Here the initial corpus C is the simplified version of corpus A from \autoref{fig:polygons},
           with the redundant transformations on the blue subterms removed
           (unchanged terms are shown as ellipses).
           With this modification, syntactic techniques would learn an inferior abstraction $f_1$,
           leading to corpus D,
           while LLMT still learns the better abstraction $f_0$.
   }\label{fig:llmt}
\end{figure}

In contrast, our tool \tool
  can take the simplified corpus C as input
  and still discover, in a fraction of a second,
  the scaled polygon abstraction $f_0$,
  yielding the compressed corpus B of size 72.
In the rest of this section,
  we illustrate how \tool achieves this using our new algorithm,
  \emph{library learning modulo equational theory} (LLMT).

\mypara{Simplified DSL}
In the rest of this section 
  we use a tailored version of the \cogsci DSL
  with the following additional constructs:

  \vspace{0.25em}
  \begin{tabular}{rl}
    \scale{F}{s} &
      scale $F$ by a factor of $s$
      \\
    \repRot{F}{n}{\theta} &
      a special case of \T{repeat} that only performs rotation between iterations
    \\
    \pSide{n} &
      a side of a regular unit $n$-gon
  \end{tabular}
  \vspace{0.25em}

\noindent
These are expressible in the original DSL,
  and could be even discovered with library learning,
  given an appropriate corpus;
  we treat them as primitives here
  for the sake of simplifying presentation.

\subsection{Representing Equivalent Terms with E-Graphs}

To exploit equivalences during library learning,
%  (like a human would do),
  \tool takes as input a domain-specific equational theory.
%To be able to reason about semantics-preserving transformations like the human programmers do,
%\tool takes as input a domain-specific equational theory.
%
For our running example,
  let us assume that the theory contains a single equation:
\begin{equation}
  F \semeq \scale{F}{1} \label{eq:unit-xform}
\end{equation}
  which stipulates that any figure $F$
  is equivalent to itself 
  % transformed
  % by the no-op scaling $\T{xform}$ as described above.
  scaled by one.
With this equation in hand,
  it is possible to ``rewrite'' corpus C into corpus A,
  and from there learn the optimal compressed corpus B
  by purely syntactic techniques.
The challenge
  is that there are infinitely many
  alternative corpora C may be rewritten to;
  how do we know which to pick to maximize syntactic alignment,
  and thus the chance of discovering an optimal abstraction?

Instead of trying to guess
  the best equivalent corpus
  or enumerating them one by one,
  \tool uses \emph{equality saturation}~\cite{tate2009equality,willsey2021egg}.
Equality saturation takes as input
  a term $t$ and a set of rewrite rules,
  and finds the space of all terms equivalent
  to $t$ under the given rules;
  this is possible due to the high degree of
  sharing provided by the \emph{e-graph} data structure,
  which can compactly represent the resulting space.

\autoref{fig:egraph} (left) shows the e-graph
  built by equality saturation for the blue term in \autoref{fig:llmt}---%
  represented in the simplified DSL as \repRot{\pSide{6}}{6}{2\pi/6}---%
  using the rewrite rule \eqref{eq:unit-xform}.
The blue part of the graph represents the original term,
  and the gray part is added by equality saturation.
The solid rectangles in the e-graph denote
  \emph{e-nodes} (which are similar to regular AST nodes),
  while the dashed rectangles denote
  \emph{e-classes} (which represent equivalence classes of terms).
Importantly,
  the edges in the e-graph go from e-nodes to e-classes,
  which enables compact representation of
  programs with variation in sub-terms:
  for example, making e-class $c_1$
  the first child of the \T{repRot} node,
  enables it to represent both terms
    \repRot{\pSide{6}}{6}{2\pi/6}
%    \rep{{\color{orange}\T{line}}}{6}{\xargs{1}{2\pi/6}{0}{0}}
  and
    \repRot{\scale{\pSide{6}}{1}}{6}{2\pi/6}
%    \rep{{\color{orange}\scale{\T{line}}{1}}}{6}{\xargs{1}{2\pi/6}{0}{0}}
  without duplicating their common parts.
Furthermore, because e-graphs can have cycles,
  they can also represent infinite sets of terms:
  for example, the e-class $c_1$ represents
  all terms of the form:
  \pSide{6},
  \scale{\pSide{6}}{1},
  \scale{\scale{\pSide{6}}{1}}{1},
  etc.
Because this e-graph represents
  the space of \emph{all} equivalent terms
  up to the rewrite~\eqref{eq:unit-xform},
  the term we seek for library learning, namely
    \scale{\repRot{\pSide{6}}{6}{2\pi/6}}{1},
    %\scale{\repRot{\pSide{6}}{6}{2\pi/6}}{1},
    %\scale{\rep{\T{line}}{6}{\xargs{1}{2\pi/6}{0}{0}}}{1}
  is also represented in the e-class $c_0$.

  \begin{figure}
    \begin{minipage}{.4\textwidth}
      \centering
      \includegraphics[width=.95\textwidth]{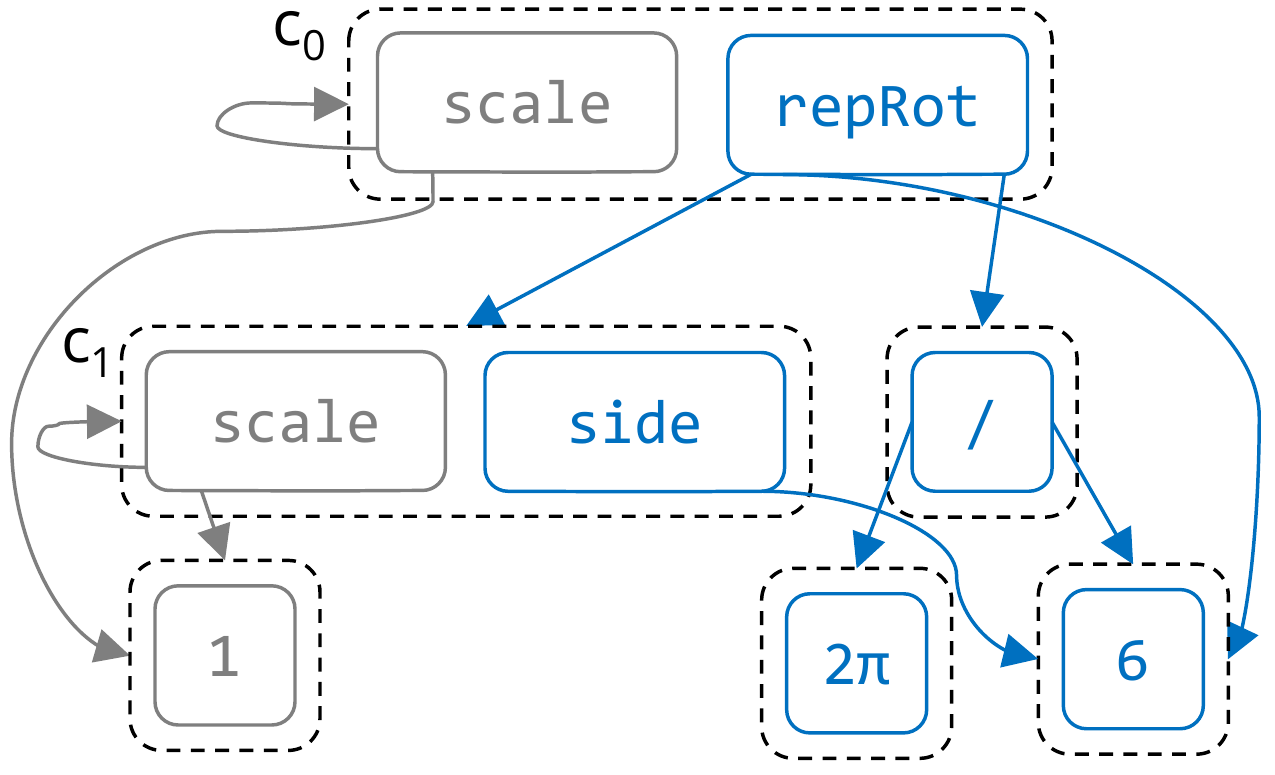}
    \end{minipage}\vline%
    \begin{minipage}{.6\textwidth}
      \centering
      \includegraphics[width=.95\textwidth]{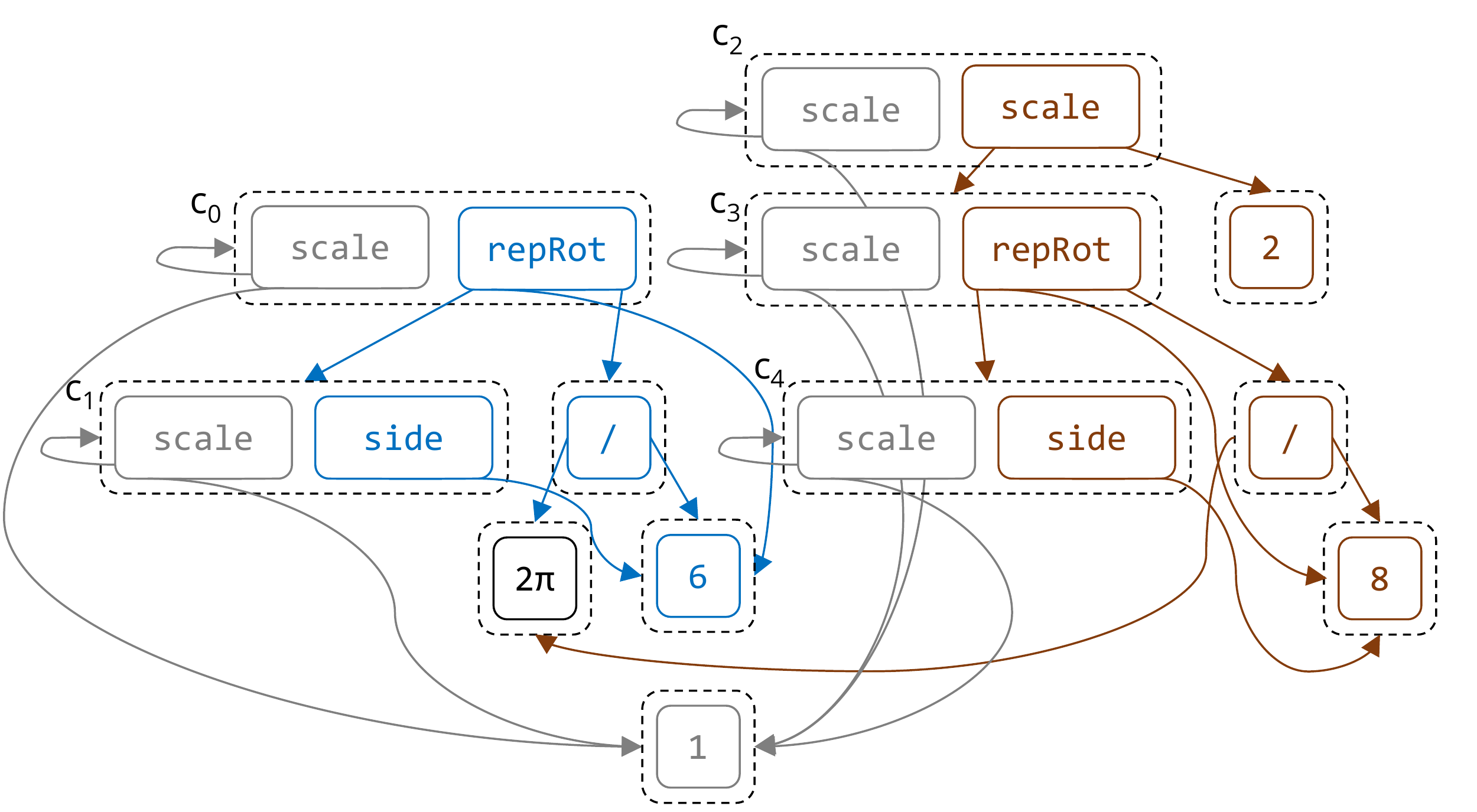}
    \end{minipage}
    \caption{(Left) An e-graph representing the space of all programs
    equivalent the blue term in \autoref{fig:llmt} under rule \eqref{eq:unit-xform}.
    (Right) An e-graph with both the blue and the brown terms from \autoref{fig:llmt}
    after equality saturation.
    All terms are written in the simplified DSL.
    }\label{fig:egraph}
  \end{figure}  

\subsection{Candidate Generation via E-Graph Anti-Unification}

After building an e-graph from the given corpus by
  running equality saturation with the given equational theory,
  the next step in library learning is
  to generate candidate patterns
  that capture syntactic similarities across the corpus.
%\TODO{Idk if ``saturated corpus'' is a good name, because we don't necessaily reach saturation,
%but we need a name for the result of running eqsat on the original corpus.}
%
The challenge is to generate
  sufficiently few candidates
  to make library learning tractable,
  but sufficiently many to achieve good compression.
  % without missing viable candidates
  % that provide optimal compression.
We illustrate candidate generation
  using the e-graph in \autoref{fig:egraph} (right),
  which represents the part of our corpus
  consisting of the (saturated) blue and brown terms.
Recall that the optimal
  pattern---which corresponds to the scaled polygon abstraction $f_0$---is:
\begin{equation}
  P_0 = \scale{\repRot{\pSide{X}}{X}{2\pi/X}}{Y}\label{eq:p0}
  %P_0 = \T{xform (repeat line $\ X\ $ (2pi/$X$)) $\ Y$}\label{eq:p0}
\end{equation}

Prior work on \dc generates patterns by
  picking a random fragment from the corpus,
  and then replacing arbitrarily chosen subterms
  with pattern variables.
For example, to generate the pattern $P_0$,
  \dc needs to pick the entire brown subterm as a fragment,
  and then decide to abstract over its subterms $8$ and $2$.
This approach successfully restricts
  the set of candidates from
  all syntactically valid patterns
  to only those that have
  at least one match in the corpus;
  however, since there are too many ways
  to select the subterms to abstract over,
  this space is still too large to explore exhaustively
  beyond small corpora of short programs.
  % it still produces too many patterns and
  % does not scale beyond small corpora of short programs.
In \tool, this problem
  is exacerbated by equality saturation,
  since the e-graph often contains exponentially or
  infinitely more programs than the original corpus.
%since the saturated corpus often contains exponentially or infinitely more programs than the original corpus.

\mypara{Generality Filters}
%\mypara{Inviable Patterns}
%
To further prune the set of viable candidates in \tool,
  we identify two classes of patterns that can be safely discarded,
  either because they are
  too concrete or too abstract.
First,
  a pattern like
    \repRot{\pSide{8}}{X}{2\pi/8}
    %\T{repeat line $\ X\ $ (2pi/8)}
  can be discarded because it is
  \emph{too concrete} for this corpus:
  the corresponding abstraction
  can be applied only once,
  essentially replacing the sole matching term,
    \repRot{\pSide{8}}{8}{2\pi/8}
    %\T{repeat line 8 (2pi/8)},
  with
    $(\lambda x \bnd \rep{\pSide{8}}{x}{2\pi/8})\ 8$,
   %\T{let f = \\x->repeat line x (2pi/8) in f 8},
  which only adds more AST nodes to the corpus. % we only count vars and apps, but not lambdas, but I don't want to say it here
  %only makes the situation worse
  % by \emph{adding} three AST nodes to the corpus:
    % a $\lambda$,
    % a \T{let}-binding, and
    % an application.
%
% \TODO{There is a subtlety here: we can't say this pattern only matches a single subterm in this corpus,
% because e-class $c_2$ represents infinitely many terms, and each has a subterm that matches this pattern.
% But these terms cannot occur in the corpus simultaneously, they are alterntives, hence this pattern is still out.
% This is what our co-occurrence analysis does.
% But I think we'll talk about this in the technical section.}
%
Second,
  a pattern like
    \repRot{\pSide{X}}{X}{Y}
    %\T{repeat line $\ X\ Y$}
  can be discarded because
  it is \emph{too abstract} for this corpus:
  everywhere it matches,
  a more concrete pattern
    \repRot{\pSide{X}}{X}{2\pi/X}
    %\T{repeat line $\ X\ $ (2pi/$X$)}
  would also match;
  using the more concrete pattern
  leads to better compression,
  since the actual arguments in its applications
  are both fewer and smaller:
    \T{f 6 2pi/6} and \T{f 8 2pi/8} vs.
    \T{f 6} and \T{f 8}.

Thus,
  our \emphbf{first key insight} is 
  to restrict the set of candidates
  to the most concrete patterns that match some pair
  % that
  % any optimal candidate pattern must be the most concrete pattern
  % that matches some tuple\footnote{
  %   Note that just matching \emph{pairs} of subterms is insufficient.
  %   \TODO{ADD EXAMPLE}
  % }
  of subterms in the (saturated) corpus.%
  \footnote{As we discuss in \autoref{sec:au} this can in theory eliminate optimal patterns,
  but our evaluation shows that it works well in practice.}
%\TODO{THIS IS NOT TRUE! This should be a tuple of subterms, not pair,
%but \tool actually only does pairs; see comments about subsumption lattice in Slack; what to do???}
%
For example, pattern $P_0$
  is the most concrete pattern that matches the two terms
  \begin{gather}
    \scale{\repRot{\pSide{6}}{6}{2\pi/6}}{1}\label{eq:hex}\\
    \scale{\repRot{\pSide{8}}{8}{2\pi/8}}{2}\label{eq:oct}
    %\T{xform (repeat line 6 (2pi/6)) 1}\label{eq:hex}\\
    %\T{xform (repeat line 8 (2pi/8)) 2}\label{eq:oct}
  \end{gather}
  represented in \autoref{fig:egraph} by the e-classes $c_0$ and $c_2$, respectively.
%
% You might think this observation is too strong:
% if we consider the most concrete pattern that matches green and blue, we would discover a hexagon;
% but this is okay, because by matching every pair of subterms we will also match brown and blue,
% and discover a generic polygon.

\mypara{Term Anti-Unification}
Computing the most concrete pattern that
  matches two given terms is known as
  \emph{anti-unification} (AU)~\cite{Reynolds1969TransformationalSA,plotkin1970lattice}.
AU works by
  a simple top-down traversal of the two terms,
  replacing any mismatched constructors by pattern variables.
For example,
  to anti-unify the terms \eqref{eq:hex} and \eqref{eq:oct},
  we start from the root of both terms;
  since both AST nodes share the same constructor \T{scale}, %\T{xform},
  it becomes part of the pattern and we recurse into the children.
We eventually encounter a mismatch, where
  the term on the left is \T{6} and
  the term on the right is \T{8};
  so we create a fresh pattern variable $X$ and
  remember the \emph{anti-substitution}
  $\sigma = \{(\T{6}, \T{8}) \mapsto X\}$.
When we encounter a mismatch
  in the denominator of the angle,
  we look up the pair of mismatched terms
  $(\T{6}, \T{8})$ in $\sigma$;
  because we already created a variable for this pair,
  we simply return the existing variable $X$.
The final mismatch is $(\T{1}, \T{2})$
  in the second child of \T{scale}; %\T{xform};
  since this pair is not yet in $\sigma$,
  we create a second pattern variable, $Y$.
At this point,
  the resulting anti-unifier is the pattern $P_0$~\eqref{eq:p0}.

Anti-unifying
  a single pair of terms is simple and efficient.
However,
  in LLMT we want to anti-unify \emph{all pairs of subterms}
  that can occur \emph{in any corpus} equivalent
  (modulo the given equational theory)
  to the original\footnote{
      A careful reader might be wondering if
        we need to compute infinitely many anti-unifiers
        because there might be infinitely many equivalent corpora.
      As we explain shortly,
        there are only finitely many patterns that
        are viable abstraction candidates.
    }.
Explicitly enumerating all equivalent corpora
  represented by the e-graph and then
  performing AU on each pair of subterms
  is infeasible.
Instead, \tool performs
  anti-unification directly on the e-graph.

\mypara{From Terms to E-Classes}
We first explain how to
  anti-unify two e-classes.
This operation takes as input a pair of e-classes and
  returns a set of \emph{dominant anti-unifiers},
  \ie a set of patterns that
    (1) match both e-classes, and
    (2) is guaranteed to include
      the best abstraction 
      among the most concrete patterns 
      that match pairs of terms represented by the two e-classes.

% Consider computing $\au(c_0, c_2)$
%   for the e-classes $c_0$ and $c_2$ in
%   the e-graph from \autoref{fig:egraph} (right).
% AU still proceeds as a top-down traversal,
%   but in this context we must
%   check whether two e-classes have
%   any constructors \emph{in common}.
% In this case they do:
%   the \T{scale} constructor occurs
%   once in $c_0$ and twice in $c_2$.
% Let us first pick the brown \T{scale} e-node from $c_2$
%   (and pair it with the only \T{scale} e-node from $c_0$).
% Recursing into the first child,
%   picking the two \T{repRot} nodes,
%   and recursing again, arrive at $\au(c_1, c_4)$
%   we now need to compute $\au(c_0, c_3)$.
% For these two e-classes,
%   let us first consider the two \T{repRot} nodes;
%   again, we recurse and arrive at $\au(c_1, c_4)$.
% Since these two e-classes have
%   a single pair of e-nodes in common (the \T{repeat} nodes),
%   we again recurse and arrive at $\au(c_1, c_1)$.

Consider computing $\au(c_1, c_4)$
  for the e-classes $c_1$ and $c_4$ in
  the e-graph from \autoref{fig:egraph} (right).
AU still proceeds as a top-down traversal,
  but in this context we must
  check whether two e-classes have
  any constructors \emph{in common}.
In this case they do:
  both a \T{side} constructor and a \T{scale} constructor.
Let us first pick the two \T{side} constructors
  and recurse into their only child,
  computing $\au(\{\T{6}\}, \{\T{8}\})$.
These two e-classes have
  no matching constructors,
  so AU simply returns a pattern variable,
  similarly to term anti-unification;
  this yields the first pattern for $c_1$ and $c_4$: $\T{side}\ X$.
  
Recall, however, that $c_1$ and $c_4$ also 
  have matching \T{scale} constructors.
This is where things get interesting:
  these constructors are involved in a \emph{cycle} 
  (their first child is the parent e-class itself).
If we let AU follow this cycle, the set of anti-unifiers becomes infinite:
  $$
    \au(c_1, c_4) = \{\T{side}\ X\} \cup \{\T{scale}\ p\ 1 \mid p \in \au(c_1, c_4)\}
  $$
Fortunately,
  we can show that $\T{side}\ X$ \emph{dominates}
  all the other patterns from this set,
  because their pattern variables---in this case, just $X$---match
  the same e-classes, but they are also larger
  (in \autoref{sec:egraphs} we show how
  this domination relation lets us prune other patterns,
  not just those caused by cycles).
Hence $\au(c_1, c_4)$ simply returns $\{\T{side}\ X\}$.
% \TODO{In fact it also returns $X$ because they have mismatched constructors as well,
% and we explain this in sec 4. Should we say this here?}

% When AU encounters two e-classes with
%   no matching constructors,
%   it introduces a pattern variable,
%   similarly to term anti-unification
%   (again, keeping tack of the anti-substitution to
%   avoid introducing redundant variables).
Following the same logic for the root e-classes of the two polygons, $c_0$ and $c_2$,
  $\au(c_0, c_2)$ yields that pattern $P_0$~\eqref{eq:p0},
  which is required to learn the optimal abstraction.
% Recall, however,
%   that $c_2$ has a second \T{xform} e-node,
%   which we also need to try.
% This, however, leads to a cycle:
%   $$
%     \au(c_0, c_2) = \{P_0\} \cup \{\T{xform}\ p\ 1 \mid p \in \au(c_0, c_2)\}
%   $$
% The cyclic solution is discarded as non-dominant,
%   for the same reason as explained above.

% \TODO{This generalized to $n$ e-classes in the ``natural way'' ?!}

\mypara{From E-Classes to E-Graphs}
To obtain the set of all candidate abstractions,
we need to perform anti-unification over all pairs of e-classes in the e-graph.
Clearly, these computations have overlapping subproblems
(for example, we have to compute $\au(c_1, c_4)$ 
as part of $\au(c_0, c_2)$ and  $\au(c_0, c_3)$).
To avoid duplicating work, \tool uses an efficient dynamic programming algorithm
that processes all pairs of e-classes in a bottom-up fashion.

\subsection{Candidate Selection via Targeted Common Subexpression Elimination}

We now illustrate the final step of library learning in \tool:
given the set of candidate abstractions generated by e-graph anti-unification,
the goal is to pick a subset that gives the best compression for the corpus as a whole.
%
% This is a non-trivial combinatorial search problem.
%
For example, the candidates generated for the corpus from \autoref{fig:llmt} include:
\[ \begin{array}{rcll}
  f_0 &=& \lambda x\ y \bnd \scale{\repRot{\pSide{x}}{x}{2\pi/x}}{y} &\quad \text{scaled regular $x$-gon} \\[2pt]
  f_1 &=& \lambda x \bnd \repRot{\pSide{x}}{x}{2\pi/x} &\quad \text{regular $x$-gon} \\[2pt]
  f_2 &=& \lambda y \bnd \scale{\repRot{\pSide{6}}{6}{2\pi/6}}{y} &\quad \text{scaled hexagon}
%
%f_0 = \T{\\x y->trans (repeat line x (2pi/x)) y} &\quad \text{scaled polygon}\\
%f_1 = \T{\\x->repeat line x (2pi/x)} &\quad \text{polygon}\\
%f_2 = \T{\\y->trans (repeat line 6 (2pi/6)) y} &\quad \text{scaled hexagon}
\end{array} \]
It is not immediately clear which abstraction is better:
  $f_0$ matches more terms than $f_2$,
  but $f_2$ requires fewer arguments
  (so if we have enough scaled hexagons in the corpus and
  only one octagon, it might be better to leave the
  octagon un-abstracted).
On the other hand,
  $f_1$ might be better,
  since it does not introduce the
  redundant transformation on the blue hexagons.
Finally,
  if we pick $f_0$ \emph{and} $f_2$ together,
  we can also abstract the definition of
  $f_2$ as $f_0\ 6$, thereby getting additional reuse!
As you can see,
  candidate selection is highly non-trivial,
  since it needs to take into account the
  choice between different equivalent programs in the e-graph, %saturated corpus,
  and the fact that some abstractions can be defined using others.

\begin{figure}
  \centering
  \includegraphics[width=.8\textwidth]{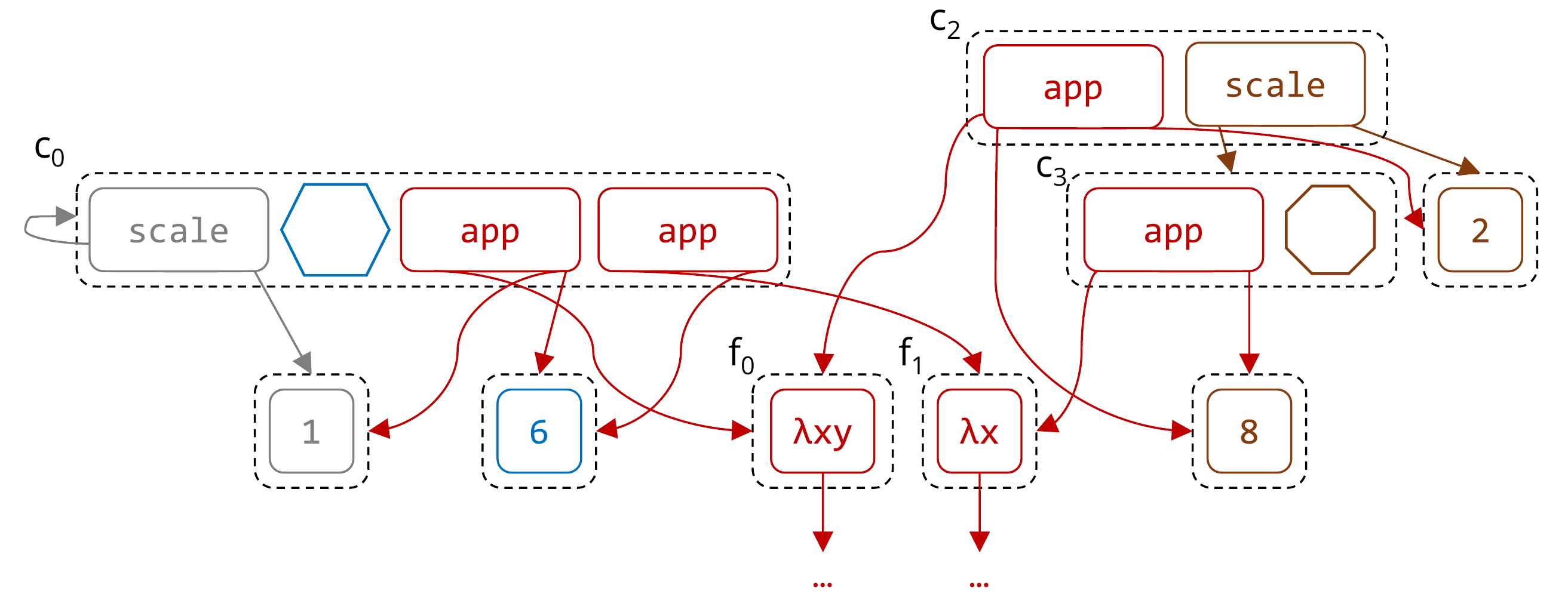}
  \caption{The e-graph from \autoref{fig:egraph} (right) with applications of $f_0$ and $f_1$ depicted in red.
           We show the unchanged parts of the graph representing the unit hexagon and octagon as corresponding shapes;
           we also omit some of the gray e-nodes added in the previous stage.}\label{fig:beam}
\end{figure}

\mypara{Reduction to E-Graph Extraction}
To overcome this difficulty, we once again leverage e-graph and equality saturation.
Our \emphbf{second key insight} is that selecting the optimal subset of abstractions
can be reduced to the problem of extracting the smallest term from an e-graph
in the presence of common sub-expressions.

To illustrate this reduction,
% let us again consider the part of the corpus depicted on \autoref{fig:egraph} (right),
% that is the saturated version of the blue and the brown polygons, and
let us limit our attention to only two candidate abstractions, $f_0$ and $f_1$, defined above.
\tool converts each of the candidate patterns and its corresponding abstraction into a rewrite rule
that introduces a \emph{local} $\lambda$-abstraction followed by application into the corpus;
for our two abstractions these rules are:
\begin{align}
  % \scale{\repRot{\pSide{X}}{X}{2\pi/X}}{Y} &\Rightarrow (\lambda x\ y \bnd \scale{\repRot{\pSide{x}}{x}{2\pi/x}}{y})\ X\ Y\\ 
  % \repRot{\pSide{X}}{X}{2\pi/X} &\Rightarrow (\lambda x \bnd \repRot{\pSide{x}}{x}{2\pi/x})\ X 
  \scale{\repRot{\pSide{X}}{X}{2\pi/X}}{Y} &\Rightarrow f_0\ X\ Y\label{eq:rw1}\\ 
  \repRot{\pSide{X}}{X}{2\pi/X} &\Rightarrow f_1\ X \label{eq:rw2}
\end{align}
The result of applying these rules to the e-graph from \autoref{fig:egraph} (right)
is shown in \autoref{fig:beam}.
For example, you can see that the e-class $c_2$ (which represents the scale-2 octagon)
now stores an alternative representation:
% \T{let f0 = \\x y->trans (repeat line x (2pi/x)) y in f0 2 8}.
% $\T{let}\ f_0 = \lambda x\ y \bnd \scale{\repRot{\pSide{x}}{x}{2\pi/x}}{y}\ \T{in}\ f_0\ 2\ 8$.
% $(\lambda x\ y \bnd \scale{\repRot{\pSide{x}}{x}{2\pi/x}}{y})\  2\ 8$.
$f_0\  8\ 2$.
The e-class $c_0$ (unit hexagon)
has representations using either $f_0$ or $f_1$,
because this class matches both rewrite rules \eqref{eq:rw1} and \eqref{eq:rw2} above. % pattern above.
Note also that because the definitions of the $\lambda$-abstractions are also stored in the e-graph,
equality saturation can use the above rewrite rules inside their bodies,
to make one function use another:
for example, one term stored for the definition of $f_0$ is
$$
\lambda x\ y \bnd \scale{f_1\ x}{y}.
$$

Once we have built the version of the e-graph with local lambdas for all the candidate abstractions,
all that is left is to extract the smallest term out of this e-graph.
The tricky part is that we want to count the size of the duplicated lambdas only once.
For example, in \autoref{fig:beam}, $f_0$ is applied twice (in $c_0$ and $c_2$);
if term extraction were to choose both of these e-nodes,
we want to treat their first child (the definition of $f_0$) as a common sub-expression,
whose size contributes to the final expression only once.
Intuitively, this is because we can ``float'' these lambdas 
into top-level \T{let}-bindings,
thereby defining $f_0$ only once, and replacing each local lambda with a name.

Extraction with common sub-expressions is a known but notoriously hard problem,
which is traditionally reduced to integer linear programming (ILP)~\cite{tensat,spores}.
Because the scalability of the ILP approach is very limited,
we have developed a custom extraction algorithm,
which scales better by using domain-specific knowledge and approximation.

\mypara{Extraction Algorithm}
The main idea for making extraction more efficient is that
for library learning we are only interested in sharing a certain type of sub-terms:
namely, the $\lambda$-abstractions.
Hence for each e-class we only need to keep track
of the the best term for each possible library
(\ie each subset of $\lambda$-abstractions).
More precisely, for each e-class and library,
we keep track of
  (1) the smallest size of the library
  (2) the smallest size of the program refactored using this library 
  (counting the $\lambda$-abstractions as a single node).
We compute and propagate this information bottom-up through the e-graph.
Once this information is computed for the root e-class
(that represents the entire corpus),
we can simply choose the library with the smallest total size.

For example, for the e-class $c_2$ from \autoref{fig:beam},
with the empty library $\emptyset$, the size of library is 0 and the size of the smallest program (\scale{\repRot{\pSide{8}}{8}{2\pi/8}}{2}) is 9;
with the library $\{f_0\}$, the size of the library is 9 and the size of the smallest program ($f_0\ 2\ 8$) is 3;
while with the library $\{f_1\}$, the size of the library is 7 and the size of the smallest program (\scale{f_1\ 8}{2}) is 4.
% For c_0: $\emptyset$ -> 6, $\{f_0\}$ -> 3, $\{f_1\}$ -> 2.
% For c_2: $\emptyset$ -> 8, $\{f_0\}$ -> 3, $\{f_1\}$ -> 4.
%
Clearly for this e-class introducing library functions is not paying off yet ($0 + 9 < 7 + 4 < 9 + 3$),
since each one can be only used once.
This situation changes, however, as we move up towards the root.
Already for the parent e-class of $c_0$ and $c_2$,
the cost of introducing $\{f_0\}$ and $\{f_1\}$ is amortized:
the size of the smallest program is 17 with the empty library and 7 with either $\{f_0\}$ or $\{f_1\}$,
so $\{f_1\}$ is already worthwhile to introduce ($9 + 7 < 0 + 17$).
Including even more programs with scaled polygons eventually makes the library $\{f_0\}$ the most profitable of the four subsets.

Since in a larger corpus, keeping track of all subsets of candidate abstractions is not feasible,
\tool provides a way to trade off scalability and precision
by using a \emph{beam search} approach.

% \mypara{Approximate Extraction with Beam Search}
% %
% The algorithm described so far is precise,
% but does not scale to large sets of candidate abstractions,
% since it has to track all subsets.
% %
% \tool provides a way to trade off scalability and precision
% by using a \emph{beam search} approach.
% %
% To reduce the set of libraries that needs to be stored in each e-class,
% the user can provide bounds both on
%   (1) the maximum \emph{size} of the library, and
%   (2) the maximum number of distinct best libraries considered for each e-class (\ie the \emph{beam size}).
% %
% \TODO{Should we talk about partial-order reduction here?}

\section{Library Learning over Terms}\label{sec:au}

In this section
  we formalize the problem of
  library learning over a corpus of terms and
  motivate our first core contribution:
  generating candidate abstractions via anti-unification.
\autoref{sec:egraphs} generalizes this formalism to library learning over an e-graph.
For simplicity of exposition,
our formalization of library learning is \emph{first order},
that is, the initial corpus does not itself contain any $\lambda$-abstractions,
and all the learned abstractions are first-order functions
(the \tool implementation does not have this limitation).

% The goal of this section is to formalize the core algorithm of LLMT---%
% e-graph anti-unification---%
% and to show that this algorithm is complete as a candidate generator for library learning:
% that is, it does not discard optimal candidate abstractions.
% %
% We first consider library learning over a term
% and then generalize to an e-graph.
% We build up to the full algorithm in steps:
% starting with abstracting two terms at the top level,
% and then generalizing to all subterms in a program,
% and finally to all e-classes in an e-graph.

\subsection{Preliminaries}

\mypara{Terms}
A \emph{signature} $\Sigma$ is a set of constructors,
each associated with an arity.
$\termsof{\Sigma}$ denotes the set of \emph{terms} over $\Sigma$,
defined as the smallest set containing all $s(t_1,\dots,t_k)$
where $s \in \Sigma$, $k = \arity(s)$, and $t_1,\dots,t_k\in\termsof{\Sigma}$.
We abbreviate nullary terms of the form $s()$ as $s$.
The \emph{size} of a term $\sz(t)$ is defined in the usual way
(as the number of constructors in the term).
We use $\subterms(t)$ to denote the set of all subterms of $t$ (including $t$ itself).
We assume that $\Sigma$ contains a dedicated \emph{variadic} tuple constructor,
written $\langle t_1, \ldots, t_n \rangle$,
which we use to represent a corpus of programs as a single term
($\langle \rangle$ does not contribute to the size of a term).

\mypara{Patterns}
If $\mathcal{X}$ is a denumerable set of variables,
$\patsof{\Sigma}{\mathcal{X}}$ is a set of \emph{patterns},
\ie terms that can contain variables from $\mathcal{X}$.
A pattern is \emph{linear} if each of its variables occurs only once:
$\forall X \in \fv(p) . \occ(X, p) = 1$.
A \emph{substitution} $\sigma = [\subst{X_1}{p_1}, \ldots, \subst{X_n}{p_n}]$
is a mapping from variables to patterns.
%
% We are often interested in \emph{grounding substitutions},
% where all right-hand sides are terms.
% that maps each $X_i$ to $p_i$ and is identity elsewhere.
%
We write $\sapp{\sigma}{p}$ to denote the application of $\sigma$ to pattern $p$,
which is defined in the standard way.
We define the size of a substitution $\sz(\sigma)$ as the total size of its right-hand sides.

A pattern $p$ is more general than (or \emph{matches}) $p'$,
written $p' \sub p$,
if there exists $\sigma$ such that $p' = \sapp{\sigma}{p}$;
we will denote such a $\sigma$ by $\match{p'}{p}$.
For example $X + 1 \sub X + Y$ with $\match{X + 1}{X + Y} = [\subst{Y}{1}]$.
The relation $\sub$ is a partial order on patterns,
and induces an equivalence relation $p_1 \sim p_2 \triangleq p_1 \sub p_2 \wedge p_2 \sub p_1$
(equivalence up to variable renaming).
In the following, we always distinguish patterns only up to this equivalence relation. 

The \emph{join} of two patterns $p_1 \join p_2$---also called their \emph{anti-unifier}---%
is the least general pattern that matches both $p_1$ and $p_2$;
the join is unique up to $\sim$.
Note that $\langle \patsof{\Sigma}{\mathcal{X}}, \sub, \join, \top = X \rangle$ is a join semi-lattice
(part of Plotkin's \emph{subsumption lattice}~\cite{plotkin1970lattice}).
Consequently, the join can be generalized to an arbitrary set of patterns.

A \emph{context} $\ctx$ is a pattern with a single occurrence of a distinguished variable $\circ$.
We write $\ctx[p]$---$p$ in context $\ctx$---as a syntactic sugar for $\sapp{[\subst{\circ}{p}]}{\ctx}$.
A \emph{rewrite rule} $R$ is a pair of patterns, written $p_1 \Rightarrow p_2$.
Applying a rewrite rule $R$ to a term or pattern $p$, written $\sapp{R}{p}$ is defined in the standard way:
$\sapp{R}{p} = \sapp{(\match{p}{p_1})}{p_2}$ if $p \sub p_1$ and undefined otherwise.
A pattern $p$ can be \emph{re-written in one step} into $q$ using a rule $R$,
written $p \rwsteps{R} q$, 
if there exists a context $\ctx$ such that $p = \ctx[p']$ and $q = \ctx[\sapp{R}{p'}]$.
The reflexive-transitive closure of this relation is the \emph{rewrite relation} $\rewrites{\ruleset}$,
where $\ruleset$ is a set of rewrite rules.

\mypara{Compressed Terms}
Compressed terms $\absof{\Sigma}{\mathcal{X}}$ are of the form:
$$
\hat{t} ::= X   \mid   s(\hat{t}_1,\ldots,\hat{t}_k)     \mid (\lambda X_1 \ldots X_n \bnd \hat{t})\ \hat{t}_1\ \ldots\ \hat{t}_n  
$$
In other words,
compressed terms may contain applications of a $\lambda$-abstraction with zero or more binders
to the same number of arguments.
Note that this is a first-order language in the sense that all abstractions are fully applied.
%
  % \footnote{Hereafter, $\many{a}$ stands for a sequence of elements of syntactic class $a$ and $\emptyseq$ denotes the empty sequence.}  
%  compressed terms may have lambdas,
%  which \emph{may} reduce the overall size of the term
%  if the same function is used in multiple locations. 
%
Importantly, we define $\sz(\hat{t})$ in such a way 
that multiple occurrences of a $\lambda$-abstraction are \emph{only counted once}.
% as a result, introducing abstractions may reduce the overall size of the term
% if the same function is used in multiple locations. 
%
For simplicity of accounting, the size of an application $(\lambda \many{X} \bnd \hat{t}_1)\ \many{\hat{t}_2}$
is defined as $\sz(\hat{t}_1) + \sum{\many{\sz(\hat{t}_2)}}$,
that is, abstraction nodes themselves do not add to the size.%
\footnote{Hereafter, $\many{a}$ stands for a sequence of elements of syntactic class $a$ and $\emptyseq$ denotes the empty sequence.}  

% (but this does not make any qualitative difference).
%
% We introduce the standard syntactic sugar $\lambda \many{X} \bnd \hat{t}$%
% for nested abstractions,
% $\hat{t}\ \many{\hat{t}}$ for nested applications,
% and $\T{let}\ X = \hat{t}_1\ \T{in}\ \hat{t}_2$
% for $(\lambda X \bnd \hat{t}_2)\ \hat{t}_1$.

\emph{Beta-reduction} on compressed terms, denoted $\hat{t}_1 \betared \hat{t}_2$, is defined in the usual way:
\[
  (\lambda \many{X} \bnd \hat{t}_1)\ \many{\hat{t}_2}  \betared \sapp{[\many{X \mapsto \hat{t}_2}]}{\hat{t}_1} 
\]
where substitution on compressed terms is the standard capture-avoiding substitution for $\lambda$-calculus.
Note that because our language is first-order and has no built-in recursion,
it is strongly normalizing
(the proof of this statement, as well as other proofs omitted from this section, can be found in the supplementary material).
Hence, the order of $\beta$-reductions is irrelevant,
and without loss of generality we can define \emph{evaluation} on compressed terms, $\hat{t}_1 \evals \hat{t}_2$,
to follow applicative order, \ie innermost $\beta$-redexes are reduced first.
As a result, any application reduced during evaluation
has the form $(\lambda \many{X} \bnd p_1)\ \many{p_2}$,
that is, neither the body $p_1$ nor the actual arguments $\many{p_2}$ contain any redexes
(and hence any $\lambda$-abstractions).
This simplifies several aspects of our formalization;
for example, there is no need for $\alpha$-renaming, 
since with no binders in $p_1$, no variable capture can occur.

\mypara{Problem Statement}
%
% We will say that $\hat{t} \in \absof{\Sigma}{\mathcal{X}}$ 
% is a \emph{compression of} the term $t \in \termsof{\Sigma}$ 
% if $\hat{t} \evals t$ 
% ($\hat{t}$ evaluates to $t$).
%
We can now formalize
  the \emph{library learning problem} as follows:
  given a term $t \in \termsof{\Sigma}$,
  the goal is to find the smallest compressed term $\hat{t} \in \absof{\Sigma}{\mathcal{X}}$
  that evaluates to $t$ (\ie $\hat{t} \evals t$).
The reason such $\hat{t}$ may be smaller than $t$,
  is that it may contain multiple occurrences of the same $\lambda$-abstraction
  (applied to different arguments),
  whose size is only counted once.
An example is shown
  in \autoref{fig:lib-learning-example}.

Although in full generality the solution might include nested lambdas with free variables
(defined in the outer lambdas),
in the rest of the paper we restrict our attention to \emph{global library learning},
where all lambdas are closed terms.
This is motivated by the purpose of library learning
to discover reusable abstraction for a given problem domain.
The solution in \autoref{fig:lib-learning-example} already has this form.

\begin{figure}
  \begin{minipage}{.5\textwidth}
    \begin{align*}
      \langle &f(g(a) + g(a)) + (g(1) + h(2))\\
      ,\ &f(g(b) + g(b)) + (g(3) + h(4))\\
      ,\ &g(5) + h(6) \rangle
    \end{align*}
  \end{minipage}%
  \begin{minipage}{.5\textwidth}
    \begin{align*}
      &\langle \hat{f}_2\ g(a)\ 1\ 2, \hat{f}_2\ g(b)\ 2\ 3, \hat{f}_1\ 5\ 6\rangle\quad \text{where}\\
      &\hat{f}_1 = \lambda Y\ Z \bnd g(Y) + h(Z)\\
      &\hat{f}_2 = \lambda X\ Y\ Z \bnd f(X + X) + \hat{f}_1\ Y\ Z
    \end{align*}
  \end{minipage}
  \caption{Library learning.
           Left: initial term (size 29).
           Right: an optimal solution with two abstractions, one of which uses the other (size 26).
           A solution with $\hat{f}_2 = \lambda X\ Y\ Z \bnd f(g(X) + g(X)) + \hat{f}_1\ Y\ Z$ also has size 26.}\label{fig:lib-learning-example}
\end{figure}

\subsection{Pattern-Based Library Learning}

At a high-level, our approach to library learning
is to use \emph{patterns} that occur in the original corpus
as candidate bodies for $\lambda$-abstractions in the compressed corpus.
Looking at the example in \autoref{fig:lib-learning-example},
it is not immediately obvious that using just the patterns from the original corpus is sufficient,
since the body of $f_2$ contains an application of $f_1$.
Perhaps surprisingly, this is not an issue:
the key idea is that we can compress $t$ into $\hat{t}$ by inverting the evaluation $\hat{t} \evals t$,
and because the evaluation order is applicative,
the rewritten sub-term at every step will not contain any $\beta$-redexes.

\mypara{Compression}
More formally, given a pattern $p$,
let us define its \emph{compression rule} (or $\kappa$-rule for short)
as the rewrite rule
\[
  \kappa(p) \; \triangleq \; p \Rightarrow (\lambda \many{X} \bnd p)\ \many{X} \quad \text{where}\quad \many{X} = \fv(p)
\]
In other words, $\kappa(p)$ will replace any term 
matching $p$ with an application of a function whose body is \emph{exactly} $p$.
For example, if $p = g(Y) + h(Z)$,
its $\kappa$-rule is $g(Y) + h(Z) \Rightarrow (\lambda Y\ Z \bnd g(Y) + h(Z))\ Y\ Z$.
Note that on the right-hand side of this rule 
only the \emph{free} occurrences of $Y$ and $Z$ will be substituted during rewriting;
the bound $Y$ and $Z$ will be left unchanged, following the usual semantics of substitution for $\lambda$-calculus.
For example, this rule can rewrite the third term in \autoref{fig:lib-learning-example} (left) as follows:
\[
  g(5) + h(6) \;\rwsteps{\kappa(g(Y) + h(Z))}\; (\lambda Y\ Z \bnd g(Y) + h(Z))\ 5\ 6
\]
%
% We will refer to a rewrite step such as the one above as a $\kappa$-step.
%
A sequence of $\kappa$-rewrites $t \rwsteps{\kappa(p_1)} \dots \rwsteps{\kappa(p_n)} \hat{t}$,
where all $p_i \in \patset$, 
is called a \emph{compression} of $t$ into $\hat{t}$ using patterns $\patset$
and written $t \rewrites{\kappa(\patset)} \hat{t}$.
We can now show that the library learning problem is equivalent to finding the smallest compression of $t$
using only patterns that occur in $t$.

\begin{theorem}[Soundness and Completeness of Pattern-Based Library Learning]
  For any term $t \in \termsof{\Sigma}$ and compressed term $\hat{t} \in \absof{\Sigma}{\mathcal{X}}$:
  \begin{description}[font=\itshape]
    \item[(Soundness)] If $t$ compresses into $\hat{t}$, then $\hat{t}$ evaluates to $t$: $\forall \patset . t \rewrites{\kappa(\patset)} \hat{t} \Longrightarrow \hat{t} \evals t$.
    \item[(Completeness)] If $\hat{t}$ is a solution to the (global) library learning problem, 
    then $t$ compresses into $\hat{t}$ using only patterns that have a match in $t$:
    $\hat{t} \in \argmin_{\hat{t}' \evals t} \sz(\hat{t}') \Longrightarrow t \rewrites{\kappa(\patset)} \hat{t}$, 
    where $\patset = \{p \in \patsof{\Sigma}{\mathcal{X}} \mid t' \in \subterms(t) , t' \sub p\}$.
  \end{description}
\label{thm:pat-lib-learning}
\end{theorem}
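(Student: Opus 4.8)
I would prove the two directions separately, in each case by matching a single $\kappa$-step to a single $\beta$-step and then lifting the correspondence to whole sequences. For \emph{Soundness}, the key observation is that every $\kappa$-step is undone by exactly one $\beta$-reduction: if $p \rwsteps{\kappa(p_0)} q$ through a context $\ctx$ and a subterm $p' \sub p_0$ with $\sigma = \match{p'}{p_0}$, then $q = \ctx[(\lambda \many X \bnd p_0)\ \many{\sigma(X)}]$ with $\many X = \fv(p_0)$, and the freshly created redex reduces as $(\lambda \many X \bnd p_0)\ \many{\sigma(X)} \betared \sapp{\sigma}{p_0} = p'$, so $q \betared p$. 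Applying this to each step of a compression $t \rewrites{\kappa(\patset)} \hat t$ and reversing the order produces a reduction $\hat t \betared \cdots \betared t$. Since $t$ is $\lambda$-free it is a normal form, and because the calculus is strongly normalizing and confluent (so the order of reductions is irrelevant), $t$ is \emph{the} normal form of $\hat t$; hence applicative-order evaluation reaches it and $\hat t \evals t$.

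For \emph{Completeness} I would invert the evaluation $\hat t \evals t$. First I would establish a lemma that an optimal $\hat t$ has \emph{no useless binders}: if an abstraction bound a variable absent from its body, deleting that binder together with the corresponding argument at every application site would give a compressed term that still evaluates to $t$ (the discarded argument is never used) but has strictly smaller size, contradicting optimality. Together with the global-library hypothesis that all abstractions are closed, this yields that the binders of every abstraction are exactly the free variables of its body. I would then take the applicative reduction $\hat t = u_0 \betared \cdots \betared u_n = t$; by the applicative-order property already noted in the excerpt, the redex contracted at each step $i$ is $(\lambda \many{X_i} \bnd p_1^{(i)})\ \many{p_2^{(i)}}$ with $p_1^{(i)}$ and $\many{p_2^{(i)}}$ free of redexes (hence of $\lambda$-abstractions), so each $p_1^{(i)}$ is a genuine pattern with $\fv(p_1^{(i)}) = \many{X_i}$.

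The argument then has two halves. First, each step is reversed by the $\kappa$-rule of its body: at the redex position $u_{i+1}$ holds $p'_i := \sapp{\sigma_i}{p_1^{(i)}} \sub p_1^{(i)}$, and since $\fv(p_1^{(i)})$ coincides with the redex binders, applying $\kappa(p_1^{(i)})$ there rebuilds exactly $(\lambda \many{X_i} \bnd p_1^{(i)})\ \many{p_2^{(i)}}$ and recovers $u_i$. Reversing all $n$ steps gives $t \rewrites{\kappa} \hat t$ and reconstructs $\hat t$ exactly, up only to the order in which binders are listed (which affects neither size nor semantics); crucially, because inner redexes contracted under binders are reversed only after the enclosing abstraction has been rebuilt, even nested library functions (an abstraction whose body calls another) are recovered rather than inlined. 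Second, I must check $p_1^{(i)} \in \patset$, i.e.\ that $p_1^{(i)}$ is matched by some subterm of $t$. For this I would track the residual $p'_i$: since $\hat t$ has no useless binders, at least one copy of $p'_i$ survives the remaining reductions, and because $t$ is ground those reductions instantiate the free (outer-binder) variables of $p'_i$ with ground terms, leaving a ground subterm $t' \in \subterms(t)$ with $t' \sub p'_i \sub p_1^{(i)}$; this witnesses $p_1^{(i)} \in \patset$.

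I expect the survival-to-a-ground-instance step to be the main obstacle, since it requires reasoning about how subterms are copied, discarded, or instantiated over an entire applicative reduction; the no-useless-binders lemma is precisely what excludes the bad case in which a body's residual is discarded before it reaches $t$. A secondary subtlety is the bookkeeping that reducing and reversing redexes \emph{under binders} is what allows nested abstractions to be rebuilt in their shared form rather than inlined, and that the $\kappa$-rule's canonical binder ordering forces the reconstruction to agree with $\hat t$ only up to a harmless permutation of each abstraction's binders.
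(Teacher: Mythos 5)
Your proposal is correct and follows essentially the same route as the paper: soundness by inverting each $\kappa$-step into a single $\beta$-step, and completeness by using applicative order so that redex bodies are genuine patterns, ruling out unused binders via minimality and open abstractions via the global-library restriction, and then showing every pattern used has a match in $t$. The only cosmetic difference is in that last step: you track residuals forward along the reduction $\hat{t} \evals t$, whereas the paper proves the mirror-image lemma that a $\kappa$-step reflects pattern matches backwards (with a four-case analysis of where the match sits relative to the introduced redex) and inducts over the compression sequence.
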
 
The proof can be found in the supplementary material. % \autoref{app:proofs}.

\mypara{Example}
Consider once again the library learning problem in \autoref{fig:lib-learning-example}.
Here the set of patterns used to compress the original corpus into the solution on the right is:
\[
  p_1 = g(Y) + h(Z)\quad\quad\quad
  p_2 = f(X + X) + g(Y) + h(Z)
\]
Rewriting the first term of the corpus proceeds in two steps (the redexes of $\kappa$-steps are highlighted):
\begin{multline*}
  \textcolor{darkblue}{f(g(a) + g(a)) + (g(1) + h(2))}    \;\rwsteps{\kappa(p_2)}\\     
  (\lambda X\ Y\ Z \bnd f(X + X) + \textcolor{darkblue}{g(Y) + h(Z)})\ g(a)\ g(a)\ (g(1) + h(2))\;\rwsteps{\kappa(p_1)}\\ 
  (\lambda X\ Y\ Z\ \bnd f(X + X) + (\lambda Y\ Z \bnd g(Y) + h(Z))\ Y\ Z)\ g(a)\ g(a)\ (g(1) + h(2))
\end{multline*}
In other words, we first rewrite the entire term using $p_2$,
and then rewrite inside the body of the introduced abstraction using $p_1$
(note that this order of compression is the inverse of the applicative evaluation order).
The second term of the corpus compresses analogously;
the third term compresses in a single step using $p_1$.
Although this is not obvious from the rewrite sequence above,
the resulting compressed corpus is indeed smaller than the original
thanks to sharing of both $\lambda$-abstractions,
as illustrated on the right of \autoref{fig:lib-learning-example}.

\mypara{Library Learning as Term Rewriting}
Theorem~\ref{thm:pat-lib-learning} reduces library learning to a \emph{term rewriting} problem.
Namely, given a term $t$ and a finite set of rewrite rules $\ruleset = \{\kappa(p) \mid  t' \in \subterms(t) , t' \sub p\}$,
% that consists of $\kappa$-rules for all patterns that have a match in $t$,
our goal is to find a minimal-size term $\hat{t}$ such that $t \rewrites{\ruleset} \hat{t}$,
which is a standard formulation in term rewriting.
Unfortunately, this particular problem is notoriously difficult because
(a) the rule set $\ruleset$ is very large for any non-trivial term $t$, and
(b) our $\sz$ function is non-local (it takes sharing into account)
In the rest of this section we discuss how we can prune the rule set $\ruleset$ 
to reduce it to a tractable size.
\autoref{sec:beam} discusses how we tackle the remaining term rewriting problem
using the \emph{equality saturation} technique~\cite{tate2009equality,willsey2021egg}.

\subsection{Pruning Candidate Patterns}
\label{sec:pruning}

In this section, we discuss which patterns can be discarded from consideration
when constructing the set of $\kappa$-rules $\ruleset$ for the term rewriting problem.

% In this section, 
%  we consider candidate generation
%  starting from the naive but complete
%  approach of generating the infinite set of all patterns $\patsof{\Sigma}{\mathcal{X}}$.
% We show that
%  several classes of patterns are provably suboptimal,
%  eventually reducing the set of viable candidates to a 
%  finite set of patterns that can be enumerated efficiently.

\mypara{Cost of a Pattern}
Consider a compression $t \rewrites{\kappa(\patset)} \hat{t}$
where each pattern $p \in \patset$ is used some number $n$ times, 
with substitutions $\sigma^p_1, \ldots, \sigma^p_n$.
We can break down the total amount of compression into contributions of individual patterns:
\[
  \sz(\hat{t}) - \sz(t) = \sum_{p \in \patset} \mathsf{cost}(p, \{\sigma^p_1, \dots \sigma^p_n\})
\]
The cost of a pattern $p$, in turn, consists of three components.
The cost of \emph{introducing} the abstraction is the size of its body, \ie $\sz(p)$.
The cost of using an abstraction---%
$\textsf{use}(p, \sigma)$ \eqref{eq:use}---%
 includes the application itself and the size of the arguments.
The cost saved by using an abstraction---%
$\textsf{save}(p, \sigma)$ \eqref{eq:save}---%
 is just the cost of the term matched by $p$
 (\ie the redex of the corresponding $\kappa$-step).
%  rewritten in terms of the size of $p$ itself 
%  and the size of subterms in the substitution.
%
\begin{align}
  % \intro(p) &= \sz(p) \label{eq:intro}\\
  \textsf{use}(p, \sigma) &= 1 + \sum_{X \in \fv(p)} \sz(\sigma(X)) \label{eq:use} \\
  \textsf{save}(p, \sigma) &= \sz(\sapp{\sigma}{p}) = 
  \sz(p) + \sum_{X \in \fv(p)} \occ(X, p) \cdot (\sz(\sigma(X)) - 1)\label{eq:save}
\end{align}
The total cost of $p$ is 
 the cost of introducing the abstraction paid a single time,
 plus the cost of each use, minus what you save for each application:
\begin{equation}
  \textsf{cost}(p, \{\sigma_1,\ldots,\sigma_n\}) 
  = \intro(p) + \sum_{\sigma_i} (\textsf{use}(p, \sigma_i) - \textsf{save}(p, \sigma_i))
\end{equation}
When $p$ is linear (all $\occ(X, p) = 1$), 
 the cost depends only on $n$ but not on the substitutions $\sigma_i$:
\begin{align}
  \textsf{cost}(p, \{\sigma_1,\ldots,\sigma_n\}) 
  &= \intro(p) + \sum_{\sigma_i} \left(
    1 - \sz(p) + |\fv(p)|
  \right) \\
  &= \intro(p) + n \cdot \left(
    1 - \sz(p) + |\fv(p)|
  \right)
\end{align}
We can show that a pattern $p$ with a \emph{non-negative} cost can be safely discarded,
that is: there exists another compression using only $\patset \setminus \{p\}$,
whose result is at least as small.
% A pattern is only worth including into the compression if its \textsf{cost} is negative.

\mypara{Trivial Patterns}
Based on this analysis, any linear pattern $p$
 with $\sk(p) \leq 1$ can be discarded,
 where $\sk(p) = \sz(p) - |\fv(p)|$ is the size of $p$'s ``skeleton'', \ie it's body without the variables.
Intuitively, the skeleton of $p$ is simply too small to pay for introducing an application.
In this case, $\mathsf{cost}(p,\_) > 0$ \emph{independently} of how many times it is used.
We refer to such patterns as \emph{trivial}.
Examples of trivial patterns are $X$ and $X + Y$.

% \mypara{Patterns with No Matches}
% %
% Like in prior work~\cite{dreamcoder},
% we can trivially discard patterns that do not match any subterm of $t$
% (where $n=0$ above).
% %
% In this case we only pay for adding a definition,
% but never get to gain from applying it,
% so such a pattern clearly cannot be part of an optimal solution.

\mypara{Patterns with a Single Match}
We can show that patterns with only a single match in the corpus can also be discarded.
If $p$ has a single match in $t$,
then it can appear \emph{at most once} in any compression of $t$.
If $p$ is linear, $\mathsf{cost}(p, \_) = 1 + |\fv(p)|$,
which is always positive, so $p$ can be discarded.
But what about non-linear patterns, where even a single $\kappa$-step can decrease size thanks to variable reuse?
It turns out that any non-linear pattern with a single match
can always be replaced by a nullary pattern (with no variables) that is more optimal.

Without loss of generality, assume that $p$ has a single variable $X$ that occurs $m > 1$ times,
and let its sole $\kappa$-step be $\sapp{\sigma}{p} \rwsteps{\kappa(p)}  (\lambda X \bnd p)\  \sapp{\sigma}{X}$.
The size of the right-hand side is $\sz(p) + 1 + \sz(\sapp{\sigma}{X})$,
or, rewritten in terms of $p$'s skeleton: $1 + (\sk(p) + m) + \sz(\sapp{\sigma}{X})$.
Instead, we can rewrite the same redex $\sapp{\sigma}{p} \rewrites{\{R\}} p'$
using $m$ applications of the rule $R \;\triangleq\; \sapp{\sigma}{X} \Rightarrow (\lambda \epsilon \bnd \sapp{\sigma}{X})\ \epsilon$.
This is a $\kappa$-rule for a \emph{nullary} pattern $\sapp{\sigma}{X}$ with no variables
(hence the corresponding $\lambda$-abstraction has zero binders).
The size of $p'$ obtained in this way is $\sz(\sapp{\sigma}{X}) + m + \sk(p)$
(where the former is the size of the shared $\lambda$-abstraction, $m$ is the number of applications, 
and $\sk(p)$ is the size of the term around the applications).
As you can see, this term is one smaller than the one we get by applying $p$.
Intuitively, this result says that instead of using a non-linear pattern that occurs only once,
it is better to perform common sub-expression elimination.

\mypara{Parameterization Lattice}
Eliminating from consideration all patterns with fewer than two matches in the corpus
suggests an algorithm for generating a complete set $\patset$ of candidate patterns:
  (1) start from the set $\plau{t} = \{t_1 \join t_2 \mid t_i \in \subterms(t)\}$
      of all pairwise joins of subterms of the input program,
  (2) explore all elements of the subsumption semi-lattice above those patterns,
      by gradually replacing sub-patterns with variables,
      until we hit trivial patterns at the top of the lattice.
We will refer to this semi-lattice above $\plau{t}$ as the \emph{parametrization lattice} of $t$, 
denoted $\pl{t}$.
A fragment of $\pl{t}$ for $t = \langle f(a+b), f(a+c), f(b+c)\rangle$ is shown in \autoref{fig:lattice} (left).

% The lattice $\pl{t}$ gives us an algorithmic way of exploring the space of viable patterns:
% (1) start by computing $\plau{t}$ by anti-unifying all pairs of subterms of $t$, then
% (2) traverse the lattice upwards by gradually replacing sub-patterns with variables,
% until we hit trivial patterns at the top of the lattice.

% The discussion above suggests the following definition of the set of candidate patterns $\pl{t}$ for compressing a term $t$:
% $\pl{t}$ must include all pairwise joins of subterms of $t$ and all their minimal generalizations
% (except if any of these patterns is trivial).

% To compute $\pl{t}$, we define \emph{generalizing anti-unification} $\gau(t_1, t_2)$, 
% which takes two terms and computes a set of patterns,
% consisting of their join and all its minimal generalizations.
% %
% This function is formalized in \autoref{fig:gau}.
% %
% The main difference from traditional term anti-unification is highlighted in blue:
% when the root constructor of two terms is the same,
% $\gau$ still includes a a variable (in addition to recursing into subterms);
% this enables $\gau$ to include minimal generalizations of join.
% %
% Note that unlike some traditional presentations of anti-unification,
% we do not keep track of the anti-substitution 
% to avoid abstracting the same pairs of terms into different variables;
% instead we achieve the same effect simply by naming the variables $X_{t_1,t_2}$
% after the pair of terms they abstract.

\begin{figure}
  \begin{minipage}{.5\textwidth}
    \centering
    \includegraphics[width=.8\textwidth]{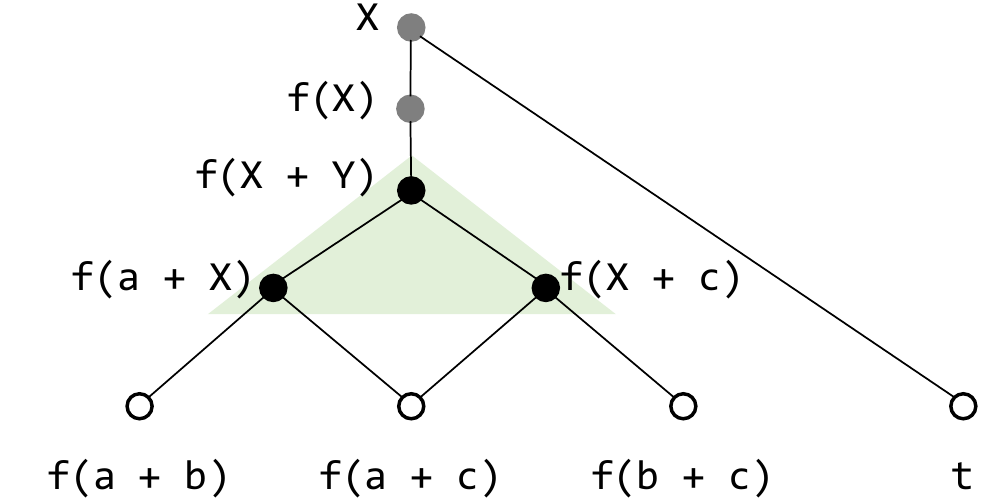}      
  \end{minipage}%
  \begin{minipage}{.5\textwidth}
    \centering
    \includegraphics[width=.8\textwidth]{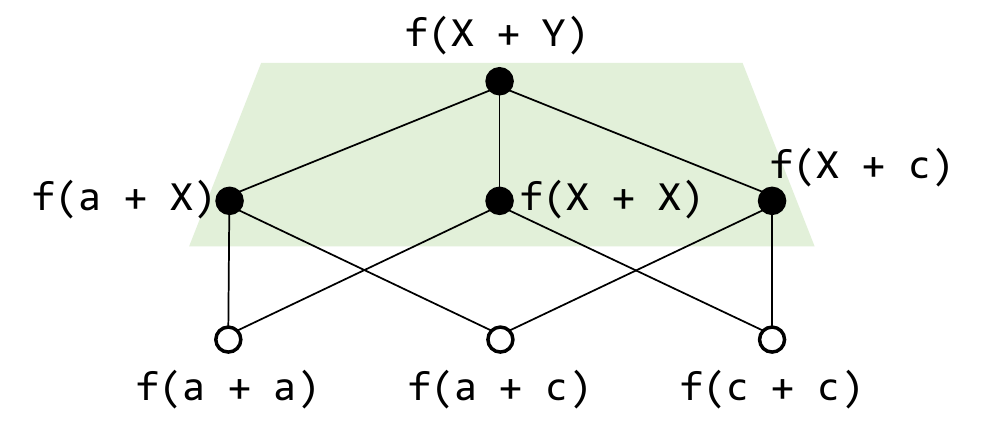}      
  \end{minipage}
  \caption{Left: fragment of the parametrization lattice for the term $t = \langle f(a+b), f(a+c), f(b+c)\rangle$;
           only filled black circles correspond to candidate patterns:
           hollow circles match fewer than two terms and gray circles are trivial.
           Right: an example where is it insufficient to consider pairwise joins
           to obtain an optimal pattern.}
  \label{fig:lattice}
\end{figure}

\mypara{Approximation}
In practice, computing the set $\plau{t}$ is feasible:
although there are quadratically many pairs of subterms,
most of them do not have a common constructor at the root, and hence their join is trivially $X$.
An example is the join of $t$ with any of its subterms in \autoref{fig:lattice} (left).
Unfortunately, generalizing the patterns from $\plau{t}$ 
according to the parameterization lattice (\autoref{fig:lattice}) is expensive.
For this reason, \tool adopts an approximation
and simply uses $\plau{t}$ as the set of candidates.

This approximation makes our pattern generation theoretically incomplete.
Consider a pattern $p \in \pl{t}\setminus \plau{t}$;
%% Nadia: actually, another case if when a more concrete pattern would block a later compression of actual args: is this a real thing?
there are two reasons why we might need $p$ in the optimal compression of $t$:
\begin{enumerate}
  \item there is \emph{no} $p' \in \plau{t}$ with the same set of matches as $p$, or
  \item there is such a $p'$ but it has \emph{few enough matches} 
  that its larger size does not pay off. 
  % despite a lower use cost per match.
\end{enumerate}
%
% Let us consider both options in more detail.

The first kind of incompleteness occurs when $p$ matches a set of subterms $t_1, \ldots, t_n$ ($n > 2$),
whose join is distinct from all their pairwise joins
(otherwise some $t_i \join t_j \in \plau{t}$ would also match all $t_1, \ldots, t_n$).
An example is shown in \autoref{fig:lattice} (right),
where the three subterms in question are $f(a+a)$, $f(a+c)$, and $f(c+c)$.
In this case, an optimal compression might use the pattern $f(X+Y)$ to rewrite all three subterms,
but our approximation would only include the patterns $f(a+X)$, $f(X+X)$, and $f(X+c)$,
each of which can rewrite only two of the three subterms.

The second kind of incompleteness occurs when there exists $p' \in \plau{t}$
that has the same set of matches as $p$, despite being strictly more specific,
and yet using $p'$ instead of $p$ still doe not pay off.
The understand when this happens, consider the difference in costs between $p$ and $p'$, 
assuming that they are both used to rewrite the same $n$ subterms
(\ie their \textsf{save} cost is the same):
\begin{align*}
  \mathsf{cost}(p', \many{\sigma'}) - \mathsf{cost}(p, \many{\sigma}) &= \sz(p') - \sz(p) + \sum_i (\mathsf{use}(p', \sigma'_i) - \mathsf{use}(p, \sigma_i))\\
  &= \sz(p') - \sz(p) + \sum_i (\sz(\sigma'_i) - \sz(\sigma_i))
\end{align*}
Because $p'$ is strictly more specific than $p$, we know that $\sz(p') \geq \sz(p)$,
but all its substitutions $\sigma'_i$ must be strictly smaller than $\sigma_i$.
Hence, with enough uses, $p'$ is bound to become more compressive than $p$;
when there are just a few uses, however, $p$ can be more optimal.
For example, consider the corpus $\langle \ctx[f(1,2,3)],  \ctx[f(4,5,6)]\rangle$,
where $\ctx$ is some sufficiently large context.
Here, a more general pattern $p = \ctx[X]$ is more optimal than the more specific $p' = \ctx[f(X,Y,Z)]$,
because $\sz(p') - \sz(p) = 3$, each use of $p'$ is only one node cheaper,
and there are only two uses. 

Despite the lack of theoretical completeness guarantees, 
we argue that restricting candidate patterns to $\plau{t}$ is a reasonable trade-off.
Note, that the counter-examples above are quite contrived,
and they no longer apply once the corpus contains sufficiently many and sufficiently diverse instances of a pattern
(for example, adding $f(b,b)$ to the first corpus would make $f(X,Y)$ appear in $\plau{t}$,
and adding just one more occurrence of $p'$ into the second corpus would make it as optimal as $p$).
Our empirical evaluation confirms that this approximation works well in practice.

\section{Library Learning modulo Equational Theory}\label{sec:egraphs}

\begin{figure}
  \begin{minipage}{.5\textwidth}
    \textbf{Syntax}
    $$
    \begin{array}{rll}
      \text{e-class ids} & a,b &\in \mathcal{I} \\
      \text{e-nodes}     & n \gramdef s(a_1, \ldots, a_k)  &\in N\\
      \text{e-classes}   & c \gramdef \{n_1, \ldots, n_m\} &\in C     
    \end{array}
    $$      
  \end{minipage}%
  \begin{minipage}{.5\textwidth}
    \textbf{Denotation} $\denot{\cdot}\colon N\to 2^{\termsof{\Sigma}}$, $\denot{\cdot}\colon C\to 2^{\termsof{\Sigma}}$
    $$
    \begin{array}{rl}
      \\
      \denot{s(a_1, \ldots, a_k)} &= \{s(t_1, \ldots, t_k) \mid t_i \in \denot{M(a_i)}\}\\
      \denot{\{n_1, \ldots, n_m\}} &= \bigcup_{i \in [1,m]} \denot{n_i}
    \end{array}
    $$      
  \end{minipage}
  \caption{Syntax, metavariables, and denotation for the components of an e-graph.
           Here $s \in \Sigma$ and $k = \arity(s)$.}\label{fig:syntax}
\end{figure}

% \subsection{Preliminaries}

\mypara{E-Graphs}
Let $\mathcal{I}$ be a denumerable set of \emph{e-class ids}.
An \emph{e-graph} $\mathcal{G}$ is a triple $\langle C, M, r\rangle$,
where $C$ is a set of \emph{e-classes},
$M\colon \mathcal{I} \to C$ is an \emph{e-class map}.
and $r\in \mathcal{I}$ is the root class id.
% \footnote{
%   Other formalizations of e-graphs do not feature a distinguished root class.
%   We include it, because denoting the e-graph to the set of
%   terms represented by its root e-class (as opposed to other e-classes)
%   is useful for our purposes.
% }
%
An \emph{e-class} $c \in C$ is a set of \emph{e-nodes} $n \in N$,
and an e-node is a constructor applied to e-class ids.
The syntax of e-classes and e-nodes is summarized in \autoref{fig:syntax} (left).
% $$
% c ::= \{n_1, \ldots, n_m\}\quad\quad\quad
% n ::= s(a_1, \ldots, a_k) \quad\text{where}\ s\in \Sigma, k = \arity(s), a_i \in \mathcal{I}\\
% $$
An e-graph has to satisfy the \emph{congruence invariant},
which states that the e-graph has no two identical e-nodes
(or alternatively, that all e-classes are disjoint).%
\footnote{In a real e-graph implementation, the definitions of e-graphs and the congruence invariant
are more involved, 
because efficient merging of e-classes requires introducing a non-trivial equivalence relation over e-class ids;
these details are irrelevant for our purposes.
Also, other formalizations of e-graphs do not feature a distinguished root e-class.} 

The \emph{denotation} of an \emph{e-graph}---the set of terms it represents---%
is the denotation of its root e-class $\denot{M(r)}$,
where the denotation of e-classes and e-nodes is defined mutually-recursively
in \autoref{fig:syntax} (right).
Note that the denotation can be infinite if the e-graph has cycles.
% $$
% \denot{\many{n}} = \bigcup_n \denot{n}\quad\quad\quad
% \denot{s(a_1, \ldots, a_k)} = \{s(t_1, \ldots, t_k) \mid t_i \in \denot{M(a_i)}\}
% $$
An e-graph induces an equivalence relation $\eqg$, 
where $t_1 \eqg t_2$ iff there exists an e-class $c \in C$ such that $t_1 \in \denot{c} \wedge t_2\in \denot{c}$.

E-graphs provide means to \emph{extract} the cheapest term from an e-class according to some cost function:
$\extract_{\cost}(a) = \argmin_{t\in \denot{M(a)}} \cost(t)$.
If $\cost$ is \emph{local},
 meaning that the cost of a term can be computed 
 from the costs of its immediate children,
% \todo{david: we don't necessarily mean monotonic---we're referring to a term
% costing at leasting as much as the SUM of its children}
extraction can be done efficiently by a greedy algorithm,
which recursively extracts the best term from each e-class.

\mypara{E-Matching}
E-matching is a generalization of pattern matching to e-graphs,
where matching an e-class $c$ against a pattern $p$ yields a set of \emph{e-class substitutions} 
$\theta\colon \mathcal{X}\to \mathcal{I}$
such that $\sapp{\theta}{p}$ is a ``subgraph'' of $c$.
To formalize this notion, we introduce \emph{partial terms} $\pi \in \patsof{\Sigma}{\mathcal{I}}$,
which are terms whose leaves can be e-class ids
(or, alternatively, patterns with e-class ids for variables).
The containment relation $\contains{\pi}{a}$ for some e-class id $a$ is defined as follows:
$$
\contains{a}{a} \quad\quad\quad
\contains{s(\pi_1, \ldots, \pi_k)}{a}\ \text{iff}\ s(a_1, \ldots, a_k) \in M(a) \wedge \contains{\pi_i}{a_i}
$$
% if either $\pi = a$ or $\pi$'s constructor occurs as an e-node $n$ in $M(a)$ 
% and all $\pi$'s children are recursively contained in $n$'s children.
%
With this definition, a pattern $p$ matches an e-class $a$, $a \sub p$, if there exists an e-class substitution $\theta$,
such that $\contains{\sapp{\theta}{p}}{a}$.
We denote the set of such substitutions as $\matches{a}{p}$.

\mypara{Rewriting and Equality Saturation}
Equality saturation (EqSat)~\cite{tate2009equality,willsey2021egg} 
takes as input a term $t$ and a set of equations that induce an equivalence relation $\semeq$,
and produces an e-graph $\mathcal{G}$ such that $\denot{\mathcal{G}} = \{t' \mid t \semeq t'\}$
and $t \semeq t'$ iff $t \eqg t'$.
The core idea of EqSat is to convert equations into rewrite rules
and apply them to the e-graph in a non-destructive way:
so that the original term and the rewritten terms are both represented in the same e-class.
Applying a rewrite rule $p_1 \Rightarrow p_2$ to an e-class $a$ works as follows:
for each $\theta \in \matches{a}{p_1}$, 
we obtain the rewritten partial term $\pi' = \sapp{\theta}{p_2}$
and then add this partial term to the same e-class $a$,
restoring the congruence invariant
(\ie merging e-classes that now have identical e-nodes).

\subsection{Top-Level Algorithm}

\begin{figure}
\begin{lstlisting}[
    commentstyle=\small\ttfamily\textcolor{gray},
    basicstyle=\small\ttfamily,
    xleftmargin=2.5em,
    numbers=left]
# Original term $t$, set of equational rewrite rules $\ruleset_\semeq$, maximum library size $N$
def LLMT($t$, $R_\semeq$, $N$):
    # EqSat phase:
    $\mathcal{G}$ = egraph($t$)                            # initialize with a single term $t$
    $\mathcal{G}$ = eqSat($\mathcal{G}, \ruleset_\semeq$)  # $\mathcal{G}$ represents all terms that are $\semeq t$

    # candidate generation phase:
    $\patset$ = AU($\mathcal{G}$)                          # generate candidate patterns by anti-unification
    $\ruleset_\kappa$ = $\{\kappa(p) \mid p \in \patset\}$ # construct a compression rule from every pattern
    $\mathcal{G}'$ = eqSat($\mathcal{G}, \ruleset_\kappa$) # $\mathcal{G}'$ represents all ways to compress $\mathcal{G}$

    # candidate selection phase:
    $\ruleset'$ = select_library($\mathcal{G}', N$)        # select the best $N$ rules from $\ruleset_\kappa$ using beam search
    $\mathcal{G}''$  = eqSat($\mathcal{G}, R'$)            # $\mathcal{G}''$ represents all ways to compress $\mathcal{G}$ using the optimal library          
    return extract($\mathcal{G}''$)                        # extract the smallest compressed term from $\mathcal{G}''$
\end{lstlisting}
  \caption{The top-level LLMT algorithm.}
\label{fig:algo-llmt}
\end{figure}

%\begin{figure}
%\begin{lstlisting}[
%    basicstyle=\normalsize\ttfamily,
%    xleftmargin=2.5em,
%    numbers=left]
%def anti-unify:
%    input: E: an e-graph containing the set of programs to anti-unify
%    output: R: a set of rewrites which introduces function definitions and applications
%               at the point of use
%
%    # TODO: i forgot how this works lmao
%\end{lstlisting}
%\caption{The antiunification algorithm of \babble.}
%\label{fig:algo-au}
%\end{figure}

We can formalize the problem of \emph{library learning modulo equational theory} (LLMT) as follows:
given a term $t$ and a set of equations that induce an equivalence relation $\semeq$,
the goal is to find a compressed term $\hat{t} \in \absof{\Sigma}{\mathcal{X}}$,
such that $\hat{t} \evals t' \semeq t$ (for some $t'$),
and $\hat{t}$ has a minimal size.

% This problem can be equivalently restated as library learning \emph{over an e-graph}:
% given an e-graph $\mathcal{G} = (C, M, r)$,
% the goal is to find a minimal-size compressed term $\hat{t} \in \absof{\Sigma}{\mathcal{X}}$,
% such that $\hat{t} \evals t$ for some $t \in \denot{\mathcal{G}}$.
% %
% If we can solve this problem, we can also solve the original LLMT problem,
% building the e-graph by applying EqSat to the input term and equations.

Our top-level algorithm \T{LLMT} is depicted in \autoref{fig:algo-llmt}.
This algorithm takes as input the original corpus $t$ 
and the equational theory, represented as a set of require rules $\ruleset_\semeq$
(another input to the algorithm is the maximum size of the library;
this parameter is introduced for the sake of efficiency, as we explain in \autoref{sec:beam}).
As the first step (lines 4--5),
\T{LLMT} applies EqSat to obtain an e-graph $\mathcal{G}$ 
that represents all terms $t'$ such that $t' \semeq t$.
This reduces the LLMT problem to library learning over an e-graph:
\ie the goal is to find a minimal-size compressed term $\hat{t}$,
such that $\hat{t} \evals t'$ for some $t' \in \denot{\mathcal{G}}$.
Similarly to \autoref{sec:au},
we take a pattern-based approach to this problem,
that is, we select a set $\patset$ of \emph{candidate patterns}
and then perform compression rewrites using these patterns.

The rest of the algorithm is split into two phases:
\emph{candidate generation} and \emph{candidate selection}.
Candidate generation (lines 8--10)
first computes the set of candidate patterns $\patset$ using the anti-unification mechanism extended to e-graphs.
Then it creates a compression rule ($\kappa$-rule, see \autoref{sec:au}) from each candidate pattern,
and once again applies EqSat to obtain a new e-graph $\mathcal{G}'$.
This new e-graph represents all possible ways to compress the terms from $\mathcal{G}$ 
using patterns in $\patset$.
Finally, the candidate selection phase (lines 13--15)
selects the optimal subset of compression rules $\ruleset'$,
constructs an e-graph $\mathcal{G}''$ that represents all possible compressions using only the selected compression rules,
and finally extracts the smallest compressed term from this e-graph.

The rest of this section focuses on the candidate generation via e-graph anti-unification
(line 8).
The candidate selection functions \T{select\_library} and \T{extract} are discussed in \autoref{sec:beam}.

\subsection{Candidate Generation via E-Graph Anti-Unification}

The goal of candidate generation is to find a set of patterns
that are useful for compression.
Following the discussion in \autoref{sec:au},
we restrict our attention to patterns $\gau(t_1, t_2)$,
where $t_1$ and $t_2$ are subterms of some $t \in \denot{\mathcal{G}}$.
The na\"ive approach is to enumerate all such $t$,
and for each one, perform anti-unification on all pairs of subterms;
this is suboptimal at best,
and impossible at worst (when $\denot{\mathcal{G}}$ is infinite).
Hence in this section we show how to compute a finite set of candidate patterns 
directly on the e-graph,
without discarding any optimal patterns.

\mypara{E-Class Anti-Unification}
Let us first consider anti-unification of two e-classes, $\gau(a,b)$,
which takes as input e-class ids $a$ and $b$
and returns a set of patterns.
We define $\gau(a,b) = \ecau{\emptyseq}{a}{b}$,
where $\ecau{\Gamma}{a}{b}$ is an auxiliary function
that additionally takes into account a \emph{context} $\Gamma$.
A context is a list of pairs of e-classes that have been visited while computing the AU,
and is required to prevent infinite recursion in case of cycles in the e-graph.

\begin{figure}
  \textbf{E-node Anti-Unification} $\ecau{\Gamma}{n_1}{n_2}$
  $$
  \begin{array}{rl}
    \ecau{\Gamma, (a,b)}{s(a_1,\ldots,a_k)}{s(b_1,\ldots,b_k)} &= \{s(p_1,\ldots,p_k) \mid p_i \in \ecau{\Gamma}{a_i}{b_i}\}\\
    \ecau{\Gamma, (a,b)}{s_1(\ldots)}{s_2(\ldots)} &= \{X_{a,b}\} \quad\text{if}\ s_1 \neq s_2\\
    \end{array}
  $$
  \textbf{E-class Anti-Unification} $\ecau{\Gamma}{a}{b}$
  $$
  \begin{array}{rl}
    \ecau{\Gamma}{a}{b} &= \emptyset \quad\text{if}\ (a,b) \in \Gamma\\
    % \textcolor{darkblue}{\ecau{\Gamma}{a}{a}} &\textcolor{darkblue}{= \extract_{\sz}(a)} \\
    \ecau{\Gamma}{a}{b} &= \textcolor{darkblue}{\dominant} \left(
       \bigcup_{n_a \in M(a), n_b \in M(b)} \ecau{\Gamma, (a, b)}{n_a}{n_b} 
    \right)
  \end{array}
  $$
\caption{E-class anti-unification defined as two mutually-recursive functions of e-nodes and e-class ids.}\label{fig:ecau}
\end{figure}

\autoref{fig:ecau} defines this operation using two mutually recursive functions
that anti-unify e-classes and e-nodes.
Note that e-node AU is always invoked in a non-empty context.
The first equation anti-unifies two e-nodes with the same constructor:
in this case, we recursively anti-unify their child e-classes
and return the cross-product of the results.
The second equation applies to e-nodes with different constructors:
as in term AU, this results in a pattern variable.
A nice side-effect of dealing with e-graphs
% is that we can use e-class ids (instead of full terms) 
% in the names the pattern variables $X_{a,b}$.
is that we need not keep track of the anti-substitution
to guarantee that each pair of subterms maps to the same variable:
because any duplicate terms are represented by the same e-class,
we can simply use the e-class ids $a$ and $b$ in the name of the pattern variable $X_{a,b}$.
%
% Although this causes the algorithm to generate equivalent patterns
% (when a pattern has more than one match in either of the e-classes),
% they can be be de-duplicated after the fact.

The second block of equations defines anti-unification of e-classes
(let us first ignore the \textcolor{darkblue}{\sf dominant} which will be explained shortly).
The first equation applies when $a$ and $b$ have already been visited:
in this case, we break the cycle and return the empty set.
Otherwise, the last equation anti-unifies all pairs of e-nodes from the two e-classes
in an updated context and merges the results.
Note that this will add a pattern variable unless all e-nodes in both e-classes
have the same constructor;
we have omitted this detail in \autoref{sec:overview} for simplicity,
but this is implemented in \tool and often yields more optimal patterns.
For example, consider anti-unifying $c_1$ and $c_2$ from \autoref{fig:egraph} (right):
although they have the constructor \T{scale} in common,
$X$ is actually a better pattern for abstracting these two e-classes than $\T{scale}\  X\ Y$,
because the total size of the actual arguments to pattern $X$ ($\pSide{6}$ and $\scale{\repRot{\pSide{8}}{8}{2\pi/8}}{2}$)
is the same as those to the pattern $\T{scale}\  X\ Y$ ($\pSide{6}$, $1$, $\repRot{\pSide{8}}{8}{2\pi/8}$, and $2$),
and the pattern $X$ itself is smaller.
This happens because the class $c_1$ represents several different terms, 
and the term ``compatible with'' $X$ in this case is smaller than the term ``compatible wit'' $\T{scale}\  X\ Y$.

%  because the argument to $X$ (\ie \T{(side 6)}) is shorter than 
%  the total size of both arguments to $Y$ (\ie \T{(side 6)} and \T{1}).

\mypara{Dominant Patterns}
% \mw{this part is a bit confusing, could come back to it}
% \mw{
%   What are we claiming here? That we aren't throwing away anything optimal?
%   Is it possible that a dominated pattern actually matches more terms, and thus is better?
% }
%
Recall that $\gau(a,b)$ produces patterns
with variable names $X_{a_i,b_i}$, which record the e-class ids they abstract;
let us refer to such a pattern $p$ as \emph{uniquely matched}
and define $\theta_l(p) = \{\many{X_{a_i,b_i} \mapsto a_i}\}$ and $\theta_r(p) = \{\many{X_{a_i,b_i} \mapsto b_i}\}$;
these substitutions are necessarily among $\matches{a}{p}$ and $\matches{b}{p}$, respectively.
Given two uniquely matched patterns $p_1,p_2 \in \gau(a,b)$,
we say that $p_1$ \emph{dominates} $p_2$ in the context of $(a,b)$ if
  (1) $\fv(p_1) \subseteq \fv(p_2)$, and
  (2) $\sz(p_1) \leq \sz(p_2)$.
We can show that if $p_1$ dominates $p_2$,
then we can safely discard $p_2$ from the set of candidate patterns.
First, since $p_1$ is no larger than $p_2$, 
the definition of its $\lambda$-abstraction is also no larger.
Second, given a term $t_a\in \denot{a}$,
compressing this term using $p_1$ vs $p_2$,
requires choosing actual arguments from $\rng{\theta_l(p_1)}$ vs $\rng{\theta_l(p_2)}$;
because the former is a subset of the latter,
the first application can always be made no larger
(symmetric argument applies for $t_b\in \denot{b}$).

Hence it is sufficient that $\gau(a,b)$ only returns the set of \emph{dominant patterns}
(\ie a pattern dominated by any pattern in the set can be discarded).
This is what the function $\textcolor{darkblue}{\dominant}$ does in the last equation of \autoref{fig:ecau}.
This pruning technique is especially helpful in the presence of equational theories.
Suppose our theory contains the equation $X + Y \semeq Y + X$,
and that the original term $t$ contains subterms $1 + 2$ and $3 + 1$.
After saturation, the e-graph will represent $1 + 2$ and $2 + 1$ in some e-class $a$
and $3 + 1$ and $1 + 3$ in another e-class $b$;
$\gau(a,b)$ will then produce both patterns $X + 1$ and $1 + X$,
but it is clearly redundant to have both, 
since they match the same e-classes with the same substitutions $\theta$.
Pruning of dominated patterns will eliminate one of them.

\mypara{Avoiding Cycles}
Interestingly enough,
the same notion of dominant patterns justifies why we need not follow cycles in the e-graph when computing $\gau(a,b)$,
or, alternatively, why a finite set of candidate patterns is sufficient
to compress any term in $\denot{\mathcal{G}}$, even if this set is infinite.
Removing the first equation that short-circuits cycles can only lead to solutions $p'$ of the form
$$
p' = \sapp{[X \mapsto p]}{q}
$$
where $p \in \gau(a,b)$ is another solution, and $q$ is some context, added by the cycle.
It is clear that any such $p'$ is dominated by $p$, and hence can be discarded.

% \mypara{Self Anti-Unitification}
% %
% \TODO{I think we should remove this optimization from here and the figure: 
% seems more low-level, and the proof is non-obvious.}
% %
% The final optimization in \autoref{fig:ecau} for anti-unifying an e-class with itself.
% %
% In this case we can show that it is sufficient to simply extract the smallest term from $a$.
% %
% The argument is as follows:
% if $a$ occurs multiple times as a sub-term in the corpus,
% we can either pick the same term from it every time or pick different terms.
% %
% If we do pick the same term,
% then it is best to pick the smallest term and perform common sub-expression elimination on that term,
% which is what the pattern $\extract_{\sz}(a)$ will let us do.
% %
% Picking different terms, also cannot be more optimal,
% because in this case less than the entire term will be reused by the pattern.

% \TODO{Would be nice to add a walk-through example \eg with the e-graph from \autoref{fig:egraph}.}

\mypara{E-Graph Anti-Unification}
The algorithm $\gau(a,b)$ computes a set of patterns that can be used to compress terms
represented by the e-classes $a$ and $b$.
Our ultimate goal, however, is to compute candidate patterns for abstracting
\emph{all subterms} of some $t \in \denot{\mathcal{G}}$.
The most straightforward way to achieve this is to apply $\gau(a,b)$ to all pairs of e-classes in $\mathcal{G}$.
We can do better, however:
some pairs of e-classes need not be considered,
because they cannot occur together in a single term $t$.
For an example, consider the following e-graph, with $\mathcal{I} = \mathbb{N}$ and $r = 0$:
$$
0 \mapsto \{f(1),g(2)\}\quad 1\mapsto\{g(3)\} \quad 2\mapsto \{f(3)\} \quad 3\mapsto \{a\}
$$
This e-graph can result, for example, by rewriting a term $f(g(a))$
using an equation $f(g(X)) \semeq g(f(X))$.
In this e-graph, the e-classes $1$ and $2$ (representing $g(a)$ and $f(a)$, respectively)
clearly cannot \emph{co-occur} in the same term:
since the e-graph only represents two terms, $f(g(a))$ and $g(f(a))$.

To formalize this intuition, we define the co-occurrence relation between two e-class ids as follows.
An e-class $a$ is a \emph{sibling} of $b$ if there is an e-node that has both $a$ and $b$ as children.
An e-class $a$ is an \emph{ancestor} of $b$ if $a = b$ or $b$ is a child of some e-node $n \in M(c)$ and $a$ is an ancestor of $c$;
$a$ is a \emph{proper ancestor} of $b$ if $a$ is an ancestor of $b$ and $a \neq b$.
Two e-classes $a$ and $b$ are \emph{co-occurring} if
  (1) one of them is a proper ancestor of another, or
  (2) they have ancestors that are siblings.

Finally, to compute the set $\gau(\mathcal{G})$ of all candidate patterns for an e-graph $\mathcal{G}$,
\T{LLMT} first computes the co-occurrence relation between all e-classes in $\mathcal{G}$,
and then computes $\gau(a,b)$ of all pairs of e-classes that are co-occurring.
As mentioned in \autoref{sec:overview},
we use a dynamic programming algorithm that memoizes the results of $\gau(a,b)$
to avoid recomputation.

% Note that this will add a pattern variable unless all e-nodes in both e-classes
% have the same constructor;
% we have omitted this detail in \autoref{sec:overview} for simplicity,
% but this is required for completeness and is implemented in \tool.
% %
% For example, consider anti-unifying $c_1$ and $c_2$ from \autoref{fig:egraph} (right):
% although they have the constructor \T{xform} in common,
% $X$ is actually a better pattern for abstracting these two e-classes than $\T{xform}\  X\ Y$,
% because the term \T{line} is shorter than \T{xform line 1}.

\section{Candidate Selection via E-Graph Extraction}\label{sec:beam}

After generating candidate abstractions,
  the \T{LLMT} algorithm invokes \T{select_library} to pick the subset of candidate patterns
  that can best be used to compress the input corpus.
This section describes our approach to selecting the optimal library dubbed
  \emph{targeted common subexpression elimination}.

%In this section we describe \emph{beam extraction},
%  our e-graph based approach to selecting a
%  library from the set of candidate abstractions.

%\subsection{Library Selection as E-Graph Extraction}
\mypara{Library Selection as E-Graph Extraction}
%
% At this stage (lines 6--12 in~\autoref{fig:algo-llmt}),
%   LLMT has initialized the e-graph $\mathcal{G}$
%     with the original input corpus,
%   run equality saturation on $\mathcal{G}$
%     using the given equational theory,
%   and has identified candidate abstractions
%     as a set of patterns $P$ via
%     e-graph anti-unification.
% For each candidate pattern $p \in P$
%   with parameters $\many{X}$,
%   we introduce the rewrite rule
%   $\kappa(p) \triangleq p \Rightarrow (\lambda \many{X} \bnd p)\ \many{X}$
%   as defined in \autoref{sec:au}.
%   % introduce a new rewrite rule
%   % $p \Rightarrow \T{let}\ f_p =
%   %   \lambda \many{X} \bnd p\ \T{in}\ f_p\ \many{X}$.
% LLMT uses these rewrite rules for
%   equality saturation on $\mathcal{G}$ to explore
%   all the ways the corpus may be (re-)expressed
%   using different combinations of
%   the candidate library functions (L13).
% We now desribe the next step in the algorithm which selects
%   the optimal abstractions and uses them to
%   rewrite the original corpus (L15 - L20).
Recall that candidate selection starts with an e-graph $\mathcal{G}'$,
  which represents all the ways of compressing the initial corpus and its equivalent terms
  using the candidate patterns $\patset$.
We will refer to a subset $\libset \subset \patset$ as a \emph{library}.
The optimal size of an e-class $c$ compressed with $\libset$
  can be computed as the sum of the sizes of
  (1) the smallest term $t\in \denot{c}$
      using only the library functions in $\libset$ and
      where $\lambda$-abstractions 
      do not count toward the size of $t$, and
  (2) the smallest version of each $p \in \libset$.
      % (recall that $\lambda$-bodies themselves can be compressed using other patterns).
Note that
  the cost of defining an abstraction
  in $\libset$ is only counted once, and
  that abstractions in $\libset$ can be used
  to compress other abstractions in $\libset$.
Our goal is to find $\libset$ such that the root e-class compressed with $\libset$
  has the smallest size.

Given a \emph{particular} library,
  we can find the size of the smallest term
  via a relatively straightforward top-down traversal of the e-graph.
Hence, a na\"ive approach to library selection
  would be to enumerate all subsets of $\patset$
  and pick the one that produces the smallest term at the root. 
Unfortunately, this approach becomes intractable as the size of $\patset$ grows.
  % $\T{minSize}(c)$
%   represented by each e-class $c$
%   with a fixpoint approach:
%   if e-node $s() \in c$
%     (i.e., $s$ is a leaf with no children),
%     then $\T{minSize}(c) = 1$;
%   otherwise,
%     $\T{minSize}(c) = \T{min}(
%       \{1 + m_1 + \ldots + m_k |\ s(c_1, \ldots, c_k) \in c,\ m_i = \T{minSize}(c_i)\})$.
% %  if we let $T$ be
% %    $\{s(t_1, \ldots, t_k)\ |\ s(c_1, \ldots, c_k) \in c,\ t_i \in \T{minSize}(c_i)\}$
% %    then $\T{minSize}(c)$ is
% %    the subset of $T$ with minimal size.
% Because size never decreases,
%   this process terminates.
% Finding the minimal library is thus naively decidable
%   with a top-down approach:
%   simply enumerate all subsets of $\patset$,
%   compute their library cost, and
%   take the smallest.
% However, this top-down approach becomes intractable
%   once $\patset$ is moderately sized as it will have
%   exponentially many subsets.

\begin{figure}
  \textbf{E-node cost set} $\cs_N(n)$
  $$
  \begin{array}{rl}
    \cs_N(s()) &= \{ (\emptyset, 1) \} \\
    \cs_N(s(\many{a_i})) &= \{ (\libset, u + 1) \mid (\libset, u) \in \mathsf{cross}(\many{a_i}) \} \\
    % \cs_N(\T{let}\ s = \lambda \many{X} \bnd a\ \T{in}\ b) &= \mathsf{addlib}(s, \cs(a), \cs(b)) \\
    \cs_N((\lambda \many{X} \bnd a)\ b) &= \mathsf{addlib}(a, \cs(a), \cs(b)) \\
    \end{array}
  $$

  \textbf{E-class cost set} $\cs(a)$
  $$
  \begin{array}{rl}
    \cs(\{ \many{n_j} \}) &= \sfprune(\sfunify(\bigcup \many{\cs_N(n_j)})) \\
  \end{array}
  $$

  \textbf{Auxiliary definitions} $\mathsf{cross}$, $\mathsf{addlib}$, $\sfunify$, $\sfprune$
  $$
  \begin{array}{rl}
    \mathsf{cross}(z_1, z_2) &= \sfprune(\sfunify(\{ (\libset_1 \cup \libset_2, u_1 + u_2) \mid (\libset_1, u_1) \in z_1, (\libset_2, u_2) \in z_2 \})) \\
    \mathsf{addlib}(a, z_1, z_2) &= \sfprune(\sfunify(\{ (\libset_1 \cup \libset_2 \cup \{ a \}, u_2) \mid (\libset_1, u_1) \in z_1, (\libset_2, u_2) \in z_2 \})) \\
    \sfunify(z) &= \{ (\libset_1, u_1) \in z \mid \forall (\libset_2, u_2) \in z \setminus (\libset_1, u_1).\ \libset_1 \subset \libset_2 \lor u_1 < u_2 \} \\
    \sfprune(N, K, z) &=  \sftopk(\{(\libset, u) \in z \mid |\libset| \leq N \}, K)\\
  \end{array}
  $$
\caption{Cost set propagation, defined as two mutually
  recursive functions. Blue text corresponds to the partial order
  reduction optimization and red text corresponds to the beam approximation.
  $\sftopk(S, K)$ is a helper function that returns top $K$
  elements from the sorted set $S$.}
  \label{fig:costset}
\end{figure}

\mypara{Exploiting Partial Shared Structure}
Instead, \tool selects the optimal library using a bottom-up dynamic programming algorithm.
To this end, it associates each e-node and e-class with a \emph{cost set},
which is a set of pairs $(\libset, u)$
  where $\libset$ is a library and $u$ is the \emph{use cost} of this library,
  \ie the size of the smallest term represented by the e-node / e-class if it is allowed to use $\libset$
  (excluding the size of $\libset$ itself).
Cost sets are propagated up the e-graph using the rules shown in \autoref{fig:costset}.
The base case is a nullary e-node $s()$,
  which cannot use any library functions and whose size is always $1$.
For an e-node that has children,
\babble computes the $\mathsf{cross}$ product over the cost sets of all its child \eclasses.
Finally, for an application e-node $(\lambda \many{X} \bnd a)\ b$,
  the cost set \emph{must include} the library function $a$
  (in addition to some combination of libraries from the cost sets of $a$ and $b$);
  note that the use cost of the abstraction node only includes the use cost of $b$,
  since abstraction bodies are excluded from the use cost.
% Here, in addition to the cross-product between the child cost sets,
%   we need to consider the option of adding the current $\lambda$-abstraction to the children's libraries.

To compute the cost set of an e-class,
  \babble takes the union of the cost sets of all its e-nodes.
However, doing this naively would result in the size of the cost set growing exponentially.
To mitigate this, we define a \emph{partial order reduction} ($\sfunify$),
which only prunes provably sub-optimal cost sets.
Given two pairs $(\libset_1, u_1)$ and $(\libset_2, u_2)$ in the cost set of an e-class, 
if $\libset_1 \subset \libset_2$ and $u_1 \leq u_2$, 
then $\libset_2$ is subsumed by $\libset_1$ and can be removed from the cost set,
intuitively because $\libset_1$ can compress this e-class even better and with fewer library functions.
In practice this optimization prunes libraries with redundant abstraction,
where two different abstractions can be used to compress the same subterms.

\mypara{Beam Approximation}
Even with the partial order reduction,
  calculating the cost set for every e-node and e-class
  can blow up exponentially.
%Even with the partial order reduction, calculating the cost
%set for every e-node and e-class in an e-graph representing a large
%program can result in exponential blowup.
%
To mitigate this,
  \tool provides the option to limit
  both the \emph{size of each library} inside a cost set
  and the \emph{size of the cost set} stored for each e-class.
This results in a \emph{beam-search} style algorithm,
where the cost set of each e-class is $\sfprune$ed, as shown in \autoref{fig:costset}.
This pruning operation first filters out libraries that have more than $N$ patterns,
then ranks the rest by the total cost (\ie the sum of use cost and the size of the library),
and finally returns the top $K$ libraries from that set.
  
  % limits the number of candidate libraries
  % (\ie a \emph{beam size})
  % and the size of each library considered at each e-class,
  % resulting in a novel beam search approach
  % to efficient, sharing-aware, e-graph extraction.
%
% This \emph{pruning} operation is listed as $\sfprune$
%   in \autoref{fig:costset}.
% It first prunes out libraries that have more than $\mathsf{n}$ abstractions,
%   then ranks the rest by the combined
%   cost of the abstractions and the term they appear in, and finally
%   returns the top $\mathsf{m}$ from that set.

%In practice, \tool provides multiple parameters which limit the number
%of candidate libraries at each e-class (\ie the \emph{beam size}),
%as well as the size of each library.
% As the next section shows, this technique is effective
%   in practice.

%def cs-prune($a, M, n$):
%    sort_by(a.libs, full-cost)
%    for a-sel in $a$:
%       if len(a.libs) >= n:
%          a.remove(a-sel)
%    retain($a$, M)
%    return sort($a$)
%\input{extract_algorithm}

\section{Evaluation}
\label{sec:eval}

We evaluated \tool and the LLMT algorithm behind it
 with two quantitative research questions and a third qualitative one:

\begin{enumerate}[label=RQ \arabic*.,ref=RQ \arabic*]
  \item Can \tool compress programs better than a state-of-the-art library learning tool?\label{rq:1}
  \item Are the main techniques in LLMT (anti-unification and equational theories)
        important to the algorithm's performance?\label{rq:2}
  \item Do the functions \tool learns make intuitive sense?\label{rq:3}
\end{enumerate}

\mypara{Benchmark Selection}
We use two suites of benchmarks to evaluate \tool, both shown in \autoref{tab:benchmarks}.
The first suite originates from the \dc work~\cite{dreamcoder},
 and is available as a public repository~\cite{compression-bench}.
\dc is the current state-of-the-art library learning tool,
and using these benchmarks allows us to perform a head-to-head comparison.
The \dc benchmarks are split into five domains (each with a different DSL);
we selected two of the domains---List and Physics---which we understood best,
to add an equational theory.

The second benchmark suite, called \cogsci, comes from \citet{cogsci-dataset}.
This work collects a large suite of programs in a graphics DSL for the
 purpose of studying connections between the generated objects and their
 natural language descriptions.
There are 1,000 programs in the ``Drawings'' portion of this dataset,
 divided into four subdomains (listed in \autoref{tab:benchmarks})
 of 250 programs each.\looseness=-1
%For use in \tool,
% we combine each suite of 250 programs into a large single benchmark
% from which to learn libraries.

We ran \tool on all benchmarks on an AMD EPYC 7702P processor at 2.0 GHz.
Each benchmark was run on a single core.
The \dc results were taken from
 the benchmark repository~\cite{compression-bench};
 \dc was run on 8 cores of an AMD EPYC 7302 processor at 3.0 GHz.

%Times for \dc were taken from \citet{dreamcoder}
%  as we were unable to run \dc,
%  much less reproduce their results,
%  even after multiple consultations with the authors.
%\mw{should we say we can't run \dc? is this too aggressive?}

% %
% Describe the two benchmark suites: \dc and \cogsci.
% %
% Explain the subtlety that for the \dc benchmarks we want to compress the minimal solution
% to each synthesis problem, and how we handle that
% (we put different solutions into the same e-class).
% %
% Explain the role of the \cogsci domain: these are benchmarks with more and larger programs,
% which \dc cannot handle.

\begin{table}
  \centering
  \begin{tabular}{lrr}
    \multicolumn{3}{c}{\dc~\cite{dreamcoder}} \\
    \bf Domain & \bf \# Benchmarks & \bf \# Eqs \\
    List    & 59 & 14 \\
    Physics & 18 & 8 \\
    Text    & 66 & - \\
    Logo    & 12 & - \\
    Towers  & 18 & -
  \end{tabular}\qquad%
  \begin{tabular}{lrr}
    \multicolumn{3}{c}{\cogsci~\cite{cogsci-dataset}} \\
    \bf Domain & \bf \# Benchmarks & \bf \# Eqs \\
    Nuts \& Bolts & 1 & 7 \\
    Vehicles      & 1 & 9 \\
    Gadgets       & 1 & 17 \\
    Furniture     & 1 & 9 \\
    & %dummy line
  \end{tabular}
  \caption{
    We selected our benchmark domains from two previous works:
     \dc~\cite{dreamcoder}
     and \cogsci~\cite{cogsci-dataset}.
    Each domain from \dc has multiple benchmarks;
     \cogsci has one large benchmark per domain.
    For some domains,
     we additionally supplied \tool with an equational theory;
     we report the number of equations in the final column.
  }\label{tab:benchmarks}
\end{table}

\subsection{Comparison with \dc}

\begin{figure}
  \centering
  \begin{subfigure}{0.48\linewidth}
    \includegraphics[width=\linewidth]{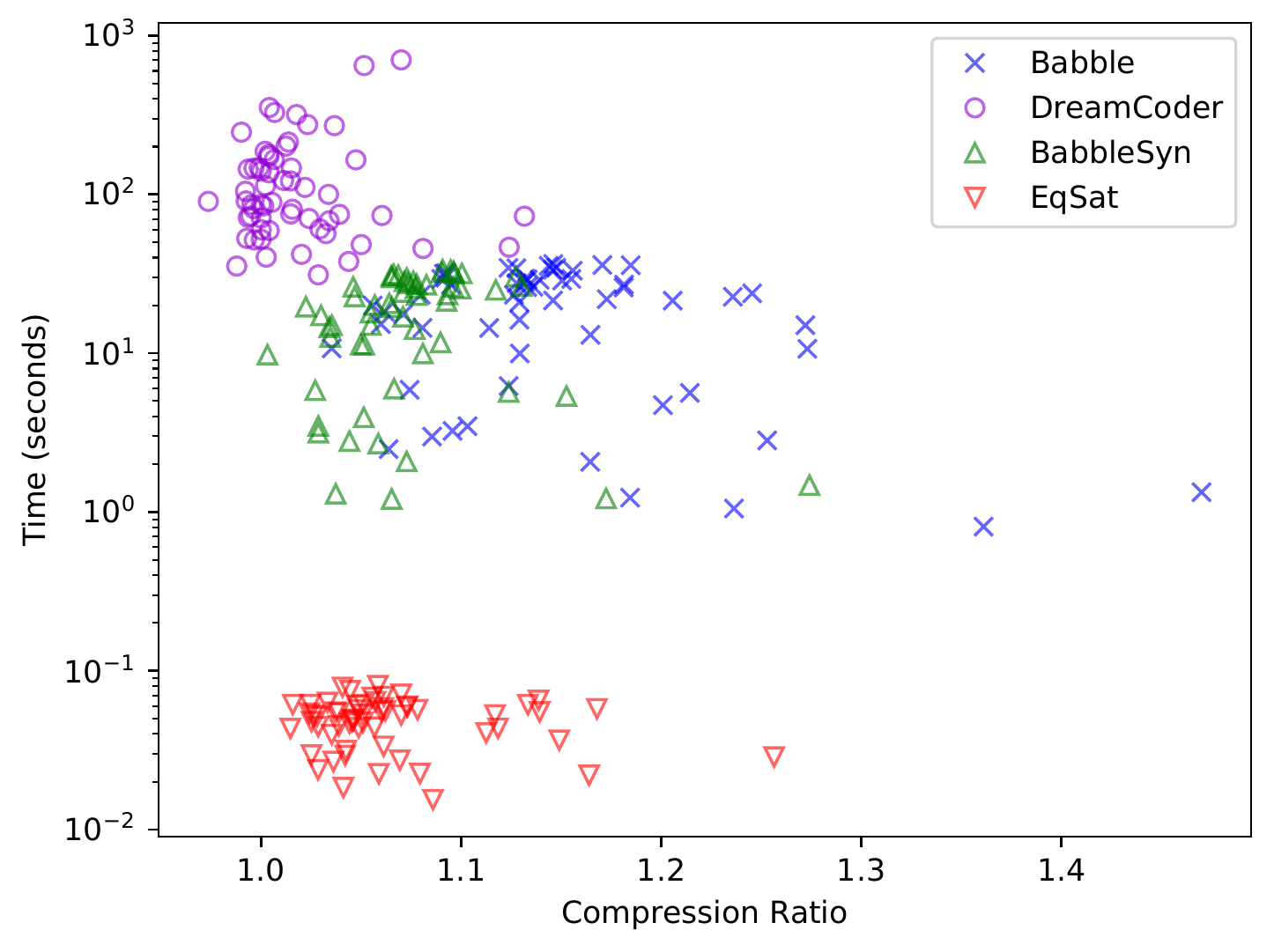}
    \caption{List domain}
    \label{fig:dc-domains-list}
  \end{subfigure}
  \hfill
  \begin{subfigure}{0.48\linewidth}
    \includegraphics[width=\linewidth]{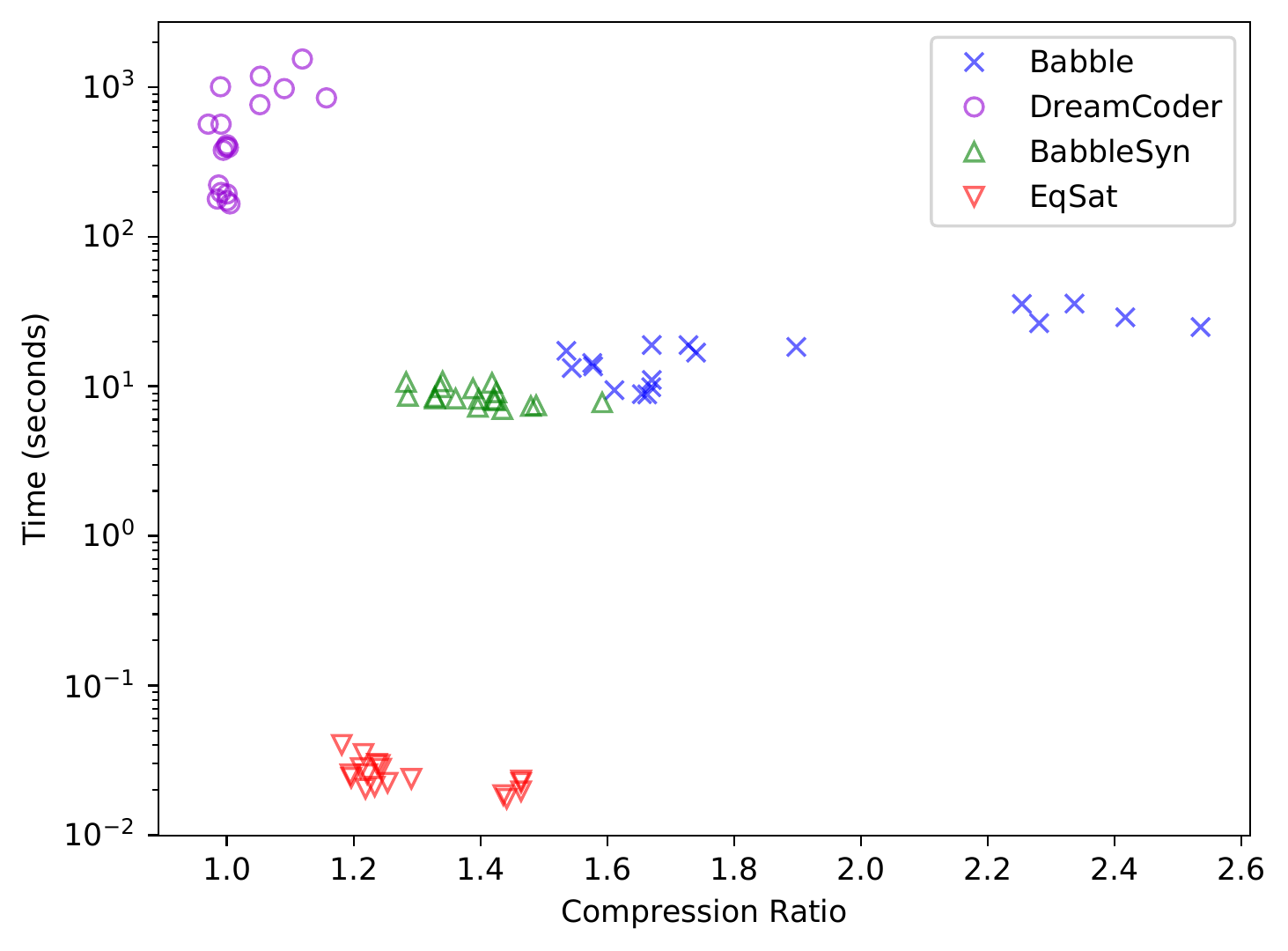}
    \caption{Physics domain}
    \label{fig:dc-domains-physics}
  \end{subfigure}

  \begin{subfigure}{0.48\linewidth}
    \includegraphics[width=\linewidth]{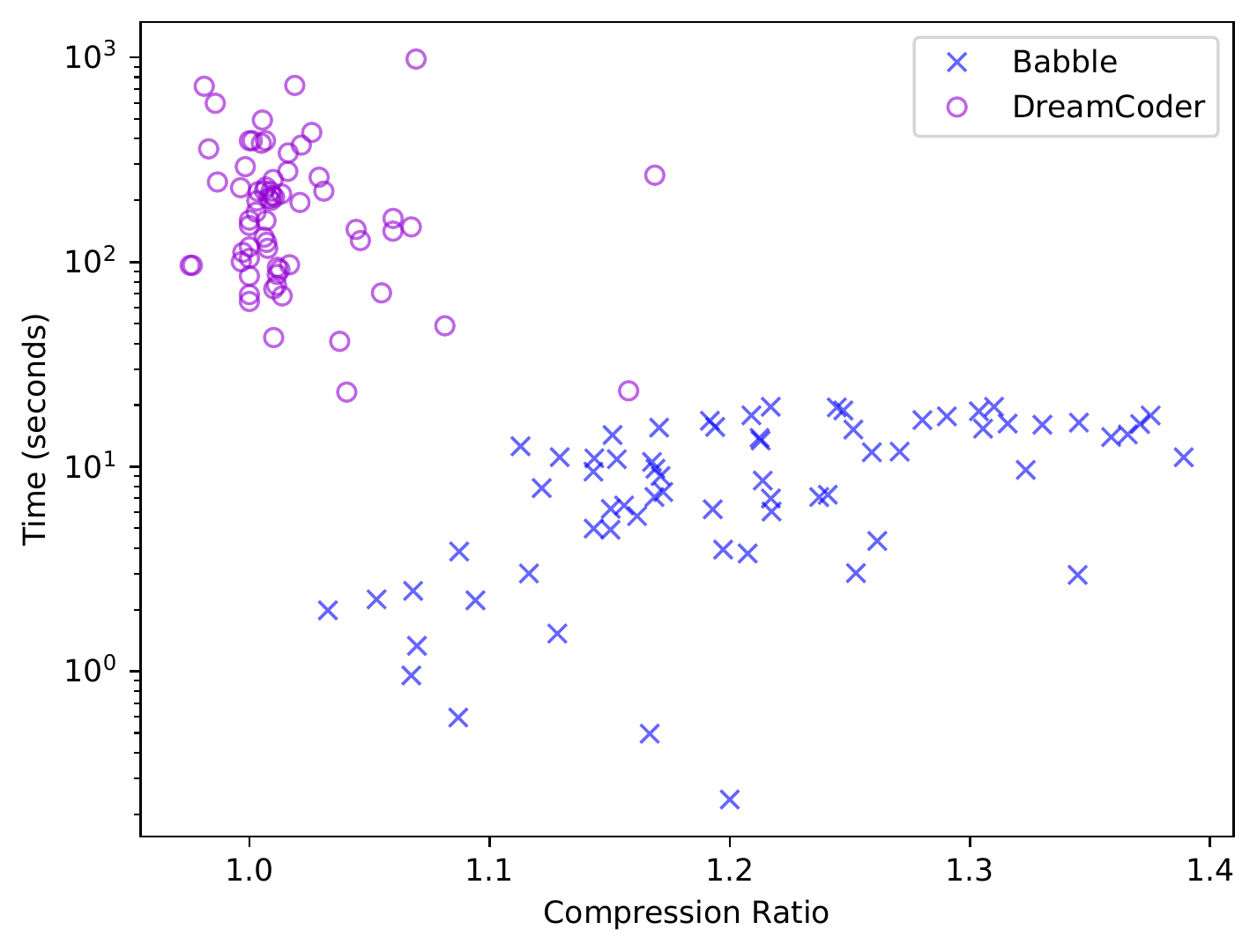}
    \caption{Text domain}
  \end{subfigure}
  \hfill
  \begin{subfigure}{0.48\linewidth}
    \includegraphics[width=\linewidth]{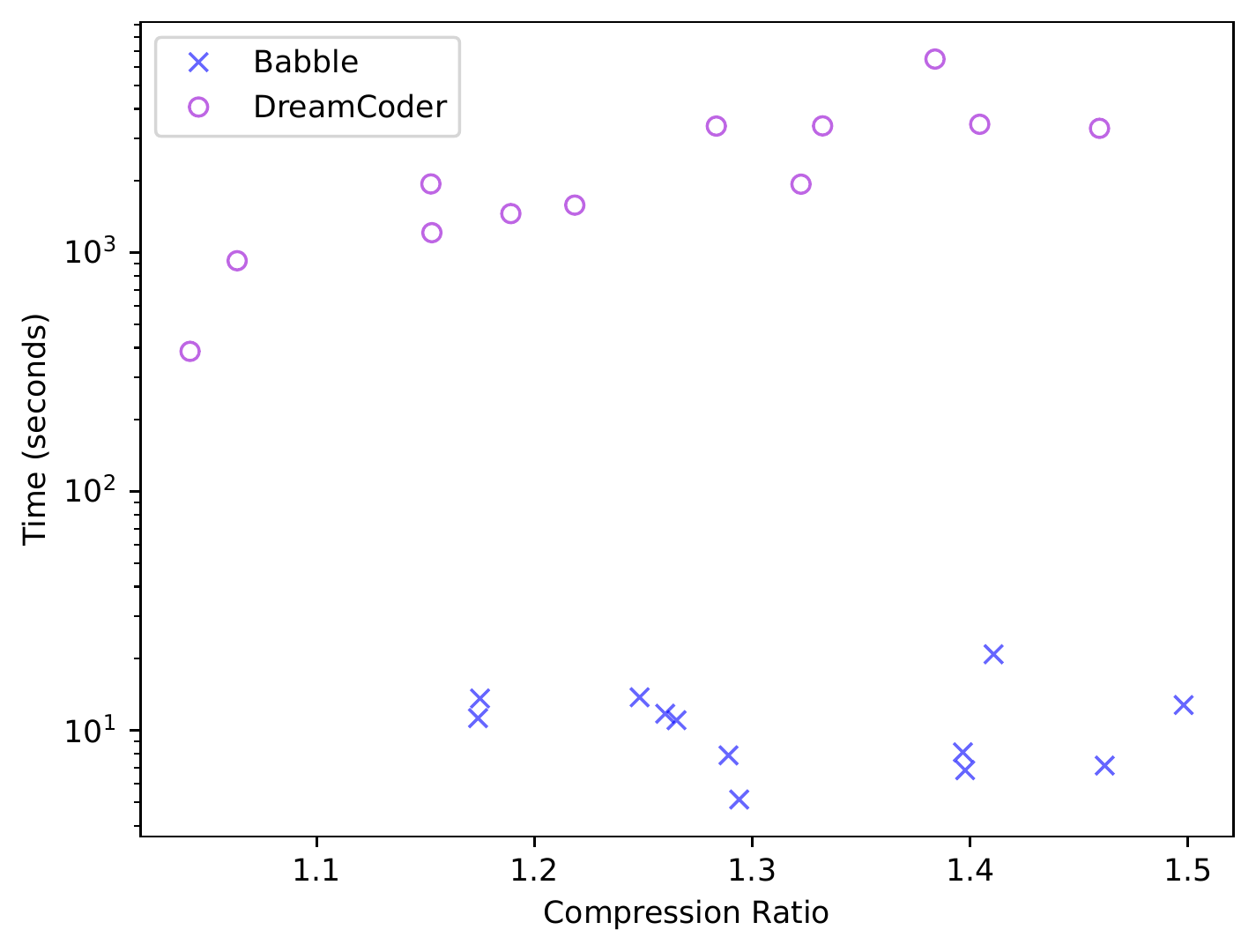}
    \caption{Logo domain}
  \end{subfigure}

  \begin{subfigure}{0.48\linewidth}
    \includegraphics[width=\linewidth]{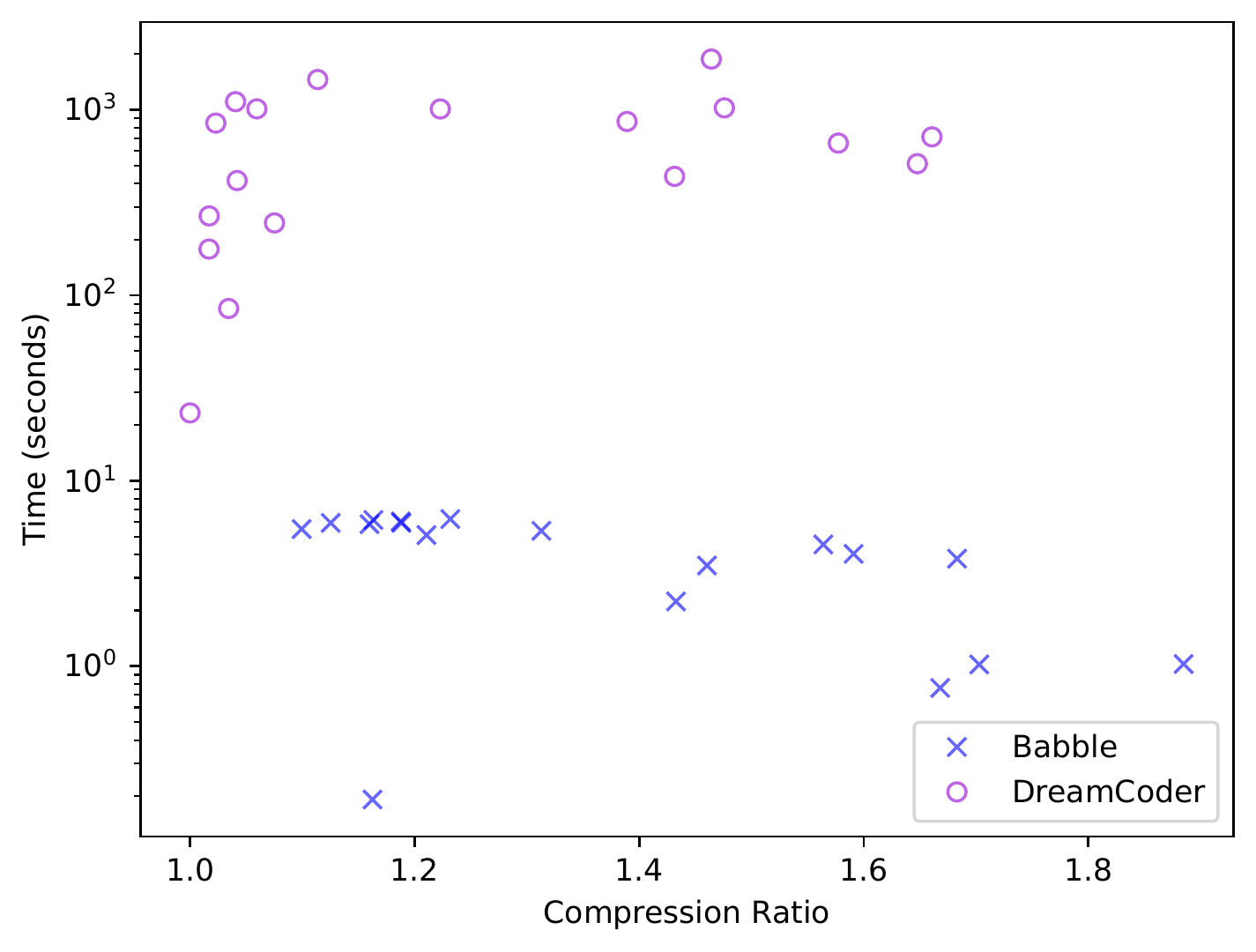}
    \caption{Towers domain}
  \end{subfigure}
  \caption{
    \tool consistently achieves better compression ratios than
     \dc on benchmarks from the \dc domains,
     and it does so 1--2 orders of magnitude faster.
    Each marker shows the compression ratio (x-axis)
     and run time (y-axis) of a benchmark.
    Each benchmark is one \dc input,
     \ie, a set of groups of programs as described above.
    Lower and to the right is better.
    In the domains where we supplied
     \tool with an equational theory
     (List and Physics),
     additional markers show the performance
     of \tool using purely syntactic learning (without equations, ``BabbleSyn'')
     or only equality saturation without library learning (``EqSat'').
  }\label{fig:dc-domains}
\end{figure}

To answer \ref{rq:1},
 we compare to the state-of-the-art \dc tool~\cite{dreamcoder} on its own benchmarks.
The \dc benchmarks are suited to its workflow;
while the input to a library learning task
 is conceptually a set of programs (or just one big program),
 each input to \dc is a set of \emph{groups} of programs.
Each group is a set of programs
 that are all output from the same program synthesis task
 (from an earlier part of the \dc pipeline).
When compressing a program via library learning,
 \dc is minimizing the cost of the program
 made by concatenating the most compressed
 program from each group together, in other words:
\[
  \sum_{\textsf{group $g$}}
  \min_{\textsf{program $p \in g$}}
  \textsf{cost}(p)
\]
\dc takes this approach to give its library learning
 component many variants of the same program,
 in order to introduce more shared structure
 between solution programs across different synthesis problems.
 %in hopes that some variants will compress better than others in different contexts.
 %\mw{someone who understands dc should check me on this}

To implement \dc's benchmarks in \tool,
 we use the \egraph to capture the notion of program variants in a group.
Since every program in a group is the output of the same synthesis task,
 \tool considers them equivalent and places them in the same \eclass.

% The input to each library learning task
%  is a set $P$ of sets of programs:
% \[
%   P = \{
%     \{p_{1,1}, p_{1,2}, \ldots \},
%     \{p_{2,1}, p_{2,2}, \ldots \},
%     \{p_{3,1}, p_{3,2}, \ldots \},
%     \ldots
%   \}
% \]
% Let $p_i = \{p_{i,j} \mid j\}$ refer to an inner set of programs.
% Each program in a set $p_i$
%  is the output from the same synthesis task.

\mypara{Results}
We ran \tool on five domains from the \dc benchmark suite.
The results are shown in \autoref{fig:dc-domains}.
In summary,
 \tool consistently achieves better compression ratios than
 \dc on benchmarks from the \dc domains,
 and it does so 1--2 orders of magnitude faster.

% without any DSRs.
% %
% Show that we get better compression
% \TODO{even if we run for as many rounds as \dc?}
% %
% Say something about time/memory to show that we are not much slower than DC.

\mypara{The Role of Equational Theory}
\label{sec:ablation}
To answer \ref{rq:2},
 we again turn to the \dc benchmarks,
 focusing on the domains where we supplied \tool
 with an equational theory: List and Physics.
In these domains, we ran \tool in two additional configurations:
\begin{itemize}
  \item ``BabbleSyn'' ignores the equational theory, just doing syntactic library learning.
  \item ``EqSat'' just optimizes the program using Equality Saturation with the
        rewrites from the equational theory.
        This configuration \emph{does not} do any library learning.
\end{itemize}
\autoref{fig:dc-domains-list} and \ref{fig:dc-domains-physics}
 show the results for these additional configurations,
 as well as \dc and the normal \tool configuration.
All \tool configurations rely on
  targeted subexpression elimination
  to select the final learned library.
The ``EqSat'' configuration is unsurprisingly very fast
 but performs relatively little compression,
 as it does not learn any library abstractions.
The ``BabbleSyn'' configuration
 does indeed compress the inputs,
 in fact it is still better than \dc in both domains.
However,
 the addition of the equational theory (the ``\tool'' markers in the plots)
 significantly improves compression
 and adds relatively little run time,
 well within an order of magnitude.

\subsection{Large-Scale \cogsci Benchmarks}

\begin{table}
  \begin{tabular}{lr|rrr|rrr}
    \multicolumn{2}{c}{} & \multicolumn{3}{c}{Without Eqs} & \multicolumn{3}{c}{With Eqs} \\
    \bf Benchmark & \bf Input Size & \bf Out Size & \bf CR & \bf Time (s) & \bf Out Size & \bf CR & \bf Time (s) \\
    Nuts \& Bolts   &      19009 &       2059 &       9.23 &      18.74 &       1744 &      10.90 &      40.75 \\
    Vehicles        &      35427 &       6477 &       5.47 &      79.50 &       5505 &       6.44 &      78.03 \\
    Gadgets         &      35713 &       6798 &       5.25 &      75.07 &       5037 &       7.09 &      82.29 \\
    Furniture       &      42936 &      10539 &       4.07 &     133.25 &       9417 &       4.56 &     110.00 \\
    \hline
    Nuts \& Bolts (clean) & 18259 &       2215 &       8.24 &      18.12 &       1744 &      10.47 &      40.91
  \end{tabular}
  \vspace{1em}
  \captionof{table}{
    Results for running \tool on the four large examples
     from the \cogsci dataset~\cite{cogsci-dataset},
     both without and with an equational theory.
    Each row includes the input and output AST sizes,
     the compression ratio (CR),
     and \tool's run time in seconds.
    \autoref{fig:cogsci-scatter} plots this data.
    The final row shows performance on a modified dataset.
  }\label{tab:cogsci}
\end{table}
\begin{figure}
  \includegraphics[height=5cm]{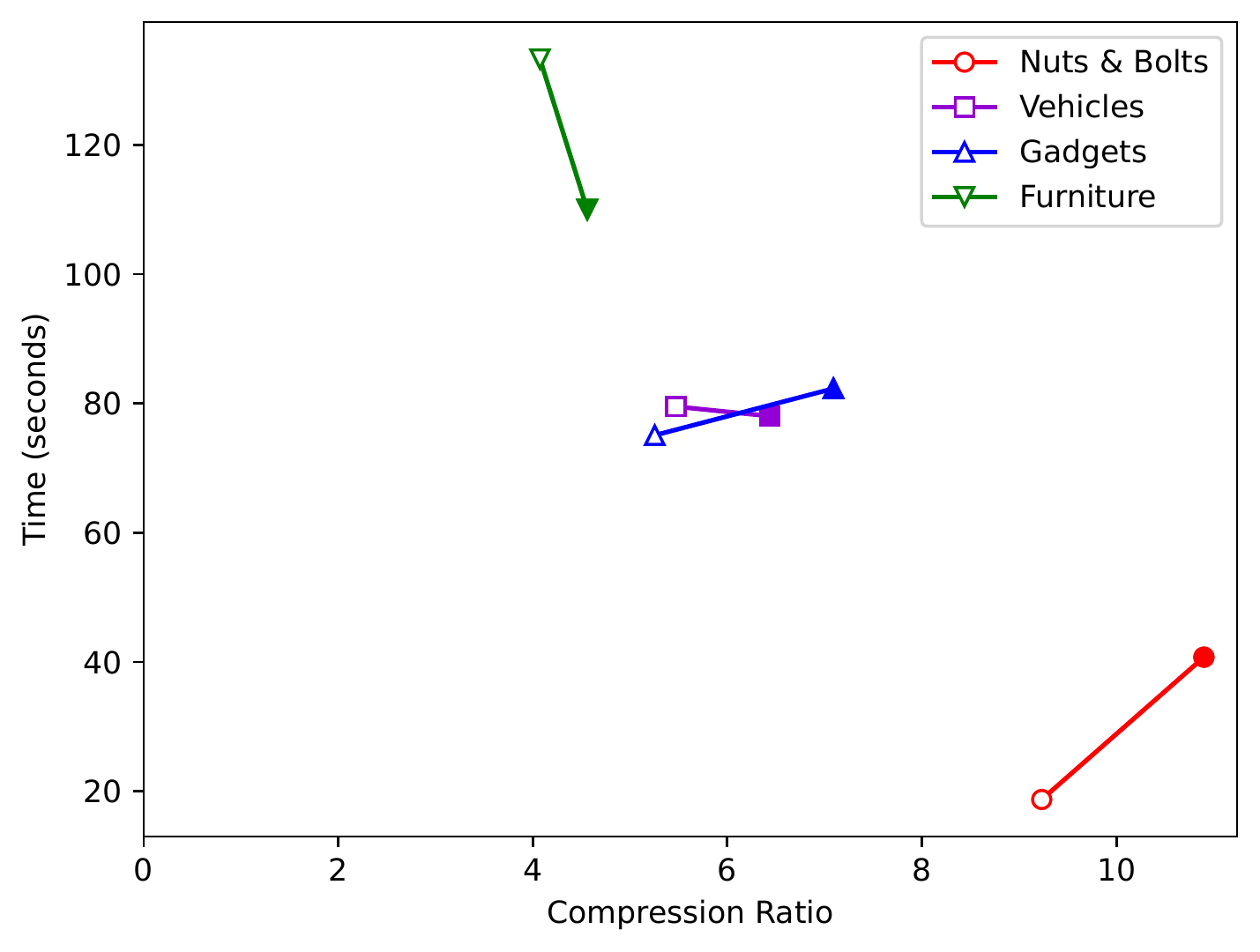}
  \caption{
    Data from \autoref{tab:cogsci} in scatter plot.
    Each line segment shows compression ratio and run time for a domain
     without (hollow marker) and with (solid marker) an equational theory.
    Using an equational theory improves compression in all cases,
     and even improves run time in two cases.
  }\label{fig:cogsci-scatter}
\end{figure}

The previous section demonstrated that \tool's performance
 far surpasses the state of the art.
In this section,
 we present and discuss the results
 of running \tool on benchmarks from the \cogsci domain.
These benchmarks,
 taken from \citet{cogsci-dataset},
 are significantly larger (roughly 10x - 100x)
 than those from the \dc dataset
 and out of reach for \dc.

\mypara{Quantitative Results}
\autoref{tab:cogsci} and \autoref{fig:cogsci-scatter} show the results
 of running \tool on the benchmarks from the \cogsci domain.
The plot in \autoref{fig:cogsci-scatter} makes two observations clear.
First and relevant for \ref{rq:2},
  the addition of an equational theory improved all four benchmarks
  (the solid marker is always to the right of the hollow marker).
Second, and perhaps surprisingly,
 equational theories can sometimes make \tool \emph{faster}!
This is consistent with previous observations about equality saturation~\cite{willsey2021egg}:
 while equality saturation typically makes an e-graph larger,
 it can sometimes combine two relevant e-classes into one and
 reduce the amount of work that some operation over an e-graph must do.

We also observed that the \nuts dataset contains
  several redundant transformations,
  like the ``scale by 1'' featured in the running example of \autoref{sec:overview}. %in \autoref{eq:unit-xform} and \autoref{fig:polygons}.
These redundancies can be useful for finding
  abstractions in the absence of an equational theory.
However, they should not be required in
  \tool since LLMT can
  introduce the redundancies wherever
  required.
We therefore removed all
  existing redundant transformations
  from \nuts and ran \tool
  on the transformed dataset.
The results are in the final row of \autoref{tab:cogsci}.
On the modified dataset,
 \tool achieves identical compression when using the equational theory,
 but without the equations it performs worse than on the unmodified dataset.

\mypara{Qualitative Evaluation}
\label{sec:quali}
\begin{figure}
  \centering
  \includegraphics[width=\textwidth]{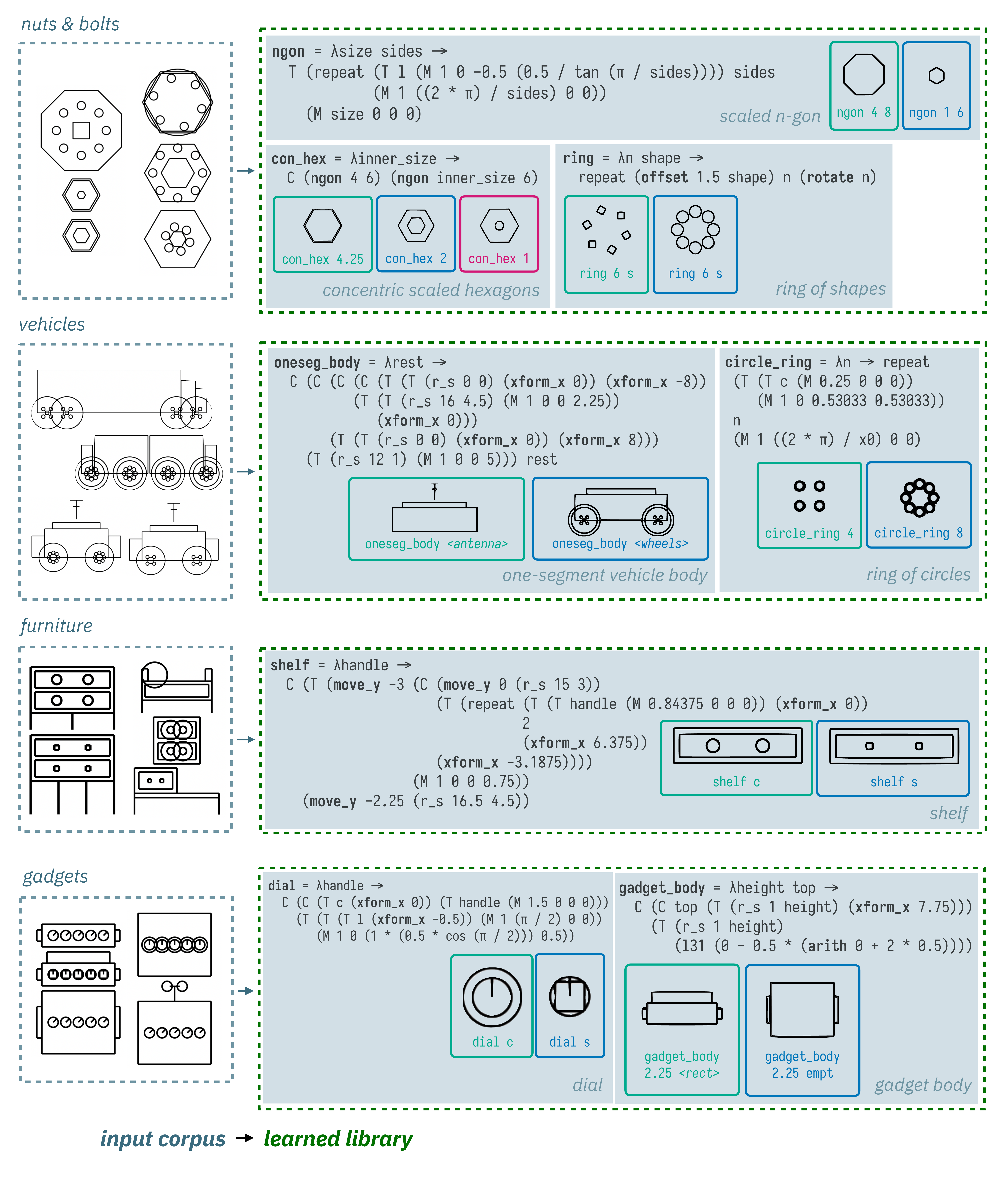}
  \caption{A selection of evaluated programs from each of the domains
           in the \cogsci dataset, along with a selection of functions
           that \tool learns within the first ten rounds for each domain.
           Bolded functions represent learned abstractions.
           Note this figure uses the concrete syntax from the \cogsci
           dataset; it is similar to the simplified form shown in the overview.
           We named the learned functions and their parameters for clarity.
   }\label{fig:qual_cogsci}
\end{figure}
\autoref{fig:qual_cogsci}
 highlights a sample of abstractions that
 \tool discovered from the \cogsci benchmarks.
We ran \tool on each of the benchmarks
  and applied the learned abstractions on a few input programs to
  visualize their usage.
Questions about usability and readability of
  learned libraries are difficult to answer
  without rigorous user studies which we leave for
  future work.
Nevertheless, \autoref{fig:qual_cogsci} shows that
  \tool identifies common structures that
  are similar across
  different benchmarks, which
  makes its output easier to reuse and interpret.

First, we revisit the \nuts example
  from \autoref{sec:intro}:
  \autoref{fig:qual_cogsci} shows that \tool %successfully
  learns the scaled polygon (\texttt{ngon}) abstraction which
  is applicable to several programs in the dataset.
We also see that \tool consistently
  finds a similar abstraction
  representing a ``ring of shapes'' for both
  \nuts and Vehicles.
Finally, as the Gadgets example shows,
  \tool finds abstractions for both
  the entire model as well as its components.
In this case,
  it learned the function \texttt{gadget\_body}
  that abstracts the entire outer shape,
  and it also learned \texttt{dial} that abstracts
  the handles of the outer shape.\looseness=-1

\section{Related Work}
\label{sec:related}

\tool is inspired by work on library learning,
specifically the \dc line of work,
as well as equality saturation-based program synthesis and decompilation.

\mypara{\dc}
\dc~\cite{dreamcoder} is a program synthesizer
  that learns a library of abstractions
  from solutions to a set of related synthesis tasks.
The library is intended to be used for solving other
  similar synthesis tasks.
% The DSLs for the different domains in \dc
%   are carefully designed to help scale the search.
\dc
  uses version spaces~\cite{vsa1, vsa2}
  to compactly store a large
  set of programs
  and leverages ideas from \egraphs
  (such as e-matching)
  but only for exploring the space of refactorings
  of the original program using the candidate libraries,
  not for making library learning robust to syntactic variation.
Our evaluation shows that \tool can find more optimal abstractions faster than \dc.

\dc has sparked several direction of follow-up work
that attempt to improve the efficiency of its library learning procedure
and the quality of the learned abstractions.
One of them is by \citet{laps},
which uses natural language annotations and a neural network
to guide library learning.
Another one is \stitch~\cite{stitch},
which was developed \emph{concurrently} with this work
and is the most closely related to \tool;
we discuss \stitch in some detail below.

\mypara{\stitch}
The core difference between the two approaches
is that \stitch focuses on improving \emph{efficiency} of purely syntactic library learning,
whereas \tool attempts to improve its \emph{expressiveness} by adding equational theories.
While \tool separates library learning into two phases---%
candidate generation via anti-unification and candidate selection via e-graph extraction---%
the \stitch algorithm \emph{interleaves} the generation and selection phases
in a branch-and-bound top-down search.
Starting from the ``top'' pattern $X$,
\stitch gradually refines it until further refinement does not pay off.
To quickly prune suboptimal candidate patterns,
\stitch computes an upper bound on their compression
by summing up the compression at each match of this pattern in the corpus
(this bound is imprecise because it does not take into account that matches might overlap).
For candidates that are not pruned this way,
\stitch computes their true compression by searching for the optimal subset of matches to rewrite,
a so-called ``rewrite strategy''.
\tool's extraction algorithm can be seen as a generalization of \stitch's rewrite strategy:
while the former searches over both subsets of patterns and how to apply them to the corpus at the same time,
the latter considers a single pattern at a time and only searches for the best way to apply it.
Since the search space in the former case is much larger,
\tool uses a beam search approximation,
while in \stitch the rewrite strategy is precise.
To sum up, the main pros and cons of the two approaches are:
\begin{itemize}
\item \tool can learn libraries modulo equational theories, while \stitch cannot;
\item \stitch provides optimality guarantees for learning a single best abstraction at a time, while \tool can learn multiple abstractions at once, but sacrifices theoretical optimality.
\end{itemize}

\mypara{Other Library Learning Techniques}
\tname{Knorf}~\cite{knorf} is a library learning tool for logic programs,
which, like \tool, proceeds in two phases.
Their candidate generation phase is similar to the upper bound computation in \stitch,
while their selection phase uses an off-the-shelf constraint solver.
It would be interesting to explore whether their constraint-based technique can be generalized beyond logic programs.

Other work develops limited forms of library learning,
where only certain kinds of sub-terms can be abstracted.
For example, ShapeMod~\cite{shapemod} learns macros
  for 3D shapes represented
  in a DSL called ShapeAssembly,
  and only supports abstracting over numeric parameters,
  like dimensions of shapes.
Our own prior work~\cite{cogsci-smiley} extracts common structure from graphical programs,
  but only supports abstracting over primitive shapes
  and applying the abstraction at the top level of the program.
Such restrictions make the library learning problem more computationally tractable,
  but limit the expressiveness of the learned abstractions.

There are several neural program synthesis tools%
  ~\cite{polozov, srini, learner, gredilla}
  that learn \emph{programming idioms}
  using statistical techniques.
  % learn programming idioms that make synthesis more efficient.
Some of these tools have used
  ``explore-compress'' algorithms~\cite{learner} to
  iteratively enumerate a set of
  programs from a grammar and find a solution
  that exposes abstractions that make the set of
  programs maximally compressible.
This is similar to
  common subexpression elimination which \tool uses
  for guiding extraction.
%\cite{gredilla} have used a similar approach for
%  helping robots learn to complete various tasks by
%  selecting from a library of abstractions.
%A recent work from us combines this
%  explore-compress algorithm with \egraphs
%  for co-optimizing designs and fabrication instructions
%  for carpentry~\cite{cc2}
%  although it does not focus on synthesizing
%  new abstractions.

% \tool's library learning is based on \egraph
%   anti-unification combined with DSRs, which differs
%   different from prior approaches; our
%   results show that it can find better compressions
%   faster than the state-of-the-art~\cite{dreamcoder}.

\mypara{Loop rerolling}
Loop rerolling is related to library learning in that it also aims to discover hidden structure in a program,
except that this structure is in the form of loops.
A variety of domains have used loop rerolling
  to infer abstractions from flat input programs.
In hardware, loop rerolling is
  used to optimize programs
  for code size~\cite{hardware1, hardware2, hardware3}.
In many of these tools,
  the compiler first unrolls a loop,
  applies optimizations, then rerolls it --- the compiler
  therefore has structural information about the loop
  that can be used for rerolling~\cite{hardware1}.
The graphics domain has used
  loop-rerolling to discover
  latent structure from low-level representations.
CSGNet~\cite{csgnet} and Shape2Prog~\cite{shape2prog} used
  neural program generators to discover for loops from
  pixel- and voxel-based input representations.
\cite{handdrawn} used program synthesis
  and machine learning
  to infer loops from hand-drawn images.
Szalinski~\cite{szalinski} used
  equality saturation to
  automatically learn loops in the form
  of maps and folds
  from flat 3D CAD programs that are
  synthesized by mesh decompilation tools~\cite{reincarnate}.
WebRobot~\cite{webrobot} has used speculative
  rewriting for inferring loops from traces
  of web interactions.
% WebRobot is similar to Szalinski in that they both
%   use the notion of speculative rewrites.
Similar to \tool (and unlike Szalinski),
  WebRobot finds abstractions
  over \textit{multiple} input traces.
% Compared to these tools however,
%   \tool's approach is more general --- it is not
%   restricted to finding only loops or limited
%   types of macros and does not rely on
%   any information
%   about how the original
%   ``flat'' programs were generated.

\mypara{Applications of Anti-Unification}
Anti-unification is a well-established technique for discovering common structure in programs.
It is the core idea behind bottom-up Inductive Logic Programming~\cite{ilp},
and has also been used for software clone detection~\cite{Bulychev10},
programming by example~\cite{raza2014programming},
and learning repetitive code edits~\cite{lase,refaser}.
It is possible that these applications could also benefit from \tool's notion of anti-unification over e-graphs
to make them more robust to semantics-preserving transformations.

\mypara{Synthesis and Optimization using \Egraphs}
While traditionally \egraphs have been used
  in SMT solvers for
  facilatiting communication between
  different theories,
  several tools have demonstrated
  their use for optimization
  and synthesis.
\citet{tate2009equality} first used \egraphs for
  equality saturation: a rewrite-driven technique
  for optimizing Java programs with loops.
Since then, several tools have used equality saturation
  for finding programs equivalent to,
  but better than, some
  input program~\cite{willsey2021egg, herbie, tensat, szalinski, diospyros, spores,
  cc1}.
\tool uses an anti-unification
  algorithm on \egraphs
  (together with domain specific
  rewrites), which prior work has not shown.
Additionally, prior work has either used
  greedy or ILP-based extraction strategies,
  whereas \tool uses a new targeted common
  subexpression elimination approach which
  we believe can be used in many other applications of
  equality saturation, especially given its
  amenability to approximation via beam search.

\section{Conclusion and Future Work}

We presented library learning modulo theory (LLMT),
 a technique for learning abstractions from a corpus of programs
 modulo a user-provided equational theory.
We implemented LLMT in \tool.
Our evaluation showed that \tool
 achieves better compression orders of magnitude faster
 than the state of the art.
On a larger benchmark suite of \cogsci programs,
 \tool learns sensible functions that
 compress a dataset that was---until now---too large for library learning techniques.

LLMT and \tool present many avenues for future work.
First,
 our evaluation showed that equational theories are important
 for achieving high compression,
 but these must be provided by domain experts.
Recent work in automated theory synthesis like
 Ruler~\cite{ruler} or \tname{TheSy}~\cite{thesy}
 could aid the user in this task.
Second,
 LLMT uses e-graph anti-unification to generate
 promising abstraction candidates, 
 but this approach is incomplete and misses some patterns
 that could achieve better compression.
% Our prototype indicated that 
%  a complete approach based on traversing the
%  pattern generalization lattice 
%  was prohibitively slow;
%  future work could identify a subset of patterns
%  that does not sacrifice optimality 
%  but can still be enumerated efficiently. 
An exciting direction for future work is to combine LLMT 
with more efficient top-down search from \stitch~\cite{stitch}.
This is challenging because \stitch crucially relies on the ability 
to quickly compute an upper bound on the compression of a given pattern 
by summing up the local compression at each of its matches in the corpus.
This upper bound does not straightforwardly extend to e-graphs
because in an e-graph different matches of a pattern may come from different syntactic variants of the corpus,
and one needs to trade-off the compression from abstractions
against the size difference between different syntactic variants.

\begin{acks}
  We are grateful to the anonymous reviewers for their insightful comments.
  We would like to thank Matthew Bowers for many helpful discussions
  and especially for publishing the \dc compression benchmark:
  we know it was a lot of work to assemble!
  This work has been supported by the National Science Foundation under Grants No. 1911149 and 1943623.
\end{acks}

\bibliography{citations.bib}

% \iflong
\clearpage
\appendix
\section{Proofs}\label{app:proofs}

\begin{proposition}
  The language of compressed terms is strongly normalizing.
\end{proposition}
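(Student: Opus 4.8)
The plan is to exploit the one structural peculiarity of $\absof{\Sigma}{\mathcal{X}}$ that makes it tame despite being untyped: in the application form $(\lambda \many{X} \bnd \hat{t}_0)\ \many{\hat{t}}$ the head is always a \emph{literal} $\lambda$-abstraction, and the grammar provides no production placing a variable in head position (there is no $X\ \hat{t}_1\cdots$). Hence every application node is already a redex, and—this is the crux—capture-avoiding substitution can never \emph{create} a redex: replacing the free occurrences of $X_i$ in $\hat{t}_0$ by $\hat{t}_i$ only drops $\hat{t}_i$ into leaf or argument positions, never into the head of an application, so no fresh (abstraction-adjacent-to-arguments) pair is ever formed. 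First I would state and prove this ``no redex creation'' lemma by a routine induction on the structure of $\hat{t}_0$, simultaneously recording that reduction of a subterm-redex is compatible with every term former (so contracting a redex inside a context behaves as expected).

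Given no creation, I would finish along one of two routes. \textbf{Route A (primary).} Every redex appearing at any point of a reduction sequence is a \emph{residual} of a redex already present in the starting term; consequently any reduction out of $\hat{t}$ is a development of the finite set of all initial redexes, and strong normalization follows immediately from the Finite Developments theorem. \textbf{Route B (self-contained).} Exhibit an interpretation $\|\cdot\|\colon \absof{\Sigma}{\mathcal{X}} \to \mathbb{N}$ that is monotone under every term former and strictly decreasing under a single step
\[
  (\lambda \many{X} \bnd \hat{t}_0)\ \many{\hat{t}} \;\betared\; \sapp{[\many{X \mapsto \hat{t}}]}{\hat{t}_0},
\]
so that, together with the compatibility fact above, contracting any redex strictly decreases $\|\cdot\|$ of the whole term; well-foundedness of $(\mathbb{N},>)$ then gives strong normalization.

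The main obstacle is \emph{duplication}: a single $\betared$ step copies each argument $\hat{t}_i$ once per free occurrence of $X_i$ in the body $\hat{t}_0$, so an additive size measure such as $\sz$ can strictly \emph{increase}, and even a naive redex count can grow. This is exactly why I favor Route A, where duplication is harmless because copied redexes are still residuals rather than newly created ones. If a fully elementary argument is preferred, Route B forces the measure to dominate an unbounded number of copies, which is the classical cue to use \emph{multiplicative} weights at application nodes (in the style of Gandy/de~Vrijer): setting $\|X\|=2$ and $\|s(\many{\hat t})\| = 1 + \sum_i \|\hat t_i\|$, and defining the weight of an application so that the body weight is multiplied by the argument weights, makes each extra copy of an argument already ``paid for'' inside the product; the verification then reduces to checking the single inequality $\|(\lambda \many{X} \bnd \hat{t}_0)\ \many{\hat{t}}\| > \|\sapp{\sigma}{\hat{t}_0}\|$ via the substitution formula for $\|\cdot\|$. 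I expect that last inequality to be the only genuinely computational step, and I would present Route A as the clean proof with Route B sketched as an alternative.
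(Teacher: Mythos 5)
Your proposal is correct, and it takes a genuinely different route from the paper. The paper's proof fixes the applicative (leftmost-innermost) strategy and observes that, under that strategy, both the body and the arguments of the contracted redex are already normal and hence $\lambda$-free, so the total number of $\lambda$-abstractions in the term strictly decreases at every step; this is a very short argument, but it only directly establishes termination of that one strategy (which is all the paper actually needs in order to define $\evals$). You instead prove the stated property in full generality: you isolate the structural reason the calculus is tame---every application node has a literal $\lambda$ in head position and variables can never occupy head position, so substitution cannot create redexes---and then conclude via Finite Developments (every reduction sequence is a development of the initial redex set), with a Gandy/de~Vrijer-style multiplicative weight as a self-contained fallback. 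Your diagnosis of why the naive additive measure fails under arbitrary reduction order (argument duplication can increase both size and $\lambda$-count) is exactly the obstruction that the paper sidesteps by committing to applicative order, where arguments are already $\lambda$-free. What each approach buys: the paper's argument is elementary and two lines long but strategy-specific; yours is slightly heavier (it leans on FD, or on working out the substitution inequality for the multiplicative weight, which you correctly flag as the one computational step left to check) but actually delivers strong normalization for arbitrary reduction order, which is what the proposition literally claims. Both are sound for the role the proposition plays in the paper.
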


\begin{proof}
  Consider evaluating $\hat{t}$ in applicative order (leftmost innermost).
  In this case, any left-hand size of a $\beta$-reduction has no inner $\beta$-redexes.
  Since in our language a $\lambda$-abstraction can only be on the left-hand side of a $\beta$-redex,
  it means that expression being reduced has no inner $\lambda$-abstractions.
  For that reason, the number of $\lambda$-abstractions in an expression strictly decreases with every $\beta$-step
  (the sole one in the reduced redex disappears, and all other $\lambda$-abstractions are outside of the reduced redex, and hence unchanged).
\end{proof}

\begin{lemma}
Given a $\kappa$-step $\hat{t}_1 \rwsteps{\kappa(p')} \hat{t}_2$
  if a pattern $p$ has $N$ matches in $\hat{t}_2$, it also has at least $N$ matches in $\hat{t}_1$.
  \label{lem:pat-match}
\end{lemma}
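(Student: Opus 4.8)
The plan is to unfold the single $\kappa$-step and then exhibit an injection from the matches of $p$ in $\hat{t}_2$ into the matches of $p$ in $\hat{t}_1$. By the definition of one-step rewriting there is a context $\ctx$ and a redex $s \sub p'$ with $\hat{t}_1 = \ctx[s]$ and $\hat{t}_2 = \ctx[\sapp{\kappa(p')}{s}]$. Writing $\tau = \match{s}{p'}$ and $\many{X} = \fv(p')$, we have $s = \sapp{\tau}{p'}$ and $\sapp{\kappa(p')}{s} = (\lambda \many{X} \bnd p')\ \sapp{\tau}{X_1}\,\cdots\,\sapp{\tau}{X_n}$, since only the free copies of the $X_i$ on the right-hand side of $\kappa(p')$ are substituted. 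It is worth recording up front that this reduct $\beta$-reduces back to the redex, i.e. $\hat{t}_2 \betared \hat{t}_1$; this is the intuition for why matches cannot increase.

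Next I would treat a match of $p$ in a term as a position $\pi$ whose subterm satisfies $\hat{t}|_{\pi} \sub p$, and build the injection $\phi$ by case analysis on the root position of a match in $\hat{t}_2$ relative to the hole position $\rho$. Because $p$ is a candidate pattern it carries a $\Sigma$-constructor at its root, so no match can be rooted at the freshly introduced application node or $\lambda$-node. This leaves three classes of matches in $\hat{t}_2$: (i) those at positions incomparable to $\rho$, whose subterm is literally unchanged from $\hat{t}_1$, mapped to the same position; (ii) those rooted inside the abstraction body $p'$, which are subpatterns $q$ of $p'$ with $q \sub p$, mapped to the corresponding skeleton position of $s$ — valid because $\sapp{\tau}{q} = \sapp{\tau}{\sapp{\sigma}{p}} = \sapp{(\tau\circ\sigma)}{p} \sub p$; and (iii) those rooted inside an argument $\sapp{\tau}{X_i}$, mapped to the first occurrence of $\tau(X_i)$ within $s$. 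I would then verify that the three classes target pairwise-disjoint regions of $\hat{t}_1$ (context positions incomparable to $\rho$, skeleton positions of $s$, and variable-substitution positions inside $s$) and that $\phi$ is injective within each class, which gives $|\mathsf{matches}(p,\hat{t}_2)| \le |\mathsf{matches}(p,\hat{t}_1)|$ as required.

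The hard part will be controlling the interaction between the newly created $\lambda$/application nodes and any match that would otherwise straddle the redex. The delicate sub-case is a non-linear $p$ whose repeated variable would need to be bound to the fresh abstraction-application sitting at $\rho$: I must argue that no match of $p$ in $\hat{t}_2$ can cover the node at $\rho$ at all, so that the ancestor-of-$\rho$ positions contribute nothing. This hinges on the precise reading of \emph{match} — specifically that a pattern variable is only ever bound to a term headed by a $\Sigma$-constructor, never to a $\lambda$-application — which makes the node at $\rho$ un-coverable and eliminates the only problematic class; note that $\hat{t}_1$ may well contain extra such straddling matches, but that only strengthens the inequality. Once matches covering $\rho$ are ruled out in $\hat{t}_2$, the rest is the routine disjointness and within-class injectivity bookkeeping sketched above, driven by the substitution-composition identity $q \sub p \Rightarrow \sapp{\tau}{q} \sub p$.
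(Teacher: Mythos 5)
Your proof follows essentially the same route as the paper's: both proceed by case analysis on where a match of $p$ in $\hat{t}_2$ sits relative to the freshly introduced redex (untouched by the rewrite, inside the abstraction body, inside an actual argument, or straddling the redex), and your cases (i)--(iii) line up exactly with the paper's first three cases, with the same substitution-composition argument $\sapp{\tau}{\sapp{\sigma}{p}} \sub p$ for matches inside the body. The one genuine divergence is the straddling case: the paper keeps such matches and argues they are preserved because $p$, being first order, cannot mention the redex, so the redex must sit entirely inside a variable's image and the match carries over to $\hat{t}_1$; you instead rule such matches out of $\hat{t}_2$ altogether by insisting that a pattern variable only binds $\Sigma$-headed terms. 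Both readings close the case, but yours is arguably the safer one: under the paper's liberal reading, a non-linear $p$ with a repeated variable whose image contains the new redex need not correspond to a match in $\hat{t}_1$ (the other occurrence of that variable may still read the unrewritten subterm there, breaking syntactic equality of the two images), which is exactly the delicate sub-case you flag; the paper's ``the match is unaffected'' glosses over this. Your additional injectivity and disjointness bookkeeping is also more explicit than the paper's, which only argues preservation match-by-match. The proposal is correct as sketched.
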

\begin{proof}
  There are four cases for what $p$ can match in $\hat{t}_2$:
  \begin{itemize}
  \item a subterm that does not overlap with the newly introduced $\beta$-redex: in this case, the match is unaffected by the $\kappa$-step;
  \item a subterm inside an actual argument of the $\beta$-redex: this actual argument becomes a sub-term of $\hat{t}_1$;
  \item a sub-term inside the body of the $\beta$-redex: $\hat{t}_1$ has a sub-term that is more specific than the body, and hence still matches $p$;
  \item a sub-term that includes the entire $\beta$-redex inside: since $p$ is first order, it cannot mention the redex, so the redex must be contained entirely inside the substitution; 
  hence again the match is unaffected by the $\kappa$-step.
  \end{itemize}
\end{proof}

\begin{theorem}[Soundness and Completeness of Pattern-Based Library Learning]
  For any term $t \in \termsof{\Sigma}$ and compressed term $\hat{t} \in \absof{\Sigma}{\mathcal{X}}$:
  \begin{description}[font=\itshape]
    \item[(Soundness)] If $t$ compresses into $\hat{t}$, then $\hat{t}$ evaluates to $t$: $\forall \patset . t \rewrites{\kappa(\patset)} \hat{t} \Longrightarrow \hat{t} \evals t$.
    \item[(Completeness)] If $\hat{t}$ is a solution to the (global) library learning problem, 
    then $t$ compresses into $\hat{t}$ using only patterns that have a match in $t$:
    $\hat{t} \in \argmin_{\hat{t}' \evals t} \sz(\hat{t}') \Longrightarrow t \rewrites{\kappa(\patset)} \hat{t}$, 
    where $\patset = \{p \in \patsof{\Sigma}{\mathcal{X}} \mid t' \in \subterms(t) , t' \sub p\}$.
  \end{description}
\label{thm:pat-lib-learning-app}
\end{theorem}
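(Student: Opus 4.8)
The plan is to prove the two directions separately, in both cases using that a single $\kappa$-step and a single $\beta$-step are mutually inverse.

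For \emph{soundness} I would first record the local fact that every $\kappa$-step is undone by one $\beta$-step. If $u \rwsteps{\kappa(p)} u'$ at a context $\ctx$, the rule replaces a redex $\sapp{\sigma}{p}$ by the application $(\lambda \fv(p) \bnd p)\ \many{\sigma(X)}$ (with $X$ ranging over $\fv(p)$), and $\beta$-reducing this application at the same position gives back $\sapp{\sigma}{p}$, so $u' \betared u$. Applying this to every step of a compression $t = v_0 \rwsteps{} \cdots \rwsteps{} v_n = \hat{t}$ yields, read in reverse, a $\beta$-reduction from $\hat{t}$ down to $t$. Since $t \in \termsof{\Sigma}$ contains no $\lambda$-abstractions, it is a $\beta$-normal form; by the strong-normalization proposition together with confluence of this first-order $\lambda$-calculus the normal form of $\hat{t}$ is unique, so applicative-order evaluation reaches it as well, i.e. $\hat{t} \evals t$. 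This direction uses no property of $\patset$.

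For \emph{completeness} I would run the same correspondence backwards, turning the evaluation $\hat{t} \evals t$ of an \emph{optimal} $\hat{t}$ into a compression of $t$. Write the applicative evaluation as $\hat{t} = s_0 \betared \cdots \betared s_m = t$. Because evaluation is innermost, each reduced redex has the form $(\lambda \many{X_j} \bnd b_j)\ \many{a_j}$ with $b_j$ and $\many{a_j}$ redex-free, so in particular each body $b_j$ is an ordinary lambda-free pattern. Reversing step $j$ amounts to applying $\kappa(b_j)$ to the contracted subterm $\sapp{\sigma_j}{b_j}$, which it matches under $\sigma_j = [\many{X_j \mapsto a_j}]$; this reproduces the original redex exactly provided the abstraction binds precisely the free variables of its body, i.e. $\fv(b_j) = \many{X_j}$. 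Reversing all steps then gives a compression $t \rewrites{\kappa(\patset')} \hat{t}$ with $\patset' = \{b_1,\ldots,b_m\}$. To see $\patset' \subseteq \patset$, note that the $\kappa$-step using $b_j$ requires a subterm matching $b_j$ in the term it is applied to; by Lemma~\ref{lem:pat-match} the number of matches of a fixed pattern can only decrease as compression proceeds from $t$, so $b_j$ must already match some subterm of $t$, giving $b_j \in \patset$.

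The crux of the completeness argument is justifying that the reversal reconstructs \emph{exactly} $\hat{t}$, which hinges on $\fv(b_j) = \many{X_j}$ (no vacuous binders). I would discharge this from optimality: if some bound $X_i$ did not occur in $b_j$, then deleting that binder together with its argument in every application of this abstraction leaves the evaluation result unchanged (the argument is never substituted) while strictly decreasing $\sz$, contradicting $\hat{t} \in \argmin_{\hat{t}' \evals t} \sz(\hat{t}')$. Hence, without loss of generality, every abstraction in an optimal $\hat{t}$ binds exactly the free variables of its body, the reversal returns $\hat{t}$ verbatim, and completeness follows. The remaining bookkeeping around capture-avoiding substitution is benign, since each $b_j$ is lambda-free and the innermost discipline keeps reduced bodies binder-free.
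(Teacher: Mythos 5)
Your proof follows essentially the same route as the paper's: soundness by observing that each $\kappa$-step is inverted by a $\beta$-step (plus uniqueness of normal forms, which the paper has already established via strong normalization), and completeness by reversing the applicative-order evaluation of an optimal $\hat{t}$, using innermost reduction to guarantee that each contracted body is a $\lambda$-free pattern, optimality to rule out vacuous binders, and Lemma~\ref{lem:pat-match} to push each pattern's match back to $t$. All of these ingredients match the paper's argument.

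There is one case you do not cover. For the reversal of a $\beta$-step to be a legitimate $\kappa$-step you need $\fv(b_j) = \many{X_j}$ exactly, but you only discharge the inclusion $\many{X_j} \subseteq \fv(b_j)$ (no vacuous binders, via optimality). The other failure mode, $\fv(b_j) \supsetneq \many{X_j}$, i.e.\ an abstraction whose body mentions a variable bound by an \emph{outer} lambda (as in $(\lambda Y \bnd (\lambda X \bnd X + Y)\ 1)\ 2$), cannot be eliminated by a size argument: here $\kappa(b_j)$ would produce an abstraction over \emph{all} of $\fv(b_j)$ and the reversal would not reconstruct $\hat{t}$. The paper excludes this case by appealing to the restriction to \emph{global} library learning, under which every $\lambda$-abstraction in $\hat{t}$ is a closed term; you should invoke that hypothesis explicitly rather than asserting the equality of variable sets from optimality alone.
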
 

\begin{proof}
The soundness is trivial because inverting any $\kappa$-rewrite gives a valid $\beta$-reduction.

The other direction (completeness) is more involved.
First let us prove that $\hat{t}$ can be obtain by any compression, regardless of whether the patterns occur in $t$.
This is non-trivial
because inverting a $\beta$-reduction $(\lambda \many{X} \bnd \hat{t}_1)\ \many{\hat{t}_2}  \betared  \hat{t}'$
does not always correspond to a well-form $\kappa$-step,
for two reasons:
(a) $\hat{t}_1$ itself contains $\lambda$-abstractions (and hence does not correspond to a pattern);
(b) $\many{X} \neq \fv(\hat{t}_1)$.
To overcome point (a), note that in an applicative evaluation,
both $\hat{t}_1$ and $\hat{t}_2$ contain no $\beta$-redexes (and hence no $\lambda$-abstractions).
Hence  $\hat{t}_1$ is a valid pattern.

For point (b), there are two cases: either (1) $\fv(\hat{t}_1) \subsetneq \many{X}$
or (2) $\fv(\hat{t}_1) \supsetneq \many{X}$.
Consider case (1).
In this case, the compressed term $\hat{t}$ has $\lambda$-bindings that are not used in their bodies.
Such a term cannot possibly be minimal.
% because adding unused bindings does not change the body of the lambda, 
% it cannot create more sharing, only less.
%
Removing unused bindings makes each individual $\beta$-redex smaller and cannot remove sharing,
so it always makes the overall term smaller as well.

Now consider case (2).
This means that $\lambda \many{X} \bnd \hat{t}_1$ has free variables, 
which must be defined in outer $\lambda$-abstractions.
However, using such a $\lambda$-abstraction contradicts the assumption that we are only interested in global library learning.

Finally, let us prove that all patterns used in the compression have a match in $t$.
Consider a compression $t = \hat{t}_0 \rwsteps{\kappa(p_1)} \hat{t}_1 \dots \rwsteps{\kappa(p_n)}  \hat{t}_n = \hat{t}$.
Clearly, each pattern $p_i$ has a match in the (compressed) term $\hat{t}_{i-1}$, by definition of the $\kappa$-rewrite.
Hence, it also occurs in $t$, by induction using Lemma~\ref{lem:pat-match}.
\end{proof}

\newpage
\section{Additional Experiments}\label{app:eval}

\begin{figure}[h]
  \centering
  \begin{subfigure}{0.48\linewidth}
    \includegraphics[width=\linewidth]{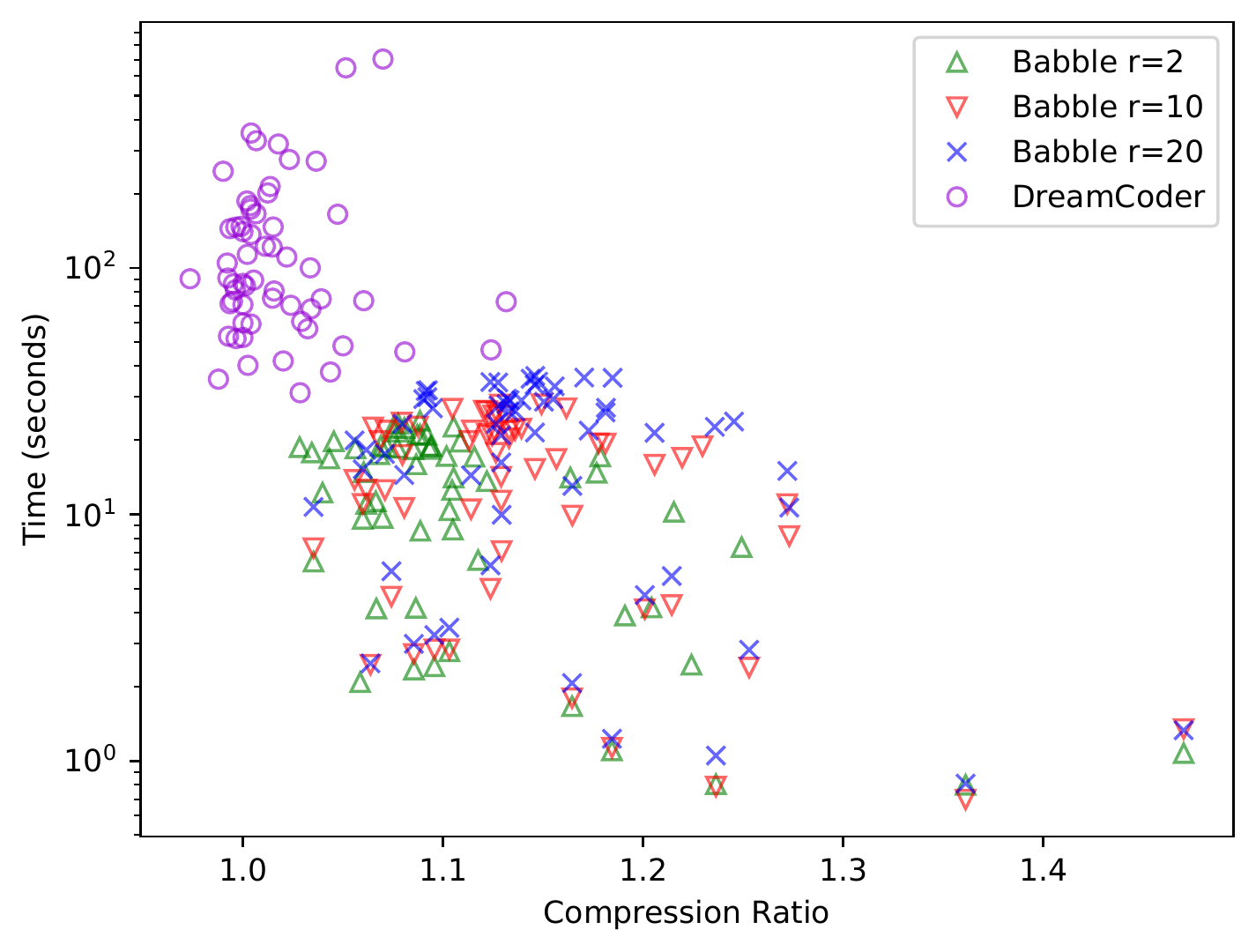}
    \caption{List domain}
  \end{subfigure}
  \hfill
  \begin{subfigure}{0.48\linewidth}
    \includegraphics[width=\linewidth]{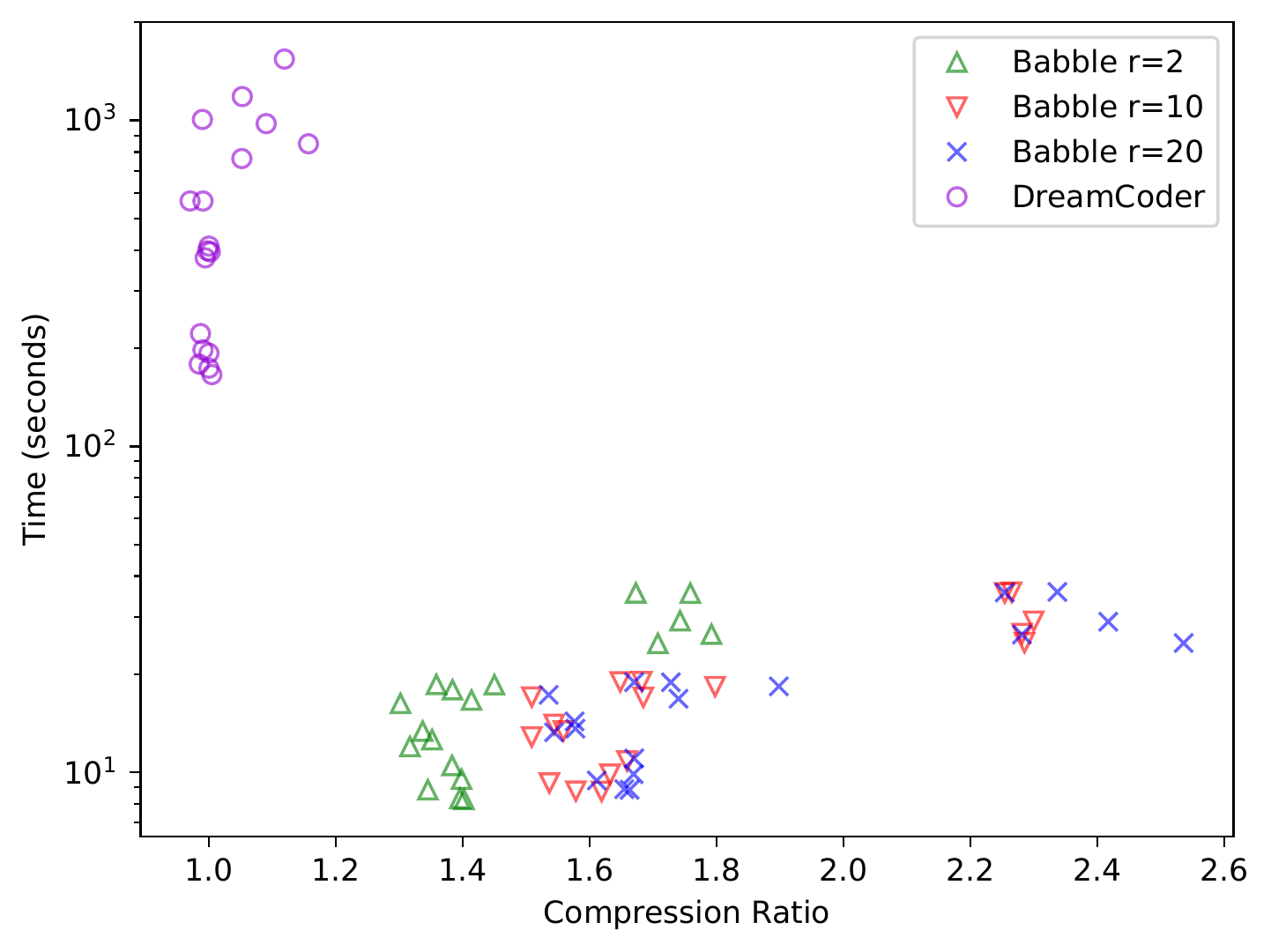}
    \caption{Physics domain}
  \end{subfigure}

  \begin{subfigure}{0.48\linewidth}
    \includegraphics[width=\linewidth]{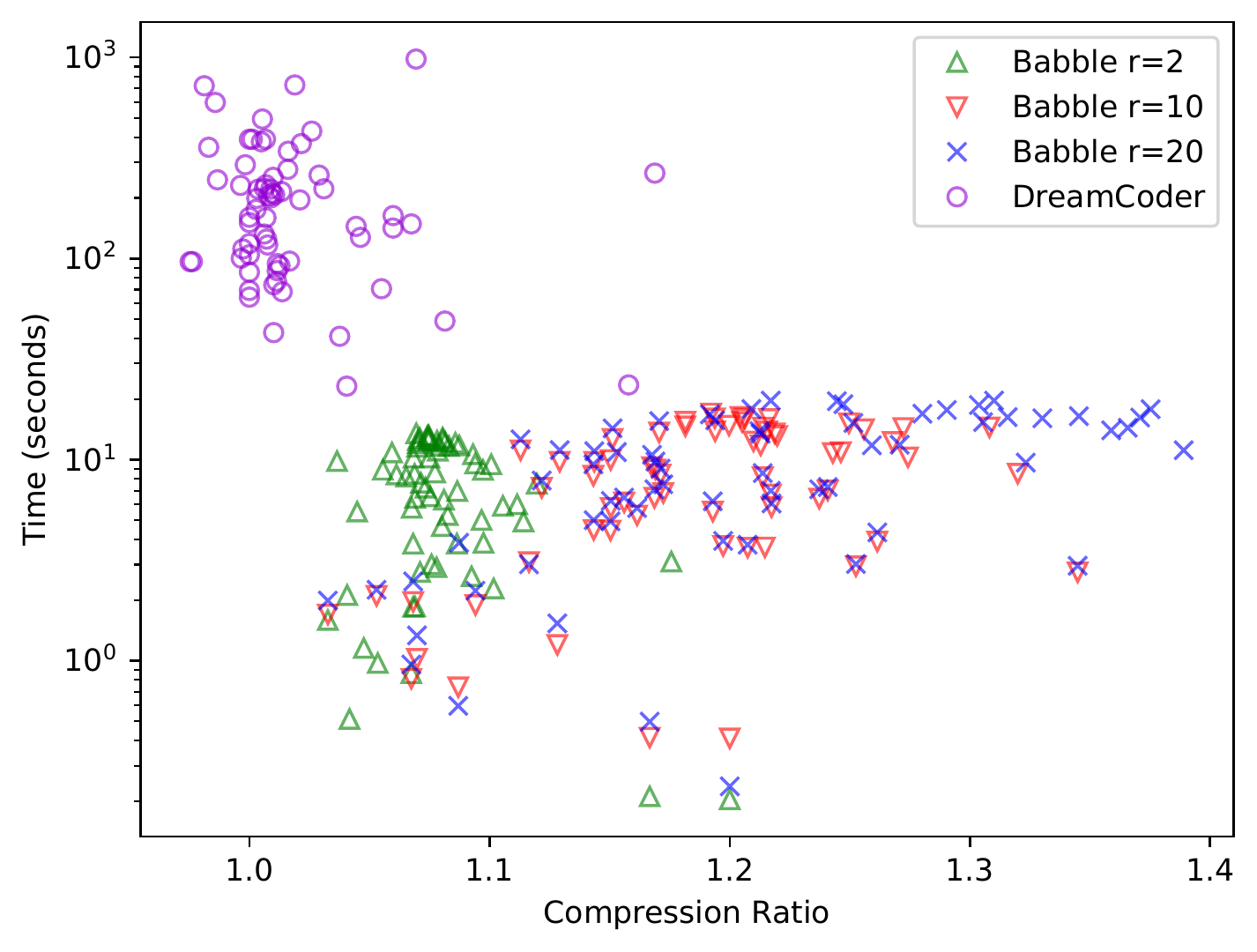}
    \caption{Text domain}
  \end{subfigure}
  \hfill
  \begin{subfigure}{0.48\linewidth}
    \includegraphics[width=\linewidth]{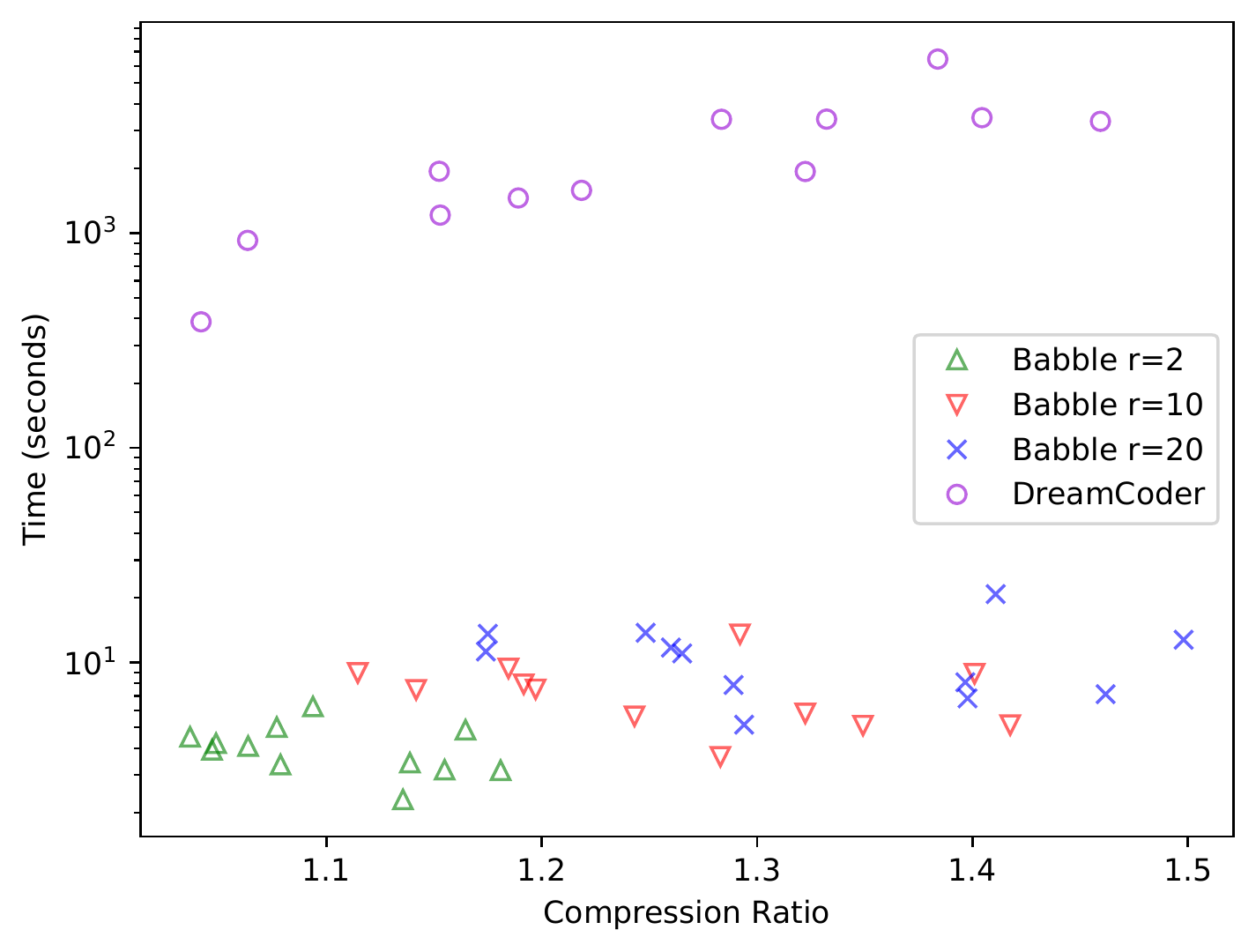}
    \caption{Logo domain}
  \end{subfigure}

  \begin{subfigure}{0.48\linewidth}
    \includegraphics[width=\linewidth]{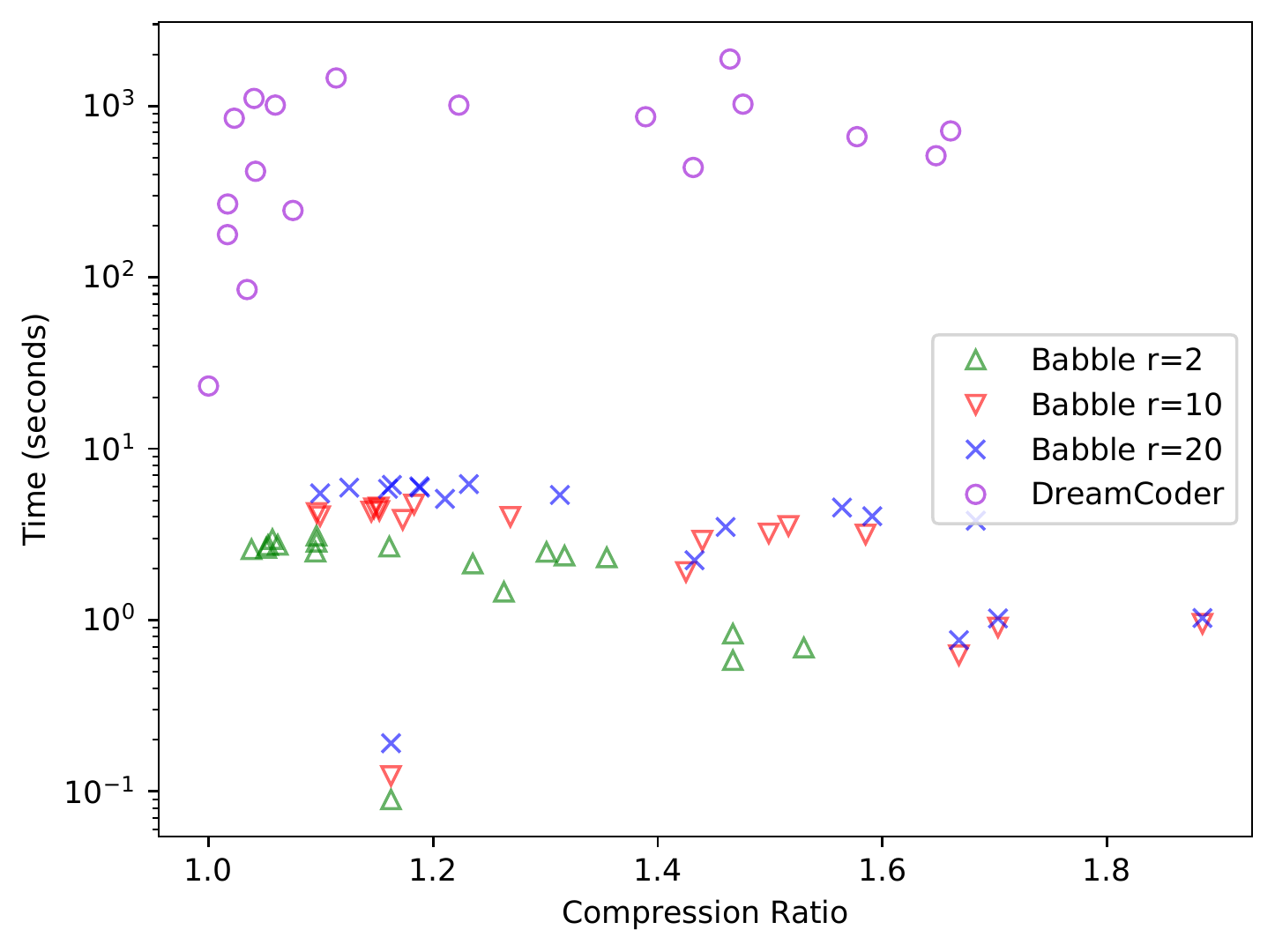}
    \caption{Towers domain}
  \end{subfigure}
  \caption{
    \tool's default configuration runs for 20 rounds (r=20)
    to learn additional library functions.
    This plot shows the data from \autoref{fig:dc-domains},
     but with additional \tool configuration running fewer rounds.
    With fewer rounds, \tool runs faster but compresses worse.
  }\label{fig:dc-domains-rounds}
\end{figure}
% \fi

\end{document}